\def\RR{{\bf R}}
\def\ZZ{{\bf Z}}
\def\mGamma{\mathit{\Gamma}}
\def\mDelta{\mathit{\Delta}}
\def\mSigma{\mathit{\Sigma}}
\numberwithin{equation}{section}
\newtheorem{Thm}{Theorem}[section]
\newtheorem{Prop}[Thm]{Proposition}
\newtheorem{Lem}[Thm]{Lemma}
\theoremstyle{definition}
\newcommand{\opt}{\mathop{\rm opt} }
\title{A dual descent algorithm   for \\ 
	node-capacitated multiflow problems 
	and its applications}
\author{Hiroshi HIRAI \\
Department of Mathematical Informatics, \\
Graduate School of Information Science and Technology,   \\
University of Tokyo, Tokyo, 113-8656, Japan.\\
\texttt{\normalsize hirai@mist.i.u-tokyo.ac.jp}}
\begin{document}

\maketitle

\begin{abstract}
	In this paper,  we develop an $O((m \log k) {\rm MSF} (n,m,1))$-time algorithm to find 
	a half-integral node-capacitated multiflow of the maximum total flow-value 
	in a network with $n$ nodes, $m$ edges, and $k$ terminals,  
	where  ${\rm MSF} (n',m',\gamma)$ denotes 
	the time complexity of solving the maximum submodular flow problem  
	in a network with $n'$ nodes, $m'$ edges, and the complexity $\gamma$ of 
	computing the exchange capacity of the submodular function describing the problem.
	By using Fujishige-Zhang algorithm for submodular flow, 
	we can find a maximum half-integral multiflow in $O(m n^3 \log k)$ time.
	This is the first combinatorial strongly polynomial time algorithm for this problem.
	Our algorithm is built on  a developing theory of discrete convex functions on certain graph structures.
	Applications include ``ellipsoid-free" combinatorial implementations 
	of a 2-approximation algorithm for the minimum node-multiway cut problem by Garg, Vazirani, and Yannakakis.
\end{abstract}

Keywords: Node-capacitated multiflow, discrete convex analysis, submodular flow, node-multiway cut

\section{Introduction}

A node-capacitated undirected network is 
a quadruple $N = (V,E,S, c)$ of node set $V$, (undirected) edge set $E$,
a specified subset $S$ of nodes, called {\em terminals}, and 
a nonnegative integer-valued node capacity $c: V \setminus S \to \ZZ_+$ on nonterminal nodes. 
An {\em $S$-path} is a path connecting distinct terminals.
A (node-capacitated) {\em multiflow} is a pair 
$({\cal P},\lambda)$ of a set ${\cal P}$ of $S$-paths and a flow-value function $\lambda: {\cal P} \to \RR_+$
satisfying the node-capacity constraint:
\begin{equation}\label{eqn:node_cap}
	\sum_{P \in {\cal P}:\  i \in V(P)} \lambda(P) \leq c(i) \quad (i \in V \setminus S). 
\end{equation}
The total flow-value of a multiflow $f = ({\cal P}, \lambda)$ 
is defined as $\sum_{P \in {\cal P}}\lambda(P)$.
A multiflow is called {\em maximum} 
if it has the maximum total flow-value among all possible multiflows.
A multiflow $f = ({\cal P},\lambda)$
is said to be {\em integral} if $\lambda$ is integer-valued, and {\em half-integral} if $2 \lambda$ is integer-valued.

In this paper, we address the problem of finding a maximum multiflow in a node-capacitated network.
This multiflow problem appeared in the work by Garg, Vazirani, and Yannakakis~\cite{GVY04} on
an approximation algorithm for node-multiway cut. 
In fact, the LP-dual of our multiflow problem is a natural LP-relaxation of 
the {\em minimum node-multiway cut problem}; see also \cite[Section 19.3]{Vazirani}.
They showed that this LP-dual always has a half-integral optimal solution.
The half-integrality of the primal problem, i.e.,
the existence of a half-integral maximum multiflow, 
was later shown by Pap~\cite{Pap07STOC, Pap08EGRES}.
He also showed that a half-integral maximum multiflow can be found 
in strongly polynomial time.
%

In these works, 
the polynomial time solvability
depends on the use of the ellipsoid method.
Thus it is natural to seek a combinatorial polynomial time algorithm.
For the case of unit node-capacity ($c(i) = 1$ for all $i \in V \setminus S$), 
Babenko~\cite{Babenko10} developed a combinatorial 
$O(mn^2)$ time algorithm to find a half-integral maximum multiflow, 
where $n$ is the number of nodes and $m$ is the number of edges; see 
Babenko and Artamonov~\cite{BabenkoArtamonov17} 
for a further improvement.
For general node-capacity,
Babenko and Karzanov~\cite{BK08ESA} developed 
a combinatorial weakly polynomial time algorithm to find 
a half-integral maximum multiflow.
Their algorithm runs in $O({\rm MF}(n,m,C) n^2 \log^2 n \log C)$ time,
where ${\rm MF}(n,m,C)$ is the time complexity of solving 
the max-flow problem in a network with $n$ nodes, $m$ edges, and 
the maximum edge-capacity $C$.

The main result of this paper is the first combinatorial
{\em strongly} polynomial time algorithm to solve the maximum node-capacitated multiflow problem.
Our algorithm uses, as a subroutine, 
an algorithm of solving the {\em maximum submodular flow problem}; 
see \cite[Section 5.5 (c)]{FujiBook}.
Let ${\rm MSF}(n,m,\gamma)$ denote the time complexity of 
solving the maximum submodular flow problem on a network with $n$ nodes, 
$m$ edges, and the time complexity $\gamma$ of computing the exchange capacity of the submodular function describing the problem.
\begin{Thm}\label{thm:main}
	There exists an $O((m \log k) {\rm MSF}(n, m, 1))$-time algorithm to find a half-integral maximum multiflow and a half-integral optimal dual solution in a network
	of $n$ nodes, $m$ edges, and $k$ terminals.
\end{Thm}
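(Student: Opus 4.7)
The plan is to reformulate the LP-dual of the maximum multiflow problem as the minimization of a discrete convex function on a lattice attached to a graph built from the terminal set, and then to realize each descent step as a single call to a maximum submodular flow solver. Concretely, I would start from the Garg--Vazirani--Yannakakis LP-dual---assign $y_v\ge 0$ to each nonterminal $v$ so that every $S$-path $P$ satisfies $\sum_{v\in V(P)\setminus S} y_v\ge 1$, minimizing $\sum_v c(v) y_v$. Using the half-integrality theorems of \cite{GVY04, Pap07STOC, Pap08EGRES}, I would encode half-integral dual solutions as maps $\mu\colon V\to T$ from the node set into a discrete structure $T$ built out of the $k$ terminals (e.g., a tree-shaped poset with one branch per terminal and half-integer height coordinates), in which dual feasibility translates into a Lipschitz-type condition along the edges of $N$, and the objective takes the separable form $F(\mu)=\sum_{v\in V\setminus S} c(v)\,\varphi(\mu(v))$. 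The first key step is to show that $F$, restricted to the Lipschitz maps, is an L-convex-type function in the sense of the emerging discrete convex analysis on graph structures, so that a well-defined notion of steepest descent is available.

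Next I would use this discrete convexity to design the descent iteration. Given the current half-integral $\mu$, the best local move shifts the coordinates of some subset $X\subseteq V$ of nodes by a half unit in a prescribed direction of $T$, and the corresponding gain decomposes as a submodular set function on $X$ subject to edge-compatibility linking constraints inherited from $E$. This is exactly an instance of the maximum submodular flow problem on the auxiliary network whose underlying graph is that of $N$, where the exchange capacity of the describing submodular function can be evaluated in $O(1)$ time because it only depends on the local structure of $T$ around a node. Each descent step therefore costs ${\rm MSF}(n,m,1)$ time, and by LP duality together with the optimal submodular-flow certificates accumulated during the run I would finally extract a half-integral primal maximum multiflow, invoking Pap's half-integrality theorem.

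For the iteration bound I would employ a scaling and amortization argument. Using a potential measuring the weighted deviation of the current $\mu$ from an optimal $\mu^*$ at the current scale, I would show that within each scaling phase at most $O(m)$ descent steps occur before the deviation halves, and that $O(\log k)$ phases suffice since the diameter of $T$ is $O(k)$. The main obstacle I anticipate is identifying the right graph structure on $T$ and verifying both the L-convexity of $F$ and the submodularity of the per-step gain with $O(1)$-computable exchange capacities; together these are what turn an abstract discrete-convex steepest descent into a concrete submodular-flow subroutine. A second subtle point is establishing the geometric progress per scaling phase: without a sufficiently strong descent lemma one would only recover a weakly polynomial bound of the type in \cite{BK08ESA}, rather than the claimed $O(m\log k)$ iteration count.
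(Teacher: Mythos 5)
Your high-level plan---recast the half-integral dual as an L-convex minimization over tree-valued potentials and realize each steepest-descent step as one maximum submodular flow---is the right skeleton and matches the paper's strategy. But there are two concrete gaps and one significant divergence that keep the proposal from actually establishing the bound.

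First, the claim that ``the corresponding gain decomposes as a submodular set function\ldots\ exactly an instance of the maximum submodular flow problem'' skips the central technical obstacle. At a node $i$ of the multiflow network mapped to a degree-3 vertex of the tree, the flow-conservation plus node-capacity constraints on the three branch directions form a \emph{bisubmodular} polyhedron (the simplex ${\cal D}(\mDelta_{c(i)})$ with vertices $(0,0,0),(c_i,c_i,0),(c_i,0,c_i),(0,c_i,c_i)$), not a base polyhedron. A bisubmodular flow problem is not the same thing as a submodular flow problem, and general bisubmodular functions need not have submodular liftings at all. The paper's key move is to exhibit an explicit \emph{normal submodular extension} $\mDelta^*_b$ on a doubled $6$-element set (Lemma~\ref{lem:Delta*}), so that the bisubmodular polyhedron is the projection of a genuine base polyhedron, and then to build a skew-symmetric \emph{double covering network} whose submodular-flow feasibility certifies optimality and whose minimal minimum cut is a transversal encoding a steepest direction. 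Without identifying this lifting, the ``$O(1)$-computable exchange capacity'' claim has nothing to compute against; you flag this as your ``main obstacle,'' and it is exactly the part that requires a new construction rather than a routine verification.

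Second, your iteration bound argument is both different from and weaker than the paper's, and as you yourself note, it is not clear it closes. You propose scaling with $O(m)$ steps per phase and $O(\log k)$ phases; but you also acknowledge that without a strong descent lemma this collapses to a weakly polynomial bound. The paper instead proves a \emph{geodesic descent property} (Theorem~\ref{thm:bound}): the number of SDA iterations from $x$ equals $D_\infty(x,\opt(g))$ up to an additive $2$, with no scaling whatsoever. The $O(m\log k)$ then falls out because the instance must first be \emph{perturbed to a nondegenerate one}: the star with $k$ leaves is replaced by a trivalent tree of depth $O(\log k)$ with attached spokes of length $(2|E|+1)\cdot O(\log k)$, so the relevant diameter $d(\mGamma_0)$ is $O(m\log k)$, and Lemma~\ref{lem:exists} places an optimum within that range. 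Your proposal contains no nondegeneracy reduction, and the bound ``diameter of $T$ is $O(k)$'' does not match where either the $m$ or the $\log k$ factors actually come from. Also, the primal multiflow in the paper is extracted directly from a circulation in the double covering network (Algorithm~1 applied to a $(p,r)$-admissible support), not by invoking Pap's half-integrality theorem as a black box.
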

The current fastest maximum submodular flow algorithm is the push-relabel algorithm
due to Fujishige and Zhang~\cite{FZ92} of the time complexity $O(n^3 \gamma)$; see the survey \cite{FI00survey} on 
submodular flow algorithms.
Thus we can solve the problem in $O(mn^3 \log k)$ time.

%
\paragraph{Application 1: Node-multiway cut.}
A {\em node-multiway cut} is a subset $X \subseteq V \setminus S$ 
of nonterminal nodes such that the deletion of $X$ 
makes every pair of distinct terminals unreachable, 
or equivalently, $X$ meets every $S$-path.
The {\em capacity} of a node-multiway cut $X$ is defined as $\sum_{i \in X} c(i)$.
The {\em minimum node-multiway cut problem} asks to find a node-multiway cut 
with the minimum capacity.
This well-known NP-hard problem is 
naturally formulated as the following $\{ 0,1\}$-integer program:
\begin{eqnarray}
\mbox{Minimize} && \sum_{i \in V \setminus S} c(i) w(i)  \nonumber \\
\mbox{subject to } && w: V \setminus S \to \{0,1\}, \nonumber \\
  && \sum_{i \in V(P) \setminus S} w(i) \geq 1 \quad (\mbox{every $S$-path $P$}). \label{eqn:LP-dual}
\end{eqnarray}
The natural LP-relaxation obtained by relaxing
$w: V \setminus S \to \{0,1\}$ into $w: V \setminus S \to \RR_{+}$
is nothing but the LP-dual of our multiflow problem.
As mentioned above,
Garg, Vazirani, and Yannakakis~\cite{GVY04} proved 
that a half-integral optimal LP solution $w^*: V \setminus S \to \{0,1/2,1\}$ always exists, 
and is obtained from any optimal LP solution 
by a simple rounding procedure; see \cite[Section 19.3]{Vazirani}.
Then the set of nodes $i$ with $w^*(i) \geq 1/2$ is 
a 2-approximation solution of the minimum node-multiway cut problem.
This rounding algorithm needs an optimal LP solution, which is now obtained by our algorithm.
To the best of our knowledge, this is the first combinatorial strongly polynomial time 
implementation of the $2$-approximation algorithm.

Recently, 
Chakuri and Madan~\cite{ChekuriMadanSODA16} devised a simple method
to round any feasible LP solution
into a multiway cut of capacity within factor $2$.
Combining this rounding method with a fast FPTAS for multiflow~(e.g., \cite{GargKonemann07}), 
they obtain a considerably faster $(2+ \epsilon)$-approximation algorithm 
(with running time dependent on $1/\epsilon$).

\paragraph{Application 2: Integral multiflow.}
Our algorithm is also useful in the problem of finding a maximum integral multiflow.
This problem is a capacitated version of 
openly-disjoint $S$-paths packing problem 
considered by Mader~\cite{Mader7879} 
(that corresponds to the case of $c(i) = 1$ $(i \in V \setminus S)$).
Pap~\cite{Pap07STOC,Pap08EGRES} established 
the strongly polynomial time solvability of the maximum integral multiflow problem.
The first step of his algorithm is to 
find a maximum half-integral multiflow.
The second step is to construct and solve
an instance of the openly-disjoint $S$-path packing problem 
(on the graph with size polynomial in the numbers of edges and nodes in the original network).
Finally, combining the integer part of the half-integral multiflow with a solution of the packing problem,
one obtains a maximum integer multiflow.
The second step can be done by
several combinatorial polynomial time algorithms, 
including~\cite{CCG08} and \cite[Section 73.1a]{SchrijverBook}.
Our algorithm can be used in the first step, and 
makes the whole algorithm fully combinatorial.

\paragraph{Outline.}
Let us outline our algorithm and the ideas behind it, as well as the structure of the paper.
Figure~\ref{fig:outline} illustrates the outline which our argument follows. 
\begin{figure} 
	\begin{center} 
		\includegraphics[scale=0.80]{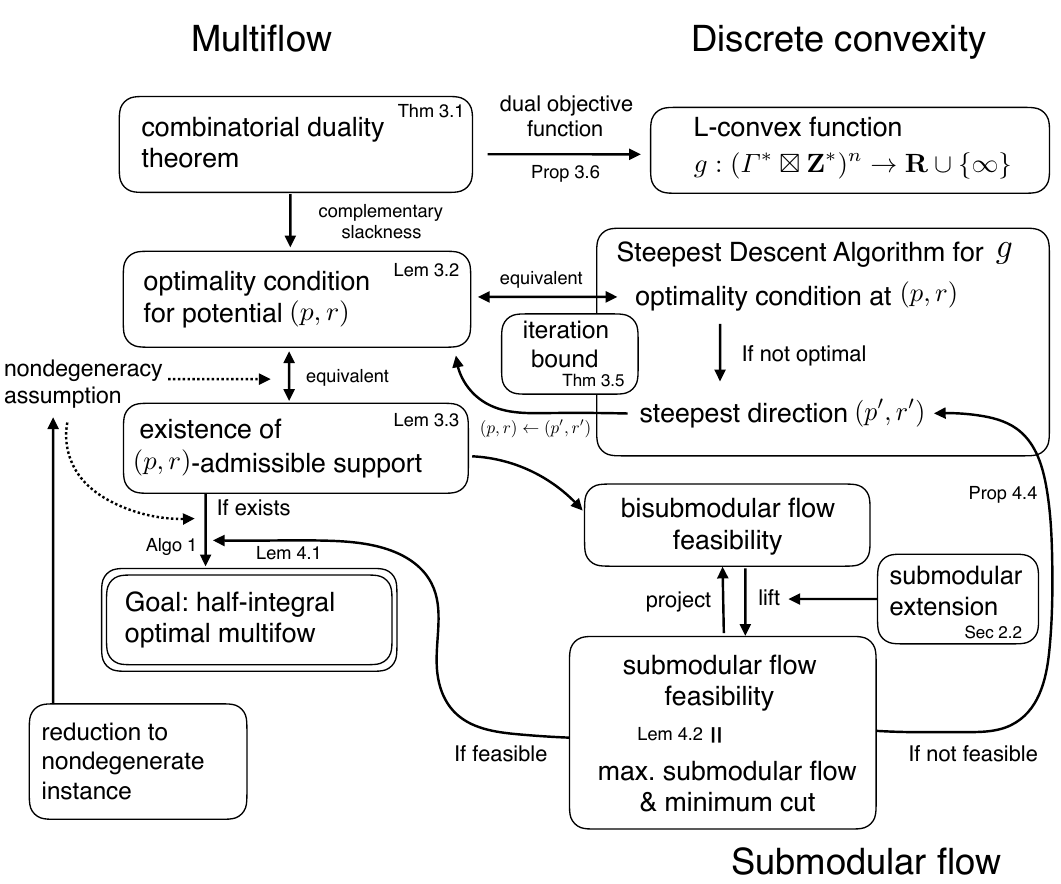}
		\caption{Outline}  
		\label{fig:outline}         
	\end{center}
\end{figure} 
Our algorithm is designed on the basis of the following  two ingredients.
One is a combinatorial duality theory for a class of node-capacitated multiflow problems~\cite{HHMPA}.
The other is a developing theory of discrete convex functions 
on certain graph structures~\cite{HH150ext,HH14extendable,HH17survey,HHprepar}, which  
aims to extend concepts in {\em Discrete Convex Analysis (DCA)}
(Murota~\cite{MurotaBook}) 
to tackle further various combinatorial optimization problems 
beyond network flows, matroids, and submodular functions.  
We will utilize these theories in a self-contained way.

In Section~\ref{sec:multiflow},  following~\cite{HHMPA}
we formulate the dual of our multiflow problem as a facility location problem on a tree. 
This formulation gives a fruitful 
combinatorial interpretation of the LP-dual problem~(\ref{eqn:LP-dual}), 
and brings a simple combinatorial algorithm to find 
a half-integral optimal multiflow from a given dual optimum, 
under a certain nondegeneracy assumption.
We will deal with a perturbed problem satisfying this nondegeneracy assumption.
Our goal is to solve this perturbed problem efficiently.
We will see that
the location problem is further formulated as
an optimization over a certain discrete structure, and
the objective is an {\em L-convex function on a Euclidean building} in the sense of~\cite{HHprepar}. 
This class of discrete convex functions shares many analogous properties with 
L-convex functions in DCA.
In particular, as in the case of DCA, 
there is a natural descent algorithm,  
called the {\em steepest descent algorithm}, 
to minimize our L-convex function $g$.
For each point $x$,  
the steepest descent algorithm 
chooses a point $y$ ({\em steepest direction}) 
from a {\em discrete neighborhood} of $x$ with smallest $g(y)$.
If $g(y) = g(x)$, then $x$ is guaranteed to be optimal.
Otherwise, i.e., $g(y) < g(x)$, replace $x$ by $y$, and repeat.

In Section~\ref{sec:algo}, 
we will implement
this conceptually simple algorithm.
We will prove that in our case a steepest direction at each point
can be found by solving one maximum submodular flow problem.
This part is the heart of our analysis.
As a consequence, we obtain an algorithm in a simple form as follows:
\begin{itemize}
	\item[1.] From a dual solution ({\em potential}) $x = (p,r)$, construct and solve an instance of 
	the maximum submodular flow problem.
	\item[2.] If the minimal minimum cut consists only of  
	the source, then $x$ is optimal, and an optimal multiflow is constructed 
	from any maximum submodular flow. Otherwise the minimal minimum cut gives a steepest direction $y$ of 
	the neighborhood at~$x$. 
	Replace $x$ by $y$, and go to 1.  
\end{itemize}
Our maximum submodular flow problem
is defined by a disjoint sum of submodular functions on $6$-element sets. 
This enables us to compute the exchange capacity in constant time.
Moreover, the number of iterations is estimated by the {\em geodesic descent property} 
(Theorem~\ref{thm:bound})
of the steepest descent algorithm. This intriguing property says that 
a trajectory of the algorithm 
forms a geodesic to optimal solutions with respect to a certain $l_{\infty}$-metric on the domain. 
We know in advance 
the range where an optimum exists, 
and the diameter of the range is bounded by $O(m \log k)$ relative to the above metric.
Consequently 
the number of iterations is bounded by $O(m \log k)$.
%

It should be noted that our algorithm design includes 
an interesting new technique 
of reducing bisubmodularity to submodularity.
This technique and related arguments, 
including basics on submodularity, are summarized in Section~\ref{sec:pre}. 
Actually step 1 of the above algorithm is essentially 
the feasibility check of a {\em bisubmodular flow} problem, that is, finding
a (fractional) bidirected flow with the flow-boundary constrained to a bisubmodular polyhedron.
This seemingly natural class of problems has not been well-studied so far. 
On the other hand, 
it is well-known that (fractional) bidirected flows are easily manipulated 
by ordinary flows in a {\em skew-symmetric network}
obtained by doubling nodes and edges.
We generalize this doubling construction to bisubmodular functions.
We give a condition for a bisubmodular function $f$ to be extended to a submodular function $f'$ on a larger set, 
so that the bisubmodular polyhedron of $f$ is a projection of the base polyhedron of $f'$. 
We show that 
a certain bisubmodular function on a 3-element set, 
which represents the flow-conservation and the node-capacity constraints on a node of degree 3, 
has such a submodular extension on a 6-element set.
Our bisubmodular flow problem is described 
by the disjoint sum of these bisubmodular functions, 
and can be reduced to the submodular flow problem as mentioned above.

The results of this paper is also outlined in an expository article \cite{HH17survey} on discrete convexity and algorithm design.

\section{Preliminaries}\label{sec:pre}

\paragraph{Notation.}
Let $\RR$, $\RR_+$, $\ZZ$, and $\ZZ_+$ 
denote the sets of reals, nonnegative reals, integers, and nonnegative integers, respectively.
The infinity element $\infty$ is treated as $x < \infty$ and $x + \infty = \infty$ for $x \in \RR$.
The set of all functions from a set $V$ to a set $R$ is denoted by $R^V$.
For a function $v \in \RR^V$ 
and a subset $X \subseteq V$, let $v(X)$ denote $\sum_{x \in X}v(x)$.
The function value $v(i)$ will also be denoted by $v_i$ if no confusion occurs.
For a (directed or undirected) graph $G =(V,E)$, an edge from $i$ to $j$ is denoted by $ij$.
For a subset $X$ of nodes, let $\delta X$ denote 
the set of all edges leaving $X$.
For an undirected graph $\varGamma$ with a specified edge-length, 
let $d = d_{\varGamma}$ denote the shortest path metric on the vertex set  
with respect to the edge-length.
In the following, graphs or networks are supposed to have no multiple edges and loops.

\paragraph{Signed set and transversal.} 
A {\em signed set} $U$ is the product $V \times \{+,-\}$ of a set $V$ and the sign $\{+, -\}$.
Elements $(i,+)$ and $(i,-)$ of $U$ are simply denoted by $i^+$ and $i^-$, respectively.
The {\em signed extension} of a set $V$ is defined as the signed set $V \times \{+,-\}$ 
and is denoted by $V^{\pm}$.
For $Y \subseteq V$, 
let $Y^+ := \{i^+ \mid i \in Y\}$ and $Y^- := \{i^- \mid i \in Y\}$.
Also for $U \subseteq V^{\pm}$, 
let $U^+ := \{i^+ \mid i^+ \in U\}$ and $U^- := \{i^- \mid i^- \in U\}$. 
A subset $X$ of $V^{\pm}$ is called a {\em transversal} if 
$|X \cap \{i^+, i^-\}| \leq 1$ for all $i \in V$, and is called 
a {\em co-transversal} if $|X \cap \{i^+, i^-\}| \geq 1$ for all $i \in V$.
For $X \subseteq V^{\pm}$
let $\underline X$ denote the transversal obtained from $X$ by deleting all $\{i^+,i^-\}$ with $\{ i^+,i^-\} \subseteq X$,
and let $\overline X$ denote the co-transversal obtained from $X$ 
by adding all $\{i^+,i^-\}$ with $\{i^+,i^-\} \cap X = \emptyset$.
For $u \in V^{\pm}$, define $\bar u$ by 
$\bar u := i^+$ if $u = i^-$ and $\bar u := i^-$ if $u = i^+$.

\paragraph{Skew-symmetric network.}
A {\em skew-symmetric network} (see e.g.,~\cite{GK96}) is a directed network on a signed set 
such that edge $uv$ exists if and only if edge $\bar v \bar u$ exists, and
the (lower and upper) capacities of edges $uv$ and $\bar v \bar u$ are the same.
A skew-symmetric network is often useful for dealing with 
problems in undirected graphs.

\subsection{Submodular flow}
Here we summarize basics on submodular functions and submodular flows; 
see \cite{FrankBook,FujiBook,FI00survey} for further details.
A {\em submodular function} on a set $V$ is a function $\rho$ defined on $2^{V}$ satisfying
$
\rho(X) + \rho(Y) \geq \rho(X \cap Y) + \rho(X \cup Y)$ for $X,Y \subseteq V$.
Let $\rho$ be  a submodular function on $V$ with $\rho(\emptyset) = 0$.
The {\em base polyhedron} ${\cal B}(\rho)$ is defined as the set of all vectors
$x \in \RR^{V}$ satisfying
$x(X) \leq \rho(X)$ for $X \subseteq V$ and  $x(V) = \rho(V)$.
For $x \in {\cal B}(\rho)$ and a pair $(i,j)$ of distinct elements of $V$, 
the {\em exchange capacity} $\kappa(x; i, j)$ at $x$ is defined by
\[
\kappa(x; i, j) := \max \{ \alpha \in \RR_+ \mid x + \alpha( \chi_i - \chi_j) \in {\cal B}(\rho) \}, 
\] 
where $\chi_i$ is the $i$-th unit vector defined by $\chi_i(j) := 1$ if $i=j$ and $\chi_i(j) := 0$ otherwise.

We next introduce submodular flows. 
Let $N$ be a directed network with vertex set $V$, edge set $A$, edge-capacity $c: A \to \RR_+$, 
and terminals $s,t \in V$.
The set of nonterminal nodes is denoted by $U (:= V \setminus \{s,t\})$.
We are given a submodular function $\rho :2^{U} \to \RR$ 
with $\rho(\emptyset) = \rho(U) = 0$.
For a function $\varphi: A \to \RR$, let $\nabla \varphi$ denote the function on $V$ defined by
\[
\nabla \varphi (i) := \sum \{  \varphi(e) \mid \mbox{$e \in A$: $e$ enters $i$}\}  
- \sum \{  \varphi(e) \mid \mbox{$e \in A$: $e$ leaves $i$}\}.
\]
Namely $\nabla \varphi(i)$ represents the excess of $\varphi$ at node $i$.
An {\em $(s, t)$-flow}, or simply, a {\em flow} 
is a function $\varphi: A \to \RR_+$ satisfying  
\begin{equation*}
0 \leq \varphi(e) \leq c(e) \quad (e \in A), \quad \nabla \varphi |_{U} \in {\cal B}(\rho),
\end{equation*}
where $(\cdot)|_{U}$ means the restriction to $U$.
The flow-value of a flow $\varphi$ is defined 
as $\nabla \varphi(t)$
By $\rho(U) = 0$, it holds $\nabla \varphi(t) = - \nabla \varphi(s)$.
An {\em $(s,t)$-cut} $X$ is a subset of nodes containing $s$ and not containing $t$.
The {\em cut capacity} of $X$ is defined as 
\begin{equation}\label{eqn:capacity}
c( \delta X) + \rho(X \setminus \{s\}).
\end{equation}
An $(s,t)$-flow is called {\em maximum} if it has the maximum flow-value, 
and an $(s,t)$-cut is called {\em minimum} if it has the minimum capacity.
\begin{Thm}[{see \cite[Theorem 5.11]{FujiBook}}]
	The maximum flow-value of an $(s,t)$-flow is equal to the minimum cut-capacity 
	of an $(s,t)$-cut $X$.
	If $c$ and $\rho$ are both integer-valued, 
	then there exists an integer-valued maximum flow.
	For any maximum flow $\varphi$, 
	the set of nodes reachable from $s$ in the residual network $N_{\varphi}$ of $\varphi$
	is the unique minimal minimum $(s,t)$-cut.
\end{Thm}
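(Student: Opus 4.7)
The plan is to prove the weak inequality (flow-value at most cut-capacity) directly from the definitions, then to match it via a residual-network augmentation argument, and finally to read off integrality and the uniqueness of the minimal minimum cut.

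First, for any flow $\varphi$ and any $(s,t)$-cut $X$, use $\sum_{i\in V}\nabla\varphi(i)=0$ to decompose
\[
\nabla\varphi(t)= -\nabla\varphi(X)+\nabla\varphi(X\setminus\{s\}).
\]
The term $-\nabla\varphi(X)$ is the net flow out of $X$, which is bounded by $c(\delta X)$, and $\nabla\varphi(X\setminus\{s\})\leq\rho(X\setminus\{s\})$ because $\nabla\varphi|_U\in {\cal B}(\rho)$. This establishes weak duality.

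Next, introduce the residual network $N_\varphi$ on $V$ whose arcs are: forward arcs $e\in A$ with $c(e)-\varphi(e)>0$; reverse arcs of $e\in A$ with $\varphi(e)>0$; and \emph{exchange arcs} $ji$ for distinct $i,j\in U$ whenever $\kappa(\nabla\varphi|_U;i,j)>0$, encoding the ability to raise $\nabla\varphi(i)$ while lowering $\nabla\varphi(j)$ inside ${\cal B}(\rho)$. Traversing any residual arc $uv$ by amount $\alpha$ can be realized as a change in $\varphi$ (or in $\nabla\varphi|_U$) that decreases $\nabla\varphi(u)$ by $\alpha$ and increases $\nabla\varphi(v)$ by $\alpha$; a directed residual path from $s$ to $t$ therefore allows a strict augmentation of $\nabla\varphi(t)$, so a maximum flow admits no such path. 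Let $X$ be the set of nodes reachable from $s$ in $N_\varphi$. Then $s\in X$, $t\notin X$, every forward arc in $\delta X$ is saturated, every backward arc entering $X$ carries zero flow, and $\kappa(\nabla\varphi|_U;i,j)=0$ for every $j\in X\cap U$ and $i\in U\setminus X$. The last condition, together with the submodularity of $\rho$, forces $X\cap U$ to be a union of minimal tight sets at its own elements, hence itself tight: $\nabla\varphi(X\setminus\{s\})=\rho(X\setminus\{s\})$. Combined with edge-saturation on $\delta X$, this yields $\nabla\varphi(t)=c(\delta X)+\rho(X\setminus\{s\})$, matching weak duality.

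Integrality follows at once: when $c$ and $\rho$ are integer-valued, all residual and exchange capacities along any augmenting path are integers, so every augmentation step increases the flow-value by a positive integer and the procedure terminates with an integer-valued maximum flow. For the unique minimal minimum cut, observe that $X\mapsto c(\delta X)+\rho(X\setminus\{s\})$ is submodular on $(s,t)$-cuts (a sum of the submodular edge-cut function and the submodular $\rho$), so the family of minimum $(s,t)$-cuts is closed under intersection and union and forms a lattice. Every minimum cut $Y$ satisfies edge-saturation on $\delta Y$ and tightness of $Y\setminus\{s\}$ for $\nabla\varphi|_U$, which by the same polyhedral argument excludes forward, reverse, and exchange arcs leaving $Y$ in $N_\varphi$; hence the $s$-reachable set $X$ is contained in every minimum cut $Y$, proving $X$ is the unique minimal element.

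The main obstacle is the polyhedral step translating ``no exchange arc leaves $X$'' into the tightness $\nabla\varphi(X\setminus\{s\})=\rho(X\setminus\{s\})$: this is where submodular flow genuinely departs from ordinary max flow, and relies on the standard structural facts that the minimal tight set containing a given element of ${\cal B}(\rho)$ is recovered through the exchange-capacity relation and that the union of tight sets is tight by submodularity.
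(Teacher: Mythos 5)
The paper states this theorem without proof (it is cited to Fujishige's book), so there is no in-paper argument to compare against; I will assess your sketch on its own terms. Your overall strategy is the right one: weak duality from the decomposition $\nabla\varphi(t)=-\nabla\varphi(X)+\nabla\varphi(X\setminus\{s\})$, matching via the residual-network reachable set, integrality from integer augmentations, and uniqueness of the minimal minimum cut from the lattice of minimum cuts and the lattice of tight sets.

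There is, however, a genuine error in the orientation you give the exchange arcs, and it propagates through every step that depends on reachability. You declare an exchange arc $j\to i$ whenever $\kappa(\nabla\varphi|_U;i,j)>0$, guided by the rule that traversing a residual arc $uv$ should ``decrease $\nabla\varphi(u)$ and increase $\nabla\varphi(v)$.'' That rule is correct for forward and reverse arcs, because there $\varphi$ itself changes; for an exchange arc the roles are opposite. If an $s$-$t$ path enters $u$ by a non-exchange arc, crosses the exchange arc $u\to v$, and leaves $v$ by a non-exchange arc, then pushing $\alpha$ \emph{increases} $\nabla\varphi(u)$ by $\alpha$ and \emph{decreases} $\nabla\varphi(v)$ by $\alpha$, and feasibility inside ${\cal B}(\rho)$ is precisely $\kappa(\nabla\varphi|_U;u,v)\ge\alpha$. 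So the exchange arc $u\to v$ must exist when $\kappa(\nabla\varphi|_U;u,v)>0$ --- the paper's convention (``add an edge $ij$''), the reverse of yours.

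This is not a harmless relabelling: it breaks the polyhedral step you yourself flagged as the crux. With your orientation, ``no exchange arc leaves the $s$-reachable set $X$'' yields $\kappa(\nabla\varphi|_U;i,j)=0$ for $j\in X\cap U$, $i\in U\setminus X$; unwinding this through the lattice of tight sets makes $U\setminus X$ tight, not $X\cap U$. Since $\rho(U)=\nabla\varphi(U)=0$, tightness of $U\setminus X$ only gives $\nabla\varphi(X\cap U)=-\rho(U\setminus X)\le\rho(X\cap U)$, which is weak duality again, not the equality $\nabla\varphi(X\setminus\{s\})=\rho(X\setminus\{s\})$ you assert. The uniqueness argument inherits the same flip: tightness of $Y\cap U$ for a minimum cut $Y$ gives $\kappa(\nabla\varphi|_U;j,i)=0$ for $j\in Y\cap U$, $i\in U\setminus Y$, which under your orientation forbids exchange arcs \emph{entering} $Y$, not leaving it, so you cannot conclude $X\subseteq Y$. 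With the corrected orientation, ``no exchange arc leaves $X$'' becomes $\kappa(\nabla\varphi|_U;i,j)=0$ for $i\in X\cap U$, $j\in U\setminus X$; then each $i\in X\cap U$ has its minimal tight set contained in $X\cap U$, so $X\cap U$ is a union of tight sets and hence tight, and the remainder of your sketch goes through as written.
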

Here the {\em residual network} $N_{\varphi}$ of $\varphi$ 
is a directed network  on $V$ constructed as follows:
For each edge $ij$ in $N$ with $\varphi(ij) < c(ij)$, add an edge $ij$ to $N_{\varphi}$.
For each edge $ij$ in  $N$ with $0< \varphi(ij)$, add an edge $ji$ to $N_{\varphi}$. 
For each pair of distinct nodes $i,j$ in $U$ with $\kappa(\nabla \varphi|_{U};i,j) > 0$, add an edge $ij$ to $N_{\varphi}$.

There are several combinatorial polynomial time algorithms 
for computing an integral maximum submodular flow, 
under the assumption that an oracle for computing the exchange capacity is available.
They are designed by extending 
existing max-flow algorithms; see the survey~\cite{FI00survey} for further details.

We note one basic property for the case where the network is skew-symmetric.
\begin{Lem}\label{lem:minimal_minimum}
	Suppose that $V$ is a signed set with $t = \bar s$, and $N$ is skew-symmetric. 
	If $\rho$ satisfies $\rho(\underline{X}) \leq \rho(X)$ for all $X \subseteq U$, 
	then the unique minimal minimum $(s,t)$-cut is a transversal. 
\end{Lem}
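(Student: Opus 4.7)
The plan is to show that for any $(s,t)$-cut $X$, its ``transversal reduction''
\[
X_0 \;:=\; \{s\} \cup \underline{X \cap U}
\]
is again an $(s,t)$-cut whose cut capacity is no larger than that of $X$.  Applying this to the unique minimal minimum cut $X^*$ will give $X_0$ a minimum cut with $X_0 \subseteq X^*$, and the minimality of $X^*$ then forces $X^* = X_0$, which is a transversal by construction.  So the whole task reduces to proving
\[
c(\delta X_0) + \rho(X_0 \setminus \{s\}) \;\leq\; c(\delta X) + \rho(X \setminus \{s\}).
\]

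The $\rho$-part is handed to us by the hypothesis: since $X_0 \setminus \{s\} = \underline{X \cap U}$, we have $\rho(X_0 \setminus \{s\}) = \rho(\underline{X \cap U}) \leq \rho(X \cap U) = \rho(X \setminus \{s\})$.  The main work is the edge-capacity inequality $c(\delta X_0) \leq c(\delta X)$, which is where the skew-symmetry enters.  Let $P := (X \cap U) \setminus \underline{X \cap U}$ be the set of nodes of $X \cap U$ appearing together with their conjugates; by construction $P$ is self-conjugate ($\bar P = P$) and $X_0 = X \setminus P$.  A straightforward case check on where the two endpoints of an edge $uv$ lie among the three blocks $X_0$, $P$, $V \setminus X$ shows that $\delta X_0 \setminus \delta X$ consists exactly of the edges $uv$ with $u \in X_0$ and $v \in P$, while $\delta X \setminus \delta X_0$ consists exactly of the edges $uv$ with $u \in P$ and $v \in V \setminus X$.

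Now I would invoke the skew-symmetric involution $e = uv \mapsto e^{*} = \bar v \bar u$ to relate these two sets.  If $u \in X_0 \subseteq X$ with $u \notin P$, then the definition of $P$ forces $\bar u \notin X$; and $v \in P = \bar P$ gives $\bar v \in P$.  Therefore $e^{*}$ has tail $\bar v \in P$ and head $\bar u \notin X$, so $e^{*} \in \delta X \setminus \delta X_0$.  Because the map is an involution and $c(e) = c(e^{*})$, this yields $c(\delta X_0 \setminus \delta X) \leq c(\delta X \setminus \delta X_0)$, and hence $c(\delta X_0) \leq c(\delta X)$, as required.

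The main conceptual point is that the hypothesis $\rho(\underline X) \leq \rho(X)$ on the $\rho$-side is exactly matched, under $X \mapsto X_0$, by a skew-symmetric cancellation of edges between $X_0$ and the self-conjugate block $P$.  The only step requiring genuine care is the case check identifying the two components of the symmetric difference $\delta X \setminus \delta X_0$ and $\delta X_0 \setminus \delta X$; once $P$ is recognized as self-conjugate the rest of the argument runs automatically.
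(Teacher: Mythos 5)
Your proof is correct and follows essentially the same strategy as the paper's: reduce to showing that the transversal reduction of an $(s,t)$-cut does not increase cut capacity, handle the $\rho$-term directly from the hypothesis, and handle the edge-capacity term by pairing each newly-cut edge with a dropped edge via the skew-symmetric involution $uv \mapsto \bar v\bar u$. The paper states the edge-injection more tersely (it simply observes that any $uv \in \delta\underline X \setminus \delta X$ forces $\bar v\bar u \in \delta X \setminus \delta\underline X$), whereas you first identify $P := (X\cap U)\setminus\underline{X\cap U}$ as a self-conjugate block and characterize the two symmetric-difference sets explicitly before applying the involution; this is a slightly more structured exposition of the same argument, not a different proof.
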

\begin{proof}
	By the assumption for $\rho$, 
	it suffices to show $c(\delta \underline{X}) \leq c(\delta X)$. 
	Suppose that some edge $uv (\not \in \delta X)$ appears in $\delta \underline{X}$.
    In this case, it holds $u \in \underline{X} \subseteq X$ and $v \in X \setminus \underline{X}$.
	This means that $\{v, \bar v\} \subseteq X$ and $\bar u \not \in X$.
	Hence edge $\bar v \bar u \in \delta X$ (of the same capacity) does not appear in 
	$\delta \underline{X}$.
	Consequently the cut-capacity does not increase.
\end{proof}

\subsection{Submodular extension}\label{subsec:ext}
Here we introduce a method of reducing bisubmodularity 
to submodularity, which will play a key role in our algorithm; 
see the lower right of Figure~\ref{fig:outline}.
Let $V := \{1,2,\ldots,n\}$.
Let $3^{V}$ denote the set of ordered pairs of disjoint subsets of $V$. 
For a function $h$ on $3^V$, the polyhedron ${\cal D}(h)$ in $\RR^V$ 
is defined to be the set of all vectors $z \in \RR^V$ satisfying
\begin{equation}\label{eqn:D(h)}
z(Y) - z(Z) \leq h(Y,Z) \quad ((Y,Z) \in 3^V).
\end{equation}
If $h$ is a bisubmodular function\footnote{
A function $h$ on $3^V$ is called {\em bisubmodular} if it satisfies
	\[
	h(X,Y) + h(X',Y') \geq h(X \cap X', Y \cap Y') + h((X\cup X') \setminus (Y \cup Y'), (Y \cup Y') \setminus (X\cup X')) \quad ((X,Y),(X',Y') \in 3^V).
	\]
},
then ${\cal D}(h)$ is known as a bisubmodular polyhedron.
We are interested in the case where ${\cal D}(h)$ 
is a projection of the base polyhedron of some submodular function.
A representative of such polyhedra is the polyhedron of all flow-boundaries of 
a bidirected network; see~\cite{AFN97}.  

Let $V^{\pm} = \{1^+,2^+,\ldots,n^+,1^-,2^-,\ldots,n^-\}$ be the signed extension of $V$.
For a function $h$ on $3^V$, 
a {\em normal submodular extension} of $h$
is a submodular function $\rho$ on $2^{V^{\pm}}$ with $\rho(\emptyset) =0$ satisfying
\begin{eqnarray}
&& \rho(Y^+ \cup Z^-) = h(Y,Z) \quad ((Y,Z) \in 3^V), \label{eqn:extension1} \\ 
&& \rho(\overline{X}) = \rho(\underline{X}) \leq \rho(X) \quad (X \in 2^{V^{\pm}}).  \label{eqn:extension2} 
\end{eqnarray}
Define a map (projection) $\phi: \RR^{V^\pm} \to \RR^V$ by
\begin{equation}\label{eqn:def_phi}
(\phi(x))(i)  := \frac{x(i^+) - x(i^-)}{2} \quad (x \in \RR^{V^{\pm}}, i \in V).
\end{equation}
\begin{Lem}\label{lem:projection}
	Let $h$ be a function on $3^V$ and 
	let $\rho$ be a normal submodular extension of $h$.
	Then it holds $\phi({\cal B}(\rho)) = {\cal D}(h)$.
\end{Lem}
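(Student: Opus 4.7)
The plan is to prove the two inclusions separately. For ${\cal D}(h) \subseteq \phi({\cal B}(\rho))$, I would give an explicit lift: given $z \in {\cal D}(h)$, define $x \in \RR^{V^{\pm}}$ by $x(i^+) := z(i)$ and $x(i^-) := -z(i)$, so that $\phi(x) = z$ is immediate. Since $\underline{V^{\pm}} = \emptyset$, hypothesis (\ref{eqn:extension2}) forces $\rho(V^{\pm}) = \rho(\emptyset) = 0 = x(V^{\pm})$. For a general $S \subseteq V^{\pm}$, I would partition $V$ into $A := \{i \mid i^+ \in S,\ i^- \notin S\}$, $B := \{i \mid i^- \in S,\ i^+ \notin S\}$, the pairs $C$ where both signs lie in $S$, and $D$ where neither does. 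A direct calculation cancels the $C$-contribution and yields $x(S) = z(A) - z(B)$. Since $\underline S = A^+ \cup B^-$, combining (\ref{eqn:extension1})--(\ref{eqn:extension2}) with $z \in {\cal D}(h)$ gives $\rho(S) \geq \rho(\underline S) = h(A,B) \geq z(A) - z(B) = x(S)$, so $x \in {\cal B}(\rho)$.

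For the reverse inclusion, take $x \in {\cal B}(\rho)$ and $(Y,Z) \in 3^V$; set $W := Y^+ \cup Z^-$ and its sign-flip $W' := Y^- \cup Z^+$, both transversal. Unwinding $\phi$ gives
\[
2\bigl(\phi(x)(Y) - \phi(x)(Z)\bigr) = x(W) - x(W').
\]
The crucial step is to use $x(V^{\pm}) = \rho(V^{\pm}) = 0$: writing $R := V^{\pm} \setminus (W \cup W')$ we have $x(W') = -x(W) - x(R)$, so $x(W) - x(W') = 2 x(W) + x(R) = x(W) + x(\overline W)$, where the last equality uses $\overline W = W \cup R$ (the co-transversal $\overline W$ adjoins exactly the pairs $\{i^+,i^-\}$ with $i \notin Y \cup Z$). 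The base-polyhedron inequality together with (\ref{eqn:extension1})--(\ref{eqn:extension2}) then gives both $x(W) \leq \rho(W) = h(Y,Z)$ and $x(\overline W) \leq \rho(\overline W) = \rho(\underline W) = h(Y,Z)$; summing and halving yields the desired $\phi(x)(Y) - \phi(x)(Z) \leq h(Y,Z)$.

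The main obstacle is spotting the identity $x(W) - x(W') = x(W) + x(\overline W)$, which reduces an apparently two-sided bisubmodular estimate to a sum of two ordinary one-sided submodular bounds. Its clean form relies on $\rho(V^{\pm}) = 0$, which is not an assumption but an automatic consequence of the normal-extension axioms. Aside from that trick, both inclusions are routine unpackings of the definitions.
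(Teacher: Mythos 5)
Your proof is correct and follows essentially the same route as the paper: the explicit lift $x(i^\pm)=\pm z(i)$ for one inclusion, and for the other the pairing of the two base-polyhedron bounds $x(W)\leq\rho(W)$ and $x(\overline W)\leq\rho(\overline W)$ together with $x(V^{\pm})=\rho(V^{\pm})=0$. Your rewriting $x(W)-x(W')=x(W)+x(\overline W)$ is just a more explicit form of the paper's computation of $(x(X)+x(\overline X)-x(V^{\pm}))/2$, so this is a cosmetic repackaging rather than a different argument.
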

\begin{proof}
	Note that $\rho(V^{\pm}) = \rho(\emptyset) = 0$.
	We first show that $\phi({\cal B}(\rho)) \subseteq {\cal D}(h)$. 
	Take an arbitrary $x$ in ${\cal B}(\rho)$.
	For $(Y,Z) \in 3^{V}$, let $X := Y^+ \cup Z^-$. Note that $X = \underline{X}$ and $\rho(X) = \rho(\underline{X}) = \rho(\overline{X})$. Then we have 
	\[
	x(X) \leq \rho(X),\ x(\overline{X}) \leq \rho(\overline{X}) = \rho(X),\ x(V^{\pm}) = \rho(V^{\pm}) = 0.
	\]
	Hence we have
	$
	(x(X) + x(\overline{X}) -  x(V^{\pm}))/2 \leq (\rho(X) + \rho(\overline{X}) - \rho(V^{\pm}))/2 = \rho(X) = h(Y,Z).
	$
	Also we have
	\begin{eqnarray*}
	 (x(X) + x(\overline{X}) -  x(V^{\pm}))/2 & =& x(Y^+) + x(Z^-) 
		-  (x(Y^+ \cup Z^+) + x(Y^- \cup Z^-))/2 \\
		& = & (x(Y^+) - x(Y^-))/2 -  (x(Z^+) - x(Z^-))/2 \\ 
		& = & (\phi(x))(Y)  -  (\phi(x))(Z).
	\end{eqnarray*}
	Hence $\phi(x)$ belongs to ${\cal D}(h)$.	
	
	Next we show the converse.
	Take an arbitrary $z \in {\cal D}(h)$. 
	Define a vector $x$ in $\RR^{V^{\pm}}$ by 
	$x(i^+) := z(i)$ and $x(i^-) := - z(i)$ for $i \in V$.
	Obviously $\phi(x) = z$. 	
	It suffices to show that $x$ belongs to ${\cal B}(\rho)$.
	Since $x(V^{\pm}) = 0$, we have $x(V^{\pm}) = 0 = \rho(\emptyset) = \rho(V^{\pm})$.
	For $X \subseteq V^{\pm}$, we have
	\begin{equation*}
	x(X) = x(\underline{X})= \sum_{i: i^+ \in \underline{X}} z(i) -\sum_{i: i^- \in \underline{X}} z(i) 
	\leq h(\underline{X}^+, \underline{X}^-) = \rho(\underline{X}) \leq \rho(X). 
	\end{equation*}
    Thus $x \in {\cal B}(\rho)$, and hence ${\cal D}(h) \subseteq \phi({\cal B}(\rho))$.
\end{proof}
If $h:3^{V} \to \RR$ has a normal submodular extension, 
then $h$ is necessarily a bisubmodular function. 
Not all bisubmodular functions admit submodular extensions (Y. Iwamasa 2015).

We consider a special bisubmodular function on $3$-element set $\{1,2,3\}$, 
which plays a key role in Section~\ref{sec:multiflow}.
For $b \geq 0$,
let $\mDelta_b$ be the function on $3^{\{1,2,3\}}$ defined by
\begin{equation}
\mDelta_b (Y,Z) := \left\{ 
\begin{array}{ll}
2b & {\rm if}\ |Y| \geq 2, \\
b & {\rm if}\ |Y| =1, |Z| \leq 1, \\
0 & {\rm otherwise}\ (Y =\emptyset \ {\rm or}\ |Z| \geq 2),
\end{array}
\right. \quad (Y,Z) \in 3^V.
\end{equation}
\begin{Lem}\label{lem:flow_conservation}
	The polyhedron ${\cal D}(\mDelta_b)$ is the set of nonnegative vectors $z \in \RR^{\{1,2,3\}}_+$ satisfying
	\begin{eqnarray*}
	&& \ \  z(1) + z(2) + z(3) \leq 2 b, \\
	&& \ \  z(1) - z(2) - z(3) \leq 0, \\
	&& - z(1) + z(2) - z(3) \leq 0, \\
	&& - z(1) - z(2) + z(3) \leq 0.
	\end{eqnarray*}	
\end{Lem}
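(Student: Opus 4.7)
The plan is to verify ${\cal D}(\mDelta_b) = P$, where $P$ denotes the polyhedron described by the four displayed inequalities together with nonnegativity $z \geq 0$, by a direct case analysis of the $27$ defining constraints $z(Y)-z(Z) \leq \mDelta_b(Y,Z)$ indexed by $(Y,Z) \in 3^{\{1,2,3\}}$, grouped by the pair $(|Y|,|Z|)$. The inclusion ${\cal D}(\mDelta_b) \subseteq P$ is immediate once one checks that each constraint defining $P$ is itself among the $27$: the pair $(\{1,2,3\}, \emptyset)$ gives $z(1)+z(2)+z(3) \leq 2b$; the three pairs $(\{i\}, \{1,2,3\}\setminus\{i\})$, for which $\mDelta_b$ vanishes because $|Z|\geq 2$, give the three triangle inequalities; and the pairs $(\emptyset, \{i\})$ give $z(i) \geq 0$.

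The substance is the reverse inclusion $P \subseteq {\cal D}(\mDelta_b)$, for which I would derive each of the remaining defining inequalities from the listed ones. Most are straightforward: the pairs $(\{i,j\}, \emptyset)$ and $(\{i,j\}, \{k\})$ both give bounds of the form $z(i)+z(j)-\varepsilon z(k) \leq 2b$ with $\varepsilon \in \{0,1\}$, which follow from $z(1)+z(2)+z(3) \leq 2b$ combined with $z(k) \geq 0$; the pair $(\{i\}, \{j\})$ yields $z(i)-z(j) \leq b$, dominated by $z(i) \leq b$ and $z(j) \geq 0$; and any $(\emptyset, Z)$ with $|Z| \geq 2$ gives $-z(Z) \leq 0$, implied by componentwise nonnegativity. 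The single case that requires a short deduction is $(\{i\},\emptyset)$, which demands $z(i) \leq b$: summing $z(1)+z(2)+z(3) \leq 2b$ with the triangle inequality $z(i)-z(j)-z(k) \leq 0$ yields $2z(i) \leq 2b$.

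I do not anticipate any real obstacle; the proof is mechanical, and the only non-trivial deduction is the two-line argument giving $z(i) \leq b$. The value of the lemma lies in producing the explicit facet description of ${\cal D}(\mDelta_b)$, which will be invoked when $\mDelta_b$ is used to encode the flow-conservation and node-capacity constraints at a degree-$3$ node and, in particular, when constructing the normal submodular extension of $\mDelta_b$ on the signed set $V^{\pm}$.
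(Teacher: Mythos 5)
Your proof is correct and follows essentially the same route as the paper: observe that the four listed inequalities together with nonnegativity are among the $27$ constraints in $(2.3)$, then verify by direct computation that the remaining constraints are implied. The paper derives $z(i)-z(j)\leq b$ from the triangle inequality plus $z(k)\leq b$, while you derive it from $z(i)\leq b$ plus $z(j)\geq 0$, but this is an immaterial difference in the same mechanical case check.
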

\begin{proof}
	Observe that
	these inequalities appear in (\ref{eqn:D(h)}).
	So it suffices to show that inequalities~(\ref{eqn:D(h)}) 
	are derived from the above inequalities.
	This is a routine verification.
	For example, 
	$z(1) + z(2) - z(3) \leq \varDelta_b(\{1,2\},\{3\}) = 2b$
	is obtained by
	adding $z(1) + z(2) + z(3) \leq 2b$ and $- z(3) \leq 0$.
	Also $z(1) \leq \varDelta_b(\{1\}) = b$ is implied by $z(1)+ z(2)+ z(3) \leq 2b$ and $z(1)- z(2) - z(3) \leq 0$, and
	$z(1) - z(2) \leq \varDelta_b(\{1\},\{2\}) = b$ is 
	implied by $z(1) - z(2) -z(3) \leq 0$ and $z(3) \leq b$.	 
\end{proof}
The polyhedron ${\cal D}(\mDelta_b)$ is a simplex 
with vertices $(0,0,0)$, $(b,b,0)$, $(b,0,b)$ and $(0,b,b)$.
We will see in Section~\ref{sec:multiflow} that 
${\cal D}(\mDelta_b)$
represents 
the flow-conservation law and the node-capacity constraint
on a node of degree $3$. 
%
This bisubmodular function $\mDelta_b$ has a normal submodular extension. 
The following example  was 
found by Yuni Iwamasa via computer calculation.
Classify subsets $X \subseteq \{1^+,2^+,3^+,1^-,2^-,3^-\}$ 
into the following six types:
\begin{description}
	\item[{\rm type 1:}] $|X^+| \geq 2$ and  $|X^-| \leq 1$.
	\item[{\rm type 2:}] $X^+ = \{i^+\}$ and $X^- =\{ 1^-,2^-,3^-\} \setminus \{i^-\}$ for some $i \in \{1,2,3\}$. 
	\item[{\rm type 3:}] $X \subseteq \{1^-,2^-,3^-\}$ or $\{1^-,2^-,3^-\} \subseteq X$. 
	\item[{\rm type 4:}] $|X^+| = 2$ and  $|X^-| = 2$.
	\item[{\rm type 5:}]  $X^+ = \{i^+\}$, $|X^-| \leq 2$, and $X^- \neq \{1^-,2^-,3^- \} \setminus \{i^-\}$
	for some $i \in \{1,2,3\}$.
	\item[{\rm type 6:}] $X = \{1^+,2^+,3^+,1^-,2^-,3^-\} \setminus \{i^-\}$ for some $i \in \{1,2,3\}$.
\end{description}
Define $\mDelta^*_b: 2^{\{1^+,2^+,3^+,1^-,2^-,3^-\}} \to \RR$ by 
\begin{equation}\label{eqn:delta*}
\mDelta^*_b(X)
:= \left\{
\begin{array}{ll}
2b & {\rm if}\ \mbox{$X$: type 1},\\
0 & {\rm if}\ \mbox{$X$: type 2 or 3},\\
b & {\rm otherwise}\ \mbox{($X$: type 4, 5, or 6)}.
\end{array} \right.
\end{equation}
\begin{Lem}\label{lem:Delta*}
	$\mDelta^*_b$ is a normal submodular extension of $\mDelta_b$.
\end{Lem}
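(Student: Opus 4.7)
The plan is to verify the four requirements defining a normal submodular extension: $\mDelta^*_b(\emptyset) = 0$, the matching identity $\mDelta^*_b(Y^+ \cup Z^-) = \mDelta_b(Y,Z)$ for $(Y,Z) \in 3^V$, the chain $\mDelta^*_b(\overline X) = \mDelta^*_b(\underline X) \leq \mDelta^*_b(X)$, and submodularity. Throughout I would exploit that $\mDelta^*_b(X)$ depends on $X$ only through its type (1--6), that the six types are preserved by the $S_3$-action permuting the indices of $V = \{1,2,3\}$, and that types are nearly determined by the pair $(|X^+|, |X^-|)$.

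The first two items are direct. The empty set is of type~3, so $\mDelta^*_b(\emptyset) = 0$. For the extension identity, $(Y,Z) \in 3^V$ forces $X := Y^+ \cup Z^-$ to be transversal, and a tabulation by $(|Y|,|Z|)$ places $X$ in type~1 when $|Y| \geq 2$, type~5 when $(|Y|,|Z|) \in \{(1,0),(1,1)\}$, type~2 when $(|Y|,|Z|)=(1,2)$, and type~3 when $|Y|=0$; the corresponding values $2b, b, 0, 0$ match $\mDelta_b(Y,Z)$. For the chain condition, a six-case inspection on the type of $X$ determines the types of $\underline X$ (obtained by removing overlapping pairs $\{i^+,i^-\} \subseteq X$) and $\overline X$ (obtained by adjoining pairs $\{i^+,i^-\}$ for indices $i$ absent from $X$), from which $\mDelta^*_b(\underline X) = \mDelta^*_b(\overline X) \leq \mDelta^*_b(X)$ follows in each case.

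The bulk of the work is submodularity, which I would verify via the standard equivalent local form
\[
\mDelta^*_b(X \cup \{u\}) + \mDelta^*_b(X \cup \{v\}) \geq \mDelta^*_b(X) + \mDelta^*_b(X \cup \{u,v\})
\]
for every $X \subseteq V^{\pm}$ and all distinct $u, v \in V^{\pm} \setminus X$. Using $S_3$-symmetry together with the fact that $\mDelta^*_b$ depends only on the type, the verification collapses to a short list of orbit representatives indexed by $(|X^+|, |X^-|)$, the overlap pattern between $X^+$ and $X^-$, and the signs and relative indices of $u$ and $v$; each resulting diamond inequality involves four values drawn from $\{0, b, 2b\}$ and is checked directly against the type table.

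The main obstacle is controlling the transitions into and out of type~2, where the value drops sharply from $b$ to $0$ upon adjoining a single negative element in the uniquely wrong position. Such drops appear on the right-hand side of the exchange inequality and must be absorbed by the companion terms on the left; these are the diamond configurations that force the distinction between types~2 and 5 rather than just $(|X^+|,|X^-|)$, and they demand the most careful bookkeeping. Since $\mDelta^*_b$ was discovered by a computer search, no structural shortcut is apparent, and a systematic orbit-by-orbit case enumeration, kept tractable by $S_3$-symmetry, seems unavoidable.
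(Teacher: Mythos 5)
Your overall plan is sound, and for the first two requirements (the extension identity $\mDelta^*_b(Y^+\cup Z^-) = \mDelta_b(Y,Z)$ and the chain $\mDelta^*_b(\overline X)=\mDelta^*_b(\underline X)\leq\mDelta^*_b(X)$) your case tabulation by $(|Y|,|Z|)$ and by type matches the paper's argument essentially verbatim. For submodularity you take a genuinely different organizational route: you reduce to the local exchange (``diamond'') inequality $\mDelta^*_b(X\cup\{u\})+\mDelta^*_b(X\cup\{v\})\geq\mDelta^*_b(X)+\mDelta^*_b(X\cup\{u,v\})$ and then quotient by the $S_3$-action, whereas the paper checks the full pairwise inequality $\mDelta^*(X)+\mDelta^*(Y)\geq\mDelta^*(X\cap Y)+\mDelta^*(X\cup Y)$ directly, organized by the types of $X$ and $Y$ (four main cases with several subcases each). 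The local-exchange reduction is a legitimate and arguably cleaner starting point, and the symmetry quotient does cut down the number of configurations; what the paper's direct approach buys is that intersections and unions of specific type pairs behave predictably (e.g.\ a type-6 set unioned with anything noncomparable is the whole set, hence type~3), which gives short uniform arguments without enumerating all singleton additions.

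That said, there is a real gap: you state the reduction and correctly identify where the difficulty concentrates (the sharp drop from $b$ to $0$ on entering type~2, i.e.\ the distinction between types 2 and 5 that cannot be read off $(|X^+|,|X^-|)$ alone), but you do not actually carry out the orbit-by-orbit verification. Since $\mDelta^*_b$ was found by computer search and has no evident closed-form convexity structure, the case check \emph{is} the proof --- there is no shortcut --- and an outline that says ``each diamond inequality is checked directly against the type table'' without producing the table leaves the central claim unverified. To complete the argument you would need to exhibit the finite list of $S_3$-orbit representatives of triples $(X,u,v)$ with $u,v\notin X$ distinct, record the four types involved in each diamond, and confirm the inequality in each case, paying particular attention to the configurations where $X\cup\{u,v\}$ lands in type~2 while $X$, $X\cup\{u\}$, $X\cup\{v\}$ do not.
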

\begin{proof}
It suffices to consider the case of $b=1$; 
we denote $\mDelta_1$ and $\mDelta_1^*$ by $\mDelta$ and $\mDelta^*$, respectively.

First we show (\ref{eqn:extension1}).
For $(Y,Z) \in 3^{\{1,2,3\}}$, let $X := Y^+ \cup Z^-$. 
Then $X^+ = Y^+$ and $X^- = Z^-$.
If $|Y|  = |X^+|\geq 2$, 
then $|Z| = |X^-| \leq 1$, and $X$ is of type 1; 
hence $\mDelta^*(X) = 2 = \mDelta(Y,Z)$.
If $|Y| = |X^+| = 1$ and $|Z| = |X^-| \leq 1$, 
then $X$ is of type 5, and hence 
$\mDelta^*(X) = 1 = \mDelta(Y,Z)$.
If $Y$ is empty,
then $X \subseteq \{1^-,2^-,3^-\}$, and $X$ is type 3; 
hence $\mDelta^*(X) = 0 = \mDelta(Y,Z)$.
If $|Z| = |X^-| \geq 2$ and $Y \neq \emptyset$, 
then $|Y| = |X^+| = 1$, and $X$ is of type 2; 
hence $\mDelta^*(X) = 0 = \mDelta(Y,Z)$.

Second we show (\ref{eqn:extension2}).
It suffices to show that $\mDelta^*(X) = \mDelta^*(\overline{X})$ holds
for any transversal $X$, and that $\mDelta^*(\underline{X}) \leq \mDelta^*(X)$ holds 
for any $X$ that is neither a transversal nor a co-transversal.
The former property follows from 
$\mDelta^*(\{1^+\}) = 1 = \mDelta^*(\{1^+,2^+,3^+,2^-,3^-\})$,
$\mDelta^*(\{1^-\}) = 0 = \mDelta^*(\{2^+,3^+,1^-, 2^-,3^-\})$,
$\mDelta^*(\{1^+,2^+\}) = 2 = \mDelta^*(\{1^+,2^+,3^+,3^-\})$,
$\mDelta^*(\{1^+,2^-\}) = 1 = \mDelta^*(\{1^+,3^+, 2^-,3^-\})$, and
$\mDelta^*(\{1^-,2^-\}) = 0 = \mDelta^*(\{3^+,1^-, 2^-,3^-\})$.
The latter property follows from
$\mDelta^*(\{2^+,2^-\}) = 1 > 0 = \mDelta^*(\emptyset)$,
$\mDelta^*(\{1^+,2^+,2^-\}) = 2 > 1 = \mDelta^*(\{1^+\})$, and
$\mDelta^*(\{2^+,1^-,2^-\}) = 1 > 0 = \mDelta^*(\{1^-\})$.

Finally we show the submodularity of $\mDelta^*$.
Take $X,Y \subseteq \{1^+,2^+,3^+,1^-,2^-,3^-\}$.
We can assume that $X \not \subseteq Y$ and $Y \not \subseteq X$.

Case 1: $X$ is of type 6.
In this case, $X \cup Y$ is the whole set, and is of type 3.
Therefore it suffices to consider the case where $X \cap Y$ is of type 1 and $Y$ is not of type 1.
Necessarily $Y$ is of type 4 or 6.
Thus submodular inequality $1 + 1 \geq 2 + 0$ holds.  

Case 2: $X$ is of type 2 with $1^+ \in X$.
If $Y$ is also of type 2, then $X \cup Y (\supseteq \{1^-,2^-,3^-\})$ is of type 3, 
$X \cap Y (\subseteq \{1^-,2^-,3^-\})$ is also of type 3.
If $Y$ is of type 3 with $Y \subseteq \{1^-,2^-,3^-\}$, 
then both $X \cap Y$ and $X \cup Y$ are of type 3.
If $Y$ is of type 3 with $Y \supseteq \{1^-,2^-,3^-\}$,
then $Y$ cannot have $1^+$, and thus both $X \cap Y$ and $X \cup Y$ are of type 3.
In these cases, submodularity $(0+0 \geq 0+0)$ holds.
Thus we may assume that $Y$ is of type 1,4, or 5.
Observe that neither $X \cap Y$ nor $X \cup Y$ is of type 1.
We may assume that $Y$ contains $1^+$ and does not contain $1^-$;
otherwise $X \cap Y$ or $X \cup Y$ is of type 3, and submodularity holds.
Necessarily $Y$ is of type 1.
Then $X \cap Y$ is of type 5, and $X \cup Y$ is of type 4 or 6;
submodularity $(0+2 \geq 1 + 1)$ holds.

Case 3: $X$ is of type 3 and $Y$ is not of type 2.
If $Y$ is also of type 3, 
then both $X \cap Y$ and $X \cup Y$ is of type 3; 
submodularity holds.
Since one of $X \cap Y$ and $X \cup Y$ is of type 3, 
it suffices to consider the case where $X \cap Y$ or $X \cup Y$ is of type 1.
We show that $Y$ is also type 1.
If $X \cap Y$ is of type 1, 
then $X = \{1^+,2^+,3^+,1^-,2^-,3^-\} \setminus \{i^+\}$, and necessarily $Y$ is of type 1.
If $X \cup Y$ is of type 1,
then $X = \{i^-\}$, and $Y$ has at least two elements in $\{1^+,2^+,3^+\}$~(type 1).

Case 4: $X$ is of type 4 or 5, and $Y$ is type 1, 4, or 5. 
Suppose that $X$ is of type 4.
Then $X \cup Y$ is not type 1.
We may consider the case where $X \cap Y$ is of type 1 and $Y$ is not of type~1.
Then $Y$ contains $X^+$. Thus $Y$ is of type 4, and $X \cup Y$ necessarily contains $\{1^-,2^-,3^-\}$ (type 3); 
submodularity ($1 + 1 \geq 2 + 0$) holds.
Suppose that $X$ is of type 5 with $1^+ \in X$.
Then $X \cap Y$ is not type 1.
We may consider the case where $X \cup Y$ is of type 1.
If $Y$ does not contain $1^+$, then $X \cap Y$ is of type 3; submodularity ($1 + 1 \geq 2 + 0$) holds.
Thus $|Y^+| \geq 2$ and $|Y^-| = 0$ or $1$; $Y$ is of type 1.
The intersection $X \cap Y$ is of type 5; thus $1+ 2 \geq 2+1$ holds.    
\end{proof}
%

\section{Node-capacitated multiflow}\label{sec:multiflow}

In this section, we introduce a combinatorial 
duality theory, developed by~\cite{HHMPA},  for a class of node-capacitated multiflow problems. 
We consider the following multiflow problem. 
Now assume that network $N$ also has a nonnegative edge-cost $a: E \to \RR_+$; 
so the network is a 5-tuple $(V,E,S,c,a)$.
For a multiflow $f = ({\cal P},\lambda)$, the total flow-values on node $i$ and edge $e$ are denoted by
$f(i) :=  \sum_{P \in {\cal P}:\  i \in V(P)} \lambda(P)$ and $f(e) :=  \sum_{P \in {\cal P}:\  e \in E(P)} \lambda(P)$, respectively.
The {\em cost} $a(f)$ is defined by
\begin{equation*}
a(f) := \sum_{e \in E} a(e) f(e). 
\end{equation*}
Next we define the value of a multiflow.
A {\em tree-embedding} ${\cal E} = (\mGamma, \{q_s\}_{s \in S})$ 
is a pair of a tree $\mGamma$ and a family $\{q_s\}_{s \in S}$ 
of vertices of $\mGamma$ indexed by terminal set $S$.
The {\em ${\cal E}$-value} $v_{\cal E}(f)$ of a multiflow $f = ({\cal P}, \lambda)$ is defined by
\[
v_{\cal E}(f) := \sum_{P \in {\cal P}} d(q_{s_P},q_{t_P}) \lambda(P),
\]
where $s_P,t_P$ denote the ends of an $S$-path $P$, 
and $d = d_{\varGamma}$ denotes the shortest path metric of $\varGamma$ with respect to unit edge-length.
We are now ready to define our multiflow problem.
An instance of the problem is a pair of a network $N = (V,E,S,c,a)$ 
and a tree-embedding ${\cal E} = (\mGamma, \{p_s\}_{s \in S})$, 
and the task is to find a multiflow $f$ 
that maximizes $v_{\cal E}(f) - a(f)$.

This somewhat artificial formulation turns out to be useful, 
and actually generalizes the original problem.
Indeed, take $\mGamma$ as a star with $|S|$ leaves $v_s$ $(s \in S)$, 
let ${\cal E} := (\mGamma, \{v_s\}_{s \in S})$, and let $a(e) := 0$ for each edge $e$.
Then $v_{\cal E}(f) - a(f)$ is twice the total flow-value of $f$.

In Section~\ref{subsec:duality}, we deal with the left part in Figure~\ref{fig:outline}.
We present a combinatorial duality theorem and an optimality criterion.
We introduce a nondegeneracy concept of the problem, 
and give an algorithm to 
find a half-integral optimal multiflow from a dual optimum under 
the nondegeneracy assumption.
We also explain how to reduce the original problem 
to a nondegenerate problem.
In Section~\ref{subsec:dc}, 
we deal with the upper right part in Figure~\ref{fig:outline}.
We show that our dual objective can be viewed as
an L-convex function on a certain graph structure, and 
present the steepest descent algorithm (SDA) 
to minimize L-convex functions and its iteration bound.
\subsection{Duality}\label{subsec:duality}
Let a pair of $N= (V,E,S,c,a)$ and ${\cal E} = (\mGamma, \{q_s\}_{s \in S})$ be an instance of the problem. 
We may assume that there is no edge connecting terminals.
The vertex set of $\mGamma$ is also denoted by $\mGamma$ (instead of $V(\mGamma)$).
Let $\mGamma^*$ denote the edge-subdivision of $\mGamma$, 
where $\mGamma \subseteq \mGamma^*$,
the edge-length of $\mGamma^*$ is defined as $1/2$ uniformly, and
the shortest path metric $d_{\mGamma^*}$ is also denoted by $d$.

A pair $(p,r)$ of a tree-valued function $p: V \to \mGamma^*$ 
and a nonnegative half-integer-valued function $r: V \to \ZZ_+/2$ 
is called a {\em potential} if 
it satisfies the following conditions:
\begin{itemize}
	\item[(p1)] For each node $i$,  $r(i)$ is an integer if and only if $p(i)$ belongs to $\mGamma$.
	\item[(p2)] For each edge $ij$, it holds $d(p(i), p(j))- r(i) - r(j) \leq a(ij)$.
	\item[(p3)] For each terminal $s$, it holds $(p(s),r(s)) = (q_s, 0)$.
\end{itemize}
Then the following min-max formula and optimality criterion hold:
\begin{Thm}[\cite{HHMPA}]\label{thm:duality}
	Suppose that $a$ is even-valued.
	The maximum of $v_{\cal E}(f) - a(f)$ over all multiflows $f$ is equal to the minimum of
	$
	\sum_{i \in V \setminus S} 2 c(i) r(i)
	$
	over all potentials $(p,r)$. 
\end{Thm}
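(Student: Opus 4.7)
The plan is to prove the two directions separately: weak duality by a direct telescoping computation from the conditions defining a potential, and strong duality via LP duality combined with a combinatorial construction of the tree-valued map $p$.

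For weak duality, fix a multiflow $f = ({\cal P}, \lambda)$ and a potential $(p,r)$. For each $P \in {\cal P}$ with endpoints $s_P, t_P$, the triangle inequality for $d = d_{\mGamma^*}$ along the node-sequence of $P$ and the edge-wise bound (p2) give
\[
d(q_{s_P}, q_{t_P}) = d(p(s_P), p(t_P)) \leq \sum_{ij \in E(P)} d(p(i), p(j)) \leq \sum_{ij \in E(P)} \bigl(a(ij) + r(i) + r(j)\bigr).
\]
Since $r$ vanishes on terminals by (p3), the $r$-contribution along $P$ telescopes to $2 \sum_{i \in V(P) \setminus S} r(i)$. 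Multiplying by $\lambda(P)$, summing over $P \in {\cal P}$, and applying the node-capacity constraint~(\ref{eqn:node_cap}) yields $v_{\cal E}(f) - a(f) \leq 2 \sum_{i \in V \setminus S} c(i) r(i)$.

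For strong duality, I would start from the LP relaxation of the primal. Its LP-dual (with nonnegative multipliers $w(i)$ for each node-capacity constraint) reads
\[
\min \Bigl\{ \sum_{i \in V \setminus S} c(i) w(i) \;\Bigm|\; \sum_{i \in V(P) \setminus S} w(i) \geq d(q_{s_P}, q_{t_P}) - a(P) \text{ for every $S$-path } P, \; w \geq 0 \Bigr\},
\]
and LP duality equates this with $\max_f \{ v_{\cal E}(f) - a(f) \}$. It then remains to exhibit a potential $(p, r)$ achieving the LP-dual value under the correspondence $w = 2r$. The approach is in two steps: (i) produce an LP-dual optimum $w^*$ that is integer-valued, so that $r^* := w^*/2$ is half-integer, using the even-integrality of $a$ and a total-dual-integrality/uncrossing argument on the tight path inequalities; and (ii) construct $p : V \to \mGamma^*$ compatible with $r^*$ in the sense of (p1)--(p3). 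For (ii), $p$ is forced to $p(s) = q_s$ on terminals, and is then propagated to nonterminal nodes by placing each $i$ at a point of $\mGamma^*$ determined by the tight dual constraints on $S$-paths through $i$; complementary slackness with primal flow-paths forces these positions to agree across edges so that (p2) holds.

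The main obstacle is step (ii). Producing a tree-valued $p$ from a half-integer LP-dual optimum $r^*$ with (p2) valid on every edge amounts to arranging the ``$r^*$-radius balls'' around the prospective positions $p(i)$ in $\mGamma^*$ coherently along shortest paths in $\mGamma$. The key structural input is the acyclicity of $\mGamma$: in a tree, locally determined positions can be glued along unique geodesics, so placement decisions propagate without conflict. I expect this step to require either an inductive construction along a BFS order on $V$ in the primal residual graph, or a fixed-point/uncrossing argument on the family of tight LP-dual constraints. The even-integrality of $a$ and the edge-length $1/2$ on $\mGamma^*$ are exactly matched so that half-integrality of $r^*$ is the correct integrality class for (p1), closing the duality.
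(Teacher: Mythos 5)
Theorem~\ref{thm:duality} is cited from~\cite{HHMPA}; this paper proves only the weak-duality direction, as the identity~(\ref{eqn:weak_duality}) in the proof of Lemma~\ref{lem:optimality'}, and states explicitly that only this direction is used. Your weak-duality argument is correct and is essentially the paper's: you phrase it path-by-path (triangle inequality for the tree metric, then (p2) edge by edge, then telescoping of $r$ using (p3), then the node-capacity constraint), whereas the paper packages the same facts as a global rearrangement of $\sum_{i} 2 c_i r_i - (v_{\cal E}(f) - a(f))$ into three manifestly nonnegative sums. These are equivalent.

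The strong-duality half has a genuine gap, which you flag yourself. Passing from an LP-dual optimum $w^*$ to a tree-valued $p$ satisfying (p1)--(p3) with $r = w^*/2$ is not a routine lift: the LP-dual constraints are path-global and involve only $w$, whereas (p2) is edge-local and requires a globally consistent placement of $p$ in $\mGamma^*$. A naive BFS-propagation of $p$ from terminals can run into conflicts when propagation fronts meet, because not every LP-dual optimum $w^*$ admits a compatible $p$; the choice of which optimum to lift must be made jointly with the construction. Neither of the two strategies you sketch is developed far enough to close this. Even the half-integrality of $w^*$ in step (i) is not a standard TDI fact for this constraint system and is itself part of what \cite{HHMPA} proves. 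The cited reference handles the whole package by re-interpreting the LP-dual as a facility-location problem on $\mGamma$ and proving a structure theorem for its optima. Note also that the present paper never needs the strong direction as an input: the dual descent algorithm of Section~\ref{sec:algo} directly produces a potential and a multiflow satisfying (o1)--(o3), and the weak-duality identity then certifies both as optimal, so strong duality emerges as a by-product of the algorithm rather than being invoked.
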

\begin{Lem}[\cite{HHMPA}]\label{lem:optimality'}
	Suppose that $a$ is even-valued.
	A multiflow $f = ({\cal P}, \lambda)$ and a potential $(p,r)$ 
	are both optimal if and only if they satisfy the following conditions:
	\begin{itemize}
		\item[{\rm (o1)}] For each path $P$ in ${\cal P}$ with $\lambda(P) > 0$, it holds 
		$\displaystyle
		d(q_{s_P},q_{t_P}) = \sum_{ij \in E(P)} d(p(i), p(j)). 
		$
		\item[{\rm (o2)}] For each edge $ij$ with $f(ij) > 0$, it holds
		$
		d(p(i),p(j)) - r(i) - r(j) = a(ij).
		$ 
		\item[{\rm (o3)}] For each nonterminal node $i$ with $r(i) > 0$, it holds $f(i) = c(i)$.
	\end{itemize}	
\end{Lem}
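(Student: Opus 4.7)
The plan is to derive the lemma from the min--max equality of Theorem~\ref{thm:duality} by complementary slackness. Concretely, I would first establish weak duality
\[
v_{\cal E}(f) - a(f) \;\le\; \sum_{i \in V \setminus S} 2 c(i) r(i)
\]
for every feasible multiflow $f$ and every potential $(p,r)$, as a chain of three inequalities, each tight under exactly one of (o1), (o2), (o3). Then Theorem~\ref{thm:duality} identifies the two sides as the primal and dual optima, so $f$ and $(p,r)$ are simultaneously optimal if and only if all three inequalities are tight, which will read off as the conjunction of (o1), (o2), and (o3).

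The first inequality is path-by-path. For each $P \in {\cal P}$, condition (p3) forces $p(s_P) = q_{s_P}$ and $p(t_P) = q_{t_P}$, so the triangle inequality for the shortest-path metric $d = d_{\mGamma^*}$ gives
\[
d(q_{s_P}, q_{t_P}) \;\le\; \sum_{ij \in E(P)} d(p(i), p(j)),
\]
with equality precisely when (o1) holds. Multiplying by $\lambda(P) \ge 0$ and regrouping the right-hand side by edges rearranges $v_{\cal E}(f)$ into $\sum_{ij \in E} d(p(i),p(j))\, f(ij)$. The second inequality is edge-by-edge: (p2) gives $d(p(i),p(j)) \le a(ij) + r(i) + r(j)$, an equality precisely on edges satisfying (o2). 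Multiplying by $f(ij) \ge 0$ and summing produces
\[
\sum_{ij \in E} d(p(i), p(j))\, f(ij) \;\le\; a(f) + \sum_{ij \in E} (r(i)+r(j))\, f(ij).
\]

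The only part that is not purely mechanical is the node/terminal bookkeeping in the last sum. Each path $P$ meets an internal nonterminal $i \in V(P)$ in exactly two of its incident edges, and meets a terminal endpoint $s$ in exactly one; since $r(s) = 0$ for all $s \in S$ by (p3), the terminal contributions vanish and
\[
\sum_{ij \in E}(r(i)+r(j))\, f(ij) \;=\; \sum_{i \in V \setminus S} 2 r(i)\, f(i).
\]
The third inequality is then node-by-node: since $r(i) \ge 0$ and $f(i) \le c(i)$ by the node-capacity constraint~(\ref{eqn:node_cap}), one has $2 r(i) f(i) \le 2 r(i) c(i)$, with equality iff $r(i) = 0$ or $f(i) = c(i)$, i.e.\ (o3).

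Assembling the three inequalities yields the weak duality bound, and equality throughout is equivalent to the conjunction of (o1), (o2), (o3). Combining this with Theorem~\ref{thm:duality} completes the equivalence. The main obstacle I foresee is precisely the node/terminal accounting step above: one has to use (p3) to discard the terminal contribution and exploit that every path hits a nonterminal node through two edges, so that the edge-sum collapses cleanly to a node-sum. Everything else is a standard triangle--potential--capacity chain, with signs and multiplicities already lined up so that the three slacks aggregate into the duality gap.
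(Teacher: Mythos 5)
Your proof is correct and follows essentially the same approach as the paper: both establish the weak duality
\[
\sum_{i \in V \setminus S} 2 c(i) r(i) - \bigl(v_{\cal E}(f) - a(f)\bigr) \;\geq\; 0
\]
by decomposing the gap into three nonnegative slacks --- the path slack (triangle inequality, tight under (o1)), the edge slack (feasibility (p2), tight under (o2)), and the node slack (capacity and $r \geq 0$, tight under (o3)) --- using exactly the same two rewriting identities (path-to-edge for the $d$-terms, edge-to-node for the $r$-terms, with the terminal contributions dropping out by (p3)). The only difference is one of scope: the paper explicitly proves just the ``if'' direction and states it will only use that half together with weak duality, whereas you also supply the converse by invoking Theorem~\ref{thm:duality} to conclude that joint optimality forces equality throughout the chain; this is the natural and correct completion, not a different method.
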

We will use the if part (and the weak duality in Theorem~\ref{thm:duality}) only, 
which is proved for completeness.
\begin{proof}
	(If part). 
	For any multiflow $f = ({\cal P},\lambda)$ and any potential $(p,r)$,
	 the difference $\sum_{i \in V \setminus S} 2 c_i r_i  - ( v_{\cal E}(f)- a(f))$ is equal to
	\begin{eqnarray}\label{eqn:weak_duality}
	  & & \sum_{i \in V \setminus S} 2 (c_i - f_i) r_i + \sum_{ij \in E} f_{ij}( a_{ij} - d(p_i,p_j) + r_i + r_j) \nonumber \\
	 && \quad + \sum_{P \in {\cal P}} \lambda(P) \left( 
	  \sum_{ij \in E(P)} d(p_i, p_j) - d(q_{s_P}, q_{t_P})    
	  \right)  \geq 0,
	\end{eqnarray}
	where we use 
	\begin{eqnarray*}
	&& \sum_{ij \in E} f_{ij} d(p_i,p_j) = \sum_{P \in {\cal P}} \lambda(P) \sum_{ij \in E(P)} d(p_i,p_j), \\
	&& \sum_{i \in V \setminus S} 2 f_{i} r_i = \sum_{ij \in E} f_{ij}(r_i+r_j).
	\end{eqnarray*}
	Thus, if $f$ and $(p,r)$ satisfy conditions (o1), (o2), and (o3), 
	then the equality holds in (\ref{eqn:weak_duality}), 
	and both $f$ and $(p,r)$ are optimal.
\end{proof}

\paragraph{Nondegenerate case.}
An instance $(N, {\cal E})$
is said to be {\em nondegenerate} if the edge-cost $a$ is positive even-valued 
and the degree of each node in $\mGamma$ is at most $3$.
Suppose that $(N, {\cal E})$ is nondegenerate.
We further assume, for notational simplicity, that tree $\mGamma$ 
has no vertex of degree one 
(by attaching paths of infinite length).
Let $\mGamma_2$ and $\mGamma_3$ denote the sets of vertices of $\mGamma$ 
with degree 2 and 3, respectively.
For a vertex $v$ in $\mGamma$, the neighbors of $v$ in $\mGamma$
are denoted by $v_{\to 1}, v_{\to 2}$ if $v \in \mGamma_2$ and 
$v_{\to 1}, v_{\to 2}, v_{\to 3}$ if $v \in \mGamma_3$.
Consider the edge-subdivision $\mGamma^*$ of $\mGamma$.
%
For vertex $v \in \mGamma^*$, the neighbors of $v$ in $\mGamma^*$
are denoted by $v_{\to^* 1}, v_{\to^* 2}$ if $v \in \mGamma_2$ or $v \in \mGamma^* \setminus \mGamma$, and $v_{\to^*1}, v_{\to^* 2}, v_{\to^* 3}$ if $v \in \mGamma_3$.
Let $\mGamma^*_{v, k}$ denote
the connected component of $\mGamma^* - v$ containing $v_{\to^* k}$.

We are going to characterize the flow support of an optimal multiflow.
Let $(p,r)$ be a potential.
Motivated by (o2), 
define the edge subset $E_{p,r}$ by
\begin{equation*}
E_{p,r} := \{ij \in E \mid d(p(i),p(j)) - r(i) - r(j) = a(ij)\}.
\end{equation*}
For a nonterminal node $i$, 
let $\delta_{p, k}(i) (= \delta_{p,r,k}(i)) $ denote the set of edges $ij \in E_{p,r}$
with $p(j) \in \mGamma^*_{p(i), k}$.
Since each edge cost $a(ij)$ is positive, it holds
$p(i) \neq p(j)$ for $ij \in E_{p,r}$.
Thus $\delta_{p, k}(i)$ for $k = 1,2,3$ (or $k=1,2$) partition the set $\delta \{i\}$ 
of all edges in $E_{p,r}$ incident to $i$.

A {\em $(p,r)$-admissible support} is a function $\zeta: E_{p,r} \to \RR_+$ 
satisfying the following conditions, where we use the notational convention $\zeta(\delta_{p,k}(i)) := \sum_{e \in \delta_{p,k}(i)} \zeta(e)$:
\begin{itemize}
	\item[(a1)] For each nonterminal node $i$ with $p(i) \not \in \mGamma_3$, it holds
	$
	\zeta( \delta_{p,1}(i)) = \zeta(\delta_{p,2}(i)) \leq c(i).   
	$
	\item[(a2)] For each nonterminal node $i$ with $p(i) \in \mGamma_3$, it holds
	\begin{eqnarray*}
	&& \ \  \zeta( \delta_{p, 1}(i)) + \zeta( \delta_{p, 2}(i)) + \zeta( \delta_{p, 3}(i)) \leq 2 c(i), \\ 
	&& \ \  \zeta( \delta_{p, 1}(i)) - \zeta( \delta_{p, 2}(i)) - \zeta( \delta_{p, 3}(i)) \leq 0, \\ 
	&& - \zeta( \delta_{p, 1}(i)) + \zeta( \delta_{p, 2}(i)) - \zeta( \delta_{p, 3}(i)) \leq 0, \\ 
	&& - \zeta( \delta_{p, 1}(i)) - \zeta( \delta_{p, 2}(i)) + \zeta( \delta_{p, 3}(i)) \leq 0.
	\end{eqnarray*}
	\item[(a3)] For each nonterminal node $i$ with $r(i) > 0$, it holds
	$
	\zeta (\delta \{i\}) = 2c(i).
	$
	\item[(a4)] For each edge $e$, $\zeta(e)$ is a half-integer, and 
	for each nonterminal node $i$, $\zeta(\delta\{i\})$ is an integer. 
\end{itemize}
It is not difficult to see from Lemma~\ref{lem:optimality'} that for any half-integral optimal multiflow $f$, 
the flow-support $\zeta$ of $f$, defined by $\zeta(e) := f(e)$, 
is a $(p,r)$-admissible support. 
Indeed, the inequality in (a1) and the first inequality in (a2) 
are nothing but the capacity constraints.
Also (a3) corresponds to (o3).
The equality in (a1) and 
the last three inequalities in (a2) come from (o1), 
which says that a flow entering $i$ from $\delta_{p,k}(i)$ goes out 
through $\delta_{p,k'}(i)$ with $k' \neq k$.
Furthermore, the converse also holds. 
\begin{Lem}[\cite{HHMPA}]\label{lem:mn}
	Let $(p,r)$ be a potential. 
	If a $(p,r)$-admissible support $\zeta$ exists and is given, 
	then $(p,r)$ is optimal and
	a half-integral optimal multiflow is obtained in $O(n m)$ time.
\end{Lem}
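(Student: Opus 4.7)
The strategy is to decompose the admissible support $\zeta$ into a half-integral multiflow $f = ({\cal P}, \lambda)$ with $f(e) = \zeta(e)$ for every $e$, and then verify that $f$ together with $(p,r)$ satisfies the three optimality conditions of Lemma~\ref{lem:optimality'}. Condition (o2) is free: $\zeta$ is supported on $E_{p,r}$ by definition. Condition (o3) follows from (a3) and the identity $f(i) = \zeta(\delta\{i\})/2$ at each nonterminal $i$. The real work is to produce a decomposition whose paths, mapped through $p$, are non-backtracking (hence geodesic) walks in $\mGamma^*$; this is exactly condition (o1).

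The plan for the decomposition is to prescribe a local ``pairing'' at every nonterminal $i$ telling which half-unit of $\zeta$ on one incident edge is continued through $i$ on which other incident edge. If $p(i) \notin \mGamma_3$, (a1) says $\zeta(\delta_{p,1}(i)) = \zeta(\delta_{p,2}(i))$ and I pair the half-units of $\delta_{p,1}(i)$ with those of $\delta_{p,2}(i)$ arbitrarily. If $p(i) \in \mGamma_3$, I set
\[
\alpha_{kk'} := \tfrac{1}{2}\bigl( \zeta(\delta_{p,k}(i)) + \zeta(\delta_{p,k'}(i)) - \zeta(\delta_{p,k''}(i)) \bigr), \quad \{k,k',k''\}=\{1,2,3\};
\]
nonnegativity of $\alpha_{kk'}$ is exactly the last three inequalities of (a2) (matching the simplex description of Lemma~\ref{lem:flow_conservation}), and half-integrality follows from (a4) because $\zeta(\delta\{i\})$ is integral. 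Since $\alpha_{kk'}+\alpha_{kk''}=\zeta(\delta_{p,k}(i))$, I split the half-units in $\delta_{p,k}(i)$ into $\alpha_{kk'}$ matched with $\delta_{p,k'}(i)$ and $\alpha_{kk''}$ matched with $\delta_{p,k''}(i)$. Building an auxiliary graph $H$ with $2\zeta(e)$ parallel half-unit edges in place of $e$ and the above matchings at nonterminal nodes, its components are cycles and $S$-paths; discarding the cycles and weighting each $S$-path by $1/2$ produces the desired $f$, which satisfies $f(e)=\zeta(e)$. Since at every interior node $i_k$ of an extracted path the pairing always matches an edge in $\delta_{p,k_1}(i_k)$ with one in $\delta_{p,k_2}(i_k)$ for $k_1 \neq k_2$, the vertices $p(i_{k-1})$ and $p(i_{k+1})$ lie in different components of $\mGamma^* - p(i_k)$. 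Hence the walk $p(i_0), \ldots, p(i_\ell)$ does not backtrack and is therefore a geodesic between $q_{s_P}$ and $q_{t_P}$, giving (o1).

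All pairings and the graph $H$ can be constructed in $O(m)$ time, and standard path decomposition on $H$ runs in $O(nm)$ time overall, giving the stated complexity. I expect the main obstacle to be recognizing and exploiting the connection between the bisubmodular constraints (a1)--(a2) and the simplex of Lemma~\ref{lem:flow_conservation}: the inequalities in (a2) are exactly what is needed to produce well-defined, nonnegative, half-integral local pairings, and everything downstream --- extracting paths and checking the geodesic property --- is routine combinatorial book-keeping on the resulting matching structure.
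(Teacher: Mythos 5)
Your pairing construction is conceptually the same as the paper's Algorithm~1 (just phrased as a global matching rather than a sequential extraction), and the key observation — that cross-side pairings force every extracted walk to be non-backtracking and hence geodesic in the tree $\mGamma^*$ — is exactly the paper's route to (o1). The $\alpha_{kk'}$-split at trivalent images is precisely where (a2) is used, and you have this right. Two points, however, need more care.

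First, you assert $f(e)=\zeta(e)$ after ``discarding the cycles,'' but that equality only holds if there are no cycle components at all (and no component that returns to the same terminal). You in fact already have the tools to prove this: a cycle component of $H$ would yield a closed, locally geodesic walk of positive length in $\mGamma^*$, and by the same tree argument one would get $0 = d(p(i_0),p(i_0)) = \sum_l d(p(i_{l-1}),p(i_l)) > 0$, a contradiction. This should be spelled out, since both (o3) and the identity $f(i)=\zeta(\delta\{i\})/2$ you invoke rely on $f(e)=\zeta(e)$.

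Second, and more substantively, the auxiliary graph $H$ with $2\zeta(e)$ parallel half-unit edges per edge is pseudopolynomial in size: $\zeta(e)$ may be as large as $\max_i c(i)$, so $H$ has $\Theta(\sum_e \zeta(e))$ edges. ``Constructing $H$ in $O(m)$ time'' and ``standard path decomposition on $H$ in $O(nm)$ time'' are therefore not justified, and with arbitrary local pairings the output multiflow can even have a pseudopolynomial number of distinct paths. The paper's Algorithm~1 circumvents this by working with the weights directly: when tracing a path, the multiplicity $\mu$ is taken as the minimum over $\mu$, the current edge weights $\zeta(e)$, and the singular-node slacks $\bigl(\zeta(\delta_{p,k}(i))+\zeta(\delta_{p,k'}(i))-\zeta(\delta_{p,k''}(i))\bigr)/2$ encountered, so each extracted path zeroes at least one of $O(m)$ combinatorial quantities; hence only $O(m)$ paths are produced, each found in $O(n)$ time. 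Without this min-update your procedure does not deliver the stated $O(nm)$ time bound nor a polynomial-size output ${\cal P}$.
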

Thus our problem is to find a potential $(p,r)$ 
such  that a $(p,r)$-admissible support exists.
Observe that a $(p,r)$-admissible support 
is viewed as an edge-weight $\zeta$ whose degree vector 
$\zeta(\delta_{p,k}(i))$ $( i \in V \setminus S, k=1,2,3)$
belongs to a bisubmodular polyhedron described by $\mDelta_{c(i)}$.
Namely, finding a $(p,r)$-admissible support 
is a bisubmodular flow feasibility problem.
In Section~\ref{sec:algo}, 
by using submodular extension $\mDelta^*_{c(i)}$ (Section~\ref{subsec:ext})
we reduce this problem to a maximum submodular flow problem. 

An algorithm for Lemma~\ref{lem:mn} is the following.
\begin{description}
	\item[Algorithm 1:] Construction of an optimal multiflow from a $(p,r)$-admissible support.
	\item[Input:] A potential $(p,r)$ and a $(p,r)$-admissible support $\zeta$.
	\item[Output:] A half-integral optimal multiflow $f = ({\cal P},\lambda)$.
	\item[Step 0:] ${\cal P} = \emptyset$.
	\item[Step 1:] Choose a terminal $s$ and an edge $sj$ with $\zeta(sj) > 0$.
	If such a terminal does not exist, then $f = ({\cal P}, \lambda)$ is a half-integral optimal multiflow; stop.
	Otherwise let $j_0 \leftarrow s$, $j_1 \leftarrow j$,   $\mu \leftarrow \zeta(sj)$, $l \leftarrow 1$, and go to step 2.
	\item[Step 2:] If $j_l$ is a terminal, then add path $P = (j_0, j_1,\ldots, j_l)$ to ${\cal P}$ with flow-value $\lambda(P) := \mu$, 
	let $\zeta (e) \leftarrow \zeta(e) - \mu$ for each edge $e$ in $P$, and go to step 1.
	Otherwise go to step 3.
	\item[Step 3:]
	If $p(j_l) \not \in \varGamma_3$ and $j_{l-1}j_l \in \delta_{p, k}(j_l)$ for $k \in \{1,2\}$, then
	choose an edge $j_{l} j_{l+1}$ from $\delta_{p, k'} (j_l)$ with $k' \neq k$ and  $\zeta(j_{l} j_{l+1}) > 0$, and let $\mu \leftarrow \min \{ \mu, \zeta(j_lj_{l+1}) \}$.
	
	If $p(j_l) \in \varGamma_3$ and $j_{l-1}j_l \in \delta_{p, k}(j_l)$ for $k \in \{1,2,3\}$,
	then choose an edge $j_{l} j_{l+1}$ from $\delta_{p, k'} (j_l)$ 
	with  $k' \neq k$, $\zeta(j_{l} j_{l+1}) > 0$, and $\zeta(\delta_{p, k}(j_l)) +  \zeta(\delta_{p, k'}(j_l)) - \zeta(\delta_{p, k''}(j_l)) > 0$ for $k'' \in \{1,2,3\} \setminus \{k,k'\}$.
	Let
 	\[
	 \mu \leftarrow \min \left\{ \mu, \zeta(j_{l}j_{l+1}), \frac{\zeta(\delta_{p, k}(j_l)) +  \zeta(\delta_{p, k'}(j_l)) - \zeta(\delta_{p, k''}(j_l))}{2} \right\}.
	 \]
	 Let $l \leftarrow l+1$ and go to step 2.
\end{description}
This algorithm is essentially the proof of \cite[Lemma 4.5]{HHMPA}.
Let us sketch the correctness of the algorithm; we show that the resulting multiflow $f$ satisfies 
the conditions (o1),(o2), and (o3) in Lemma~\ref{lem:optimality'} with $(p,r)$.
The condition (o2) follows from $f(e) = 0$ for $e \in E \setminus E_{p,r}$.  
In step~3, we can always choose a required edge by (a1) and (a2).
Also $\zeta$ still satisfies the conditions (a1), (a2), and (a4), thanks to the way of the update.
Each produced path $(j_0, j_1,j_2,\ldots, j_m)$  satisfies
\[
d(p(j_{l-1}), p(j_l)) + d(p(j_l), p(j_{l+1})) = d(p(j_{l-1}), p(j_{l+1})) \quad (1 \leq l \leq m-1)
\]
since
$p(j_{l-1}) \in \mGamma^*_{p(j_l),k}$ and $p(j_{l+1}) \in \mGamma^*_{p(j_l),k'}$ 
for $k \neq k'$.
Also each $d(p(j_{l-1}), p(j_l))$ is positive (since $a$ is positive).
Since $\mGamma$ is a tree, we have $d(p(j_0), p(j_m)) = \sum_{l=1}^m d(p(j_{l-1}), p(j_l))$; 
see e.g., \cite[Lemma 3.9]{HH14extendable}.
Thus each produced path satisfies (o1), and has no repeated node.
By the same argument, every edge $e$ with $\zeta (e) > 0$
extends to an $S$-path consisting of edges $e'$ with $\zeta(e') > 0$
satisfying~(o1).  
This means that if no terminal $s$ is chosen in step 1, 
then $\zeta = 0$.
By (a3), the resulting multiflow $f$ satisfies (o3).
Notice that $\mu$ is a half-integer by (a4).
Hence $f$ is half-integral and optimal.
Once an $S$-path $P$ is obtained, 
$\zeta$ becomes zero on some edge, 
or $\zeta(\delta_{p,k}(i)) + \zeta(\delta_{p,k'}(i)) - \zeta(\delta_{p,k''}(i))$ 
becomes zero on some node $i$; they remain zero in subsequent iterations.
Thus the algorithm terminates after $O(m)$ paths are obtained, 
where each path is found in $O(n)$ time by keeping $\{e \in \delta_{p,k}(i) \mid \zeta(e) > 0\}$ $(i \in V, k =1,2,3)$ as lists.

We estimate the range in which an optimal potential exists.
Let $\mGamma_0$ denote the minimal subtree in $\mGamma$ containing $\{ q_s\}_{s \in S}$, 
and let $d(\mGamma_0)$ denote the diameter of $\mGamma_0$, i.e., $d(\mGamma_0) := \max_{u,v \in \mGamma_0} d(u,v)$. 
\begin{Lem}\label{lem:exists}
	There is an optimal potential $(p,r)$ with
	$p(i) \in \mGamma_0$ and $r(i) \leq d(\mGamma_0)$ for $i \in V$.
\end{Lem}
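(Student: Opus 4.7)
My plan is to modify any optimal potential $(p,r)$ in two stages: first to push $p$ into $\mGamma_0$, then to cap $r$ by $d(\mGamma_0)$. Let $\pi\colon\mGamma^*\to\mGamma_0$ denote the nearest-vertex retraction (which fixes $\mGamma_0$, sends every vertex of a branch of $\mGamma\setminus\mGamma_0$ attached at $w\in\mGamma_0$ to $w$, and breaks midpoint ties in $\mGamma_0^*$ by a fixed rule). The first-stage candidate is $p'(i):=\pi(p(i))$ and $r'(i):=r(i)-d(p(i),p'(i))$. Condition (p3) is immediate since $\pi$ fixes each $q_s$. Condition (p1) holds because $p'(i)\in\mGamma_0\subseteq\mGamma$ requires $r'(i)\in\ZZ_+$, and $r(i)$ and $d(p(i),p'(i))$ have matching parity (both are integer exactly when $p(i)$ is an original vertex of $\mGamma$). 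For (p2), a short tree computation yields
\[
d(p'(i),p'(j))=d(p(i),p(j))-d(p(i),p'(i))-d(p(j),p'(j))
\]
whenever the unique path $p(i)\to p(j)$ passes through both $p'(i)$ and $p'(j)$, and on such edges the inequality is inherited from $(p,r)$ directly.

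The obstruction is the residual case in which $p(i)$ and $p(j)$ lie on a common sub-ray of a single branch of $\mGamma\setminus\mGamma_0$, so that $\pi$ collapses them and (p2) can fail. To handle it I would perform Stage~1 iteratively, one depth at a time. Let $D:=\max_i d(p(i),\mGamma_0)$ and $J:=\{i\mid d(p(i),\mGamma_0)=D\}$, and simultaneously replace each $p(i)$ for $i\in J$ by the unique neighbor of $p(i)$ in $\mGamma^*$ closer to $\mGamma_0$, decreasing $r(i)$ by $1/2$. Maximality of $D$ forces every neighbor $j$ of $i$ in $N$ to satisfy $d(p(j),\mGamma_0)\leq D$, so the path $p(i)\to p(j)$ leaves $p(i)$ along the edge toward $\mGamma_0$; the changes in $d$ and in $r(i)+r(j)$ then cancel on each edge of $N$, except in the single case $p(i)=p(j)$ with $i,j\in J$, where (p2) tightens by $1$ and is absorbed by the nondegenerate bound $a(ij)\geq 2$. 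A node at depth $D$ with $r(i)=0$ would drive $r'$ negative under a single-step move; I would handle such a node separately by moving $p(i)$ two steps along $\mGamma^*$ (one full edge of $\mGamma$), preserving parity and the value $r(i)=0$, with the unit change of $d$ on incident edges again absorbed by $a\geq 2$. Iterating drives $D$ to $0$ in finitely many rounds, yielding an optimum with $p(i)\in\mGamma_0$ for every $i$.

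For Stage~2, once $p(i)\in\mGamma_0$ one has $d(p(i),p(j))\leq d(\mGamma_0)$ for every pair. Replace $r(i)$ by $r''(i):=\min(r(i),d(\mGamma_0))$; integrality of $d(\mGamma_0)$ preserves (p1), and (p2) is preserved because, whenever the cap activates at $i$, one has $r''(i)+r''(j)\geq d(\mGamma_0)\geq d(p(i),p(j))\geq d(p(i),p(j))-a(ij)$. The objective is non-increasing, so the resulting $(p',r'')$ remains optimal and meets both conclusions of the lemma. The technical heart, and the step I expect to be the main obstacle, is the careful verification of (p2) through Stage~1's iterative retraction, in particular the interaction of moved versus unmoved adjacent nodes and the $r(i)=0$ corner case; every other step is routine.
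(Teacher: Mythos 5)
Your proposal takes a genuinely different route from the paper's for Stage~1. The paper does not retract by depth: it picks the specific node $i^*$ whose position is (reachable from some $q_s$ via a path leaving $\mGamma_0$ and) farthest from that $q_s$, moves the set $X=\{j : p_j = p_{i^*}\}$ one step in the direction of $q_s$, and crucially alternates between two cases: when $p_{i^*}\in\mGamma^*\setminus\mGamma$ it makes a half-step and decreases $r$ by $1/2$ (possible because $r\ge 1/2$ there by (p1)); when $p_{i^*}\in\mGamma$ it makes a \emph{full} step and leaves $r$ unchanged. The maximality of $d(q_s,p_{i^*})$ is what guarantees $d(p_i,p_j)$ cannot increase in the integer case, which sidesteps the $r=0$ corner case entirely. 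You instead sweep all nodes at maximum depth $D:=\max_i d(p(i),\mGamma_0)$ at once and split on $r(i)>0$ (half-step, decrease $r$) versus $r(i)=0$ (full step, keep $r$). This is sound, but the verification of (p2) becomes a multi-way case analysis over pairs (moved/not moved; $r=0$/$r>0$; $p_i=p_j$/$p_i\neq p_j$; branch point above or at depth $D-1$), which you flag as the open step rather than carrying out. Also note that your invocation of $a\ge 2$ is genuinely needed only in the mixed case where one endpoint of an edge has $r(i)=0$ and the other has $r(j)>0$ with $p_i=p_j$; in the pure $r>0$ case you single out, the slack $-r_i-r_j\le -1$ already absorbs the $+1$ change and any nonnegative $a$ suffices. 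For Stage~2 your one-shot cap $r''(i):=\min(r(i),d(\mGamma_0))$ is cleaner than the paper's repeated unit decrements; the argument that activating the cap forces $r''(i)+r''(j)\ge d(\mGamma_0)\ge d(p_i,p_j)$ is exactly right and uses that all $p_i$ now lie in $\mGamma_0$ so (p1) is preserved by integrality of $d(\mGamma_0)$. In short: correct strategy, different decomposition (depth-sweep vs.\ farthest-position), cleaner second stage, but the central case analysis in Stage~1 is asserted rather than proved.
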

\begin{proof}
	Let $(p,r)$ be a potential.
	Suppose that there is a nonterminal node $i^*$ with $p_{i^*} \not \in \mGamma_0$. 
	Take such $i^*$ having the maximum distance 
	$d(p_{i^*},\varGamma_0) := \min_{u \in \mGamma_0} d(p_{i^*}, u)$ from $\mGamma_0$.
	We can assume that $\mGamma^*_{p_{i^*},1}$ contains $\mGamma_0$.
	Let $X$ be the set of nodes $j$ with $p_j = p_{i^*}$.
	Suppose that $p_{i^*} \in \mGamma^* \setminus \mGamma$. Then $r_j \geq 1/2$ for all $j \in X$. 
	For each $j \in X$, 
	replace $(p_j, r_j)$ by $(p_{j \to^* 1}, r_j - 1/2)$.
	For an edge $ij$ with $i \in X$ and $j \not \in X$, 
	both $d(p_i, p_j)$ and $r_i + r_j$ decrease by $1/2$, and thus (p2) remains to hold.
	For other edge $ij$,  quantity
	$d(p_i,p_j) - r_i - r_j$ is nonincreasing or remains nonpositive (if $i,j \in X$). 
	The feasibility (p2) still holds (since $a(ij)$ is nonnegative).
	Thus the resulting $(p,r)$ is a potential, and the objective value decreases.
	Suppose that $p_{i^*} \in \mGamma$.
	For each $j \in X$, 
	replace $(p_j,r_j)$ by $(p_{j \to_1}, r_j)$.
	For each edge $ij$, distance $d(p_i,p_j)$ does not increase.
	Thus the feasibility (p2) holds, and the objective value does not change.
	By repeating this procedure, we can make $(p,r)$ 
	so that $p_i \in \mGamma_0$ for $i \in V$, without increasing the objective value.
	Suppose that $r_i > d(\mGamma_0)$ for some $i$; necessarily $r_i \geq 1$.
	For each edge $ij$ connecting $i$, 
	it holds $d(p_i,p_j) - r_i - r_j - a_{ij} \leq -1$ 
	(since $d(p_i,p_j) \leq d(\mGamma_0$) and $d(p_i,p_j) - r_i - r_j$ is an integer).
	Thus we can replace $r_i$ by $r_i - 1$ to decrease the objective value.
	Repeating this procedure, $(p,r)$ satisfies $r_i \leq d(\mGamma_0)$, as required.
\end{proof}

\paragraph{Reduction to a nondegenerate instance.}
Here we explain how to reduce our original problem to a nondegenerate problem.
An instance of the original problem is viewed as a pair of network $N =(V,E,S,c,a)$ 
and a tree-embedding ${\cal E} = (\mGamma, \{v_s\}_{s \in S})$ 
such that $a(e) = 0$ for all edges $e$ and $\mGamma$ is a star with center $v_0$ and leaves $v_s$ $(s \in S)$.
We are going to
construct a nondegenerate instance.
Define edge cost $\tilde a$ by $\tilde a(e) := 2$ for each edge $e \in E$.
Let $\tilde N := (V,E,S,c,\tilde a)$.
Next we define a tree-embedding $\tilde{\cal E} = (\tilde \varGamma, \{q_s\}_{s \in S})$.
Let $\mSigma$ be any (finite) trivalent tree 
with $|S|$ leaves $u_s$ $(s \in S)$ and diameter $D = O(\log |S|)$.
For each $s \in S$, 
consider an infinite path $P_s$ having a vertex $u'_s$ of degree one.
Identify $u_s$ and $u'_s$, i.e., glue $P_s$ and $\mSigma$ at $u_s$.
The resulting infinite tree is denoted by $\tilde \varGamma$.
Define $q_s$ as the vertex in $P_s$ having distance $(2|E| + 1) D$ from $u_s (= u'_s)$.
See Figure~\ref{fig:perturbation} for the construction of $\tilde \varGamma$.
\begin{figure} 
	\begin{center} 
		\includegraphics[scale=0.5]{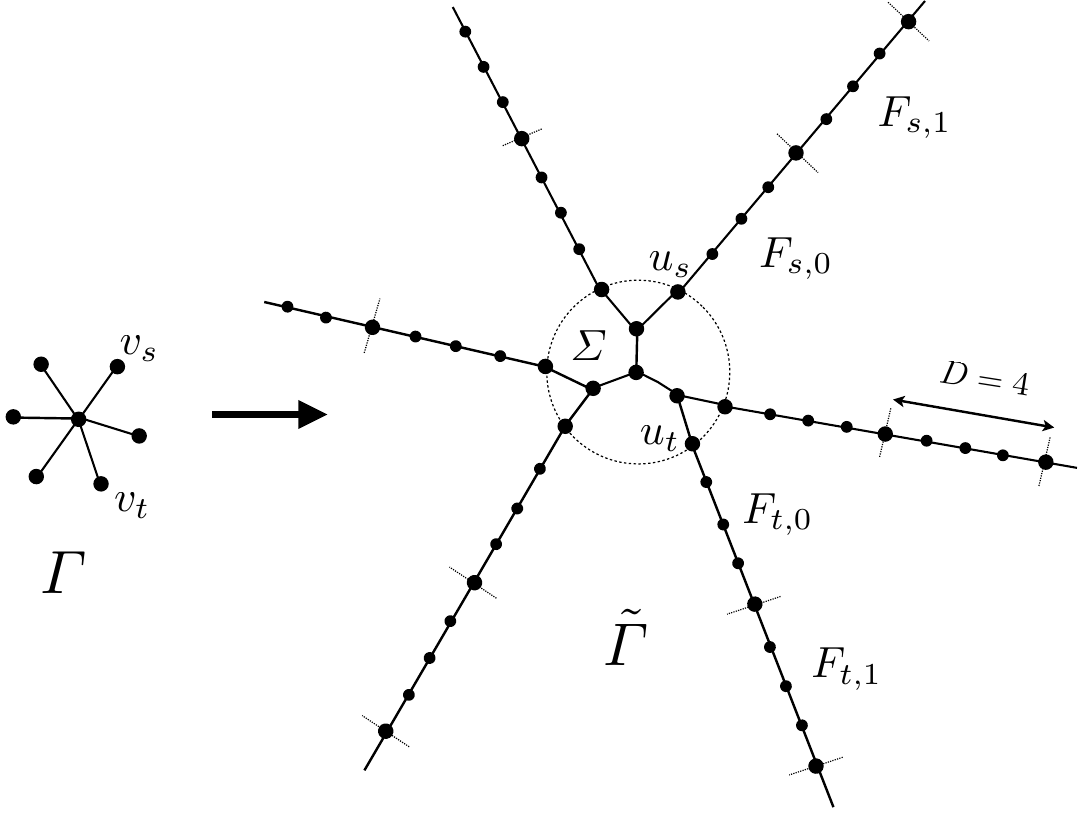}
		\caption{Construction of $\tilde \varGamma$}  
		\label{fig:perturbation}         
	\end{center}
\end{figure} 

Now we obtain a nondegenerate instance $(\tilde N, \tilde {\cal E})$.
Let $(\tilde p,\tilde r)$ and $f = ({\cal P},\lambda)$ 
be an optimal potential and an optimal multiflow, respectively, for this perturbed instance $(\tilde N, \tilde {\cal E})$.
We show that $f$ is a maximum multiflow, i.e., 
optimal for the original instance $(N, {\cal E})$.
We are going to construct an optimal potential $(p,r)$ for $(N, {\cal E})$ from $(\tilde p,\tilde r)$.
Let $B_i$ be the set of vertices $q$ with $d(\tilde p_i, q) \leq \tilde r_i$.
Namely  $B_i$ is the ball with center $\tilde p_i$ and radius $\tilde r_i$.
By (p1), vertices $q$ with $d(\tilde p_i, q) = \tilde r_i$ belong to $\tilde \varGamma$.
Hence we can identify $B_i$ with the subgraph of $\tilde \varGamma$ 
induced by $B_i \cap \tilde \varGamma$.
Then it holds
\begin{equation}
d(B_i,B_j) := \min_{u \in B_i,v \in B_j} d(u,v) = \max \{ 0, d(\tilde p_i, \tilde p_j) - \tilde r_i - \tilde r_j\}.
\end{equation}
For $k=0,1,2,\ldots,2|E|$ and $s \in S$, let $F_{s,k}$ denote the subgraph of $\tilde \varGamma$
consisting of edges $uv$ in $P_s \subseteq \tilde \varGamma$ such that
$kD \leq d(u_s,u) = d(u_s,v)- 1 < (k + 1)D$.
Let $F_k$ be the union of $F_{s,k}$ over $s \in S$.
We say that an edge $ij$ in $E$ {\em hits} $F_k$ if 
$B_i \cap B_j = \emptyset$ and the path between $B_i$ and $B_j$ meets an edge of $F_k$.
For an edge $ij$, it holds $d(B_i,B_j) \leq \tilde{a}_{ij} = 2$.
Therefore
the path between $B_i$ and $B_j$ consists of at most two edges.
Since $2|E| + 1$ subgraphs $F_0, F_1,\ldots,F_{2|E|}$ are edge-disjoint,  
there is an index $k$ such that every edge in $E$ does not hit $F_k$. 
Fix such an index $k$.

For each $s \in S$, choose the edge $e_s$ of $F_{s,k}$ nearest to $q_s$ (furthest from $u_s$).
Delete all $e_s$ from $\tilde \varGamma$.
There are $|S| + 1$ connected components $C_0, C_{s}$ $(s \in S)$, 
where $C_0$ is the connected component containing $\varSigma$,  
and $C_s$ is the connected component containing $q_s$.
For each $i \in V$, define $(p_i,r_i) \in \varGamma^* \times \ZZ/2$ by
\begin{equation}
(p_i,r_i) := \left\{ \begin{array}{ll}
(v_0, 1)  & {\rm if}\ \mbox{$B_i$ contains two $e_s,e_{s'}$,} \\
(\bar v_s, 1/2) & {\rm if}\ \mbox{$B_i$ contains exactly one $e_s$,} \\
(v_s,0) & {\rm if}\ \mbox{$B_i$ is contained in $C_{s}$ for $s \in S \cup \{0\}$,}
\end{array}\right. 
\end{equation}
where $\bar v_s$ denotes the vertex in $\mGamma^* \setminus \mGamma$ obtained by subdividing 
edge $v_0v_s$ in $\mGamma$.
We show that $(p,r)$ is a potential for $(N,{\cal E})$ and satisfies (o1), (o2), and (o3) with $f$.
We first show the feasibility (p2) $d_{\mGamma}(p_i,p_j) \leq r_i + r_j$ for $ij \in E$.
Since $p_i \in \mGamma^* \setminus \mGamma$ implies $r_i = 1/2$,
we may consider the three cases: 
(i) $p_i = v_s$, $p_j = v_0$,
(ii) $p_i = v_s$, $p_j = \bar v_{s'}$ for $s \neq s'$,
and (iii) $p_i = v_s$, $p_j = v_{s'}$ for $s \neq s'$.
%
For (i), $B_j$ necessarily contains two $e_s, e_{s'}$; otherwise $ij$ hits $F_k$ at $e_s$.
This implies $r_j = 1$.
(ii) and (iii) cannot occur since, otherwise, $ij$ hits $F_k$ at $e_s$. 
Thus (p2) holds, and hence $(p,r)$ is a potential; (p1) and (p3) are clearly satisfied.

Next we show (o1),(o2), and (o3) for $f$ and $(p,r)$.
To show (o2),
take an edge $ij$ with $f(ij) > 0$.
By (o2) for $f$ and $(\tilde p,\tilde r)$ in $(\tilde N, \tilde {\cal E})$,
it holds $d_{\tilde \varGamma}(\tilde p_i, \tilde p_j) - \tilde r_i - \tilde r_j  = 2$.
Thus the balls $B_i$ and $B_j$ are disjoint and have distance $2$.
Suppose that $B_i$ has two edges $e_s, e_{s'}$.
Then $B_i$ must meet $F_{s,k}$ for every $s \in S$.
Necessarily $B_j$ cannot be contained by $C_0$.
Also $B_j$ cannot have $e_t$ for any $t \in S \setminus \{s,s'\}$; otherwise $ij$ hits $F_k$.
Thus $B_j$ is contained in $C_t$ for $t \in S$, and
$(p_i,r_i) = (v_0,1)$ and $(p_j,r_j) = (v_t, 0)$ hold.
If $B_i$ has (only one) $e_{s}$ and $B_j$ has (only one) $e_{s'}$, then 
$s$ and $s'$ must be different, and necessarily $(p_i,r_i) = (\bar v_s,1/2)$ 
and $(p_j,r_j) = (\bar v_{s'}, 1/2)$. 
If $B_i$ has only $e_s$ and $B_j$ does not have any of $e_t$, 
then necessarily $B_j$ is contained in $C_s$ or $C_0$; 
hence $(p_i,r_i) = (\bar v_s, 1/2)$ and $(p_j,r_j) = (v_s,0)$ or $(v_0,0)$.
If both $B_i$ and $B_j$ do not have any of $e_s$, 
then both $B_i$ and $B_j$ are contained in $C_s$ for some $s \in S \cup \{0\}$, 
and $p_i = p_j$ and $r_i = r_j = 0$.
In all the cases, it holds $d_{\mGamma}(p_i,p_j) = r_i + r_j$, 
implying (o2).

Consider the condition (o1).
Take a path $P = (s = j_0,j_1,\ldots,j_m = t)$ with $\lambda(P) > 0$.
There is an index $l$ such that $B_{l}$ contains $e_s$; 
otherwise $F_k$ is hit by some edge.
Moreover such an index $l$ is unique.
Otherwise, the balls $B_{j_l}$ and $B_{j_{l'}}$ with $l < l'$ contain $e_s$.
Then $d(\tilde p_{j_l}, \tilde p_{j_{l'}}) - \tilde r_{j_l} - \tilde r_{j_{l'}} < 0$.
However, by (o1) and (o2) for $f$ and $(\tilde p, \tilde r)$, we have
$d(\tilde p_{j_l},\tilde p_{j_{l'}})  = \sum_{i = l}^{l'-1} d(\tilde p_{j_i},\tilde p_{j_{i+1}}) = 
\sum_{i= l}^{l'-1} (\tilde r_{j_i} + \tilde r_{j_{i+1}} + 2)$, 
and $d(\tilde p_{j_l},\tilde p_{j_{l'}})   
  - \tilde r_{j_l} - \tilde r_{j_{l'}} \geq 2 > 0$; this is a contradiction.
Similarly there is a unique index $l'$ such that $B_{l'}$ contains $e_t$.
If $l = l'$, then $(p_{j_0},p_{j_1},\ldots,p_{j_m})$ must be $(v_s,v_s,\ldots,v_s,v_0,v_t,\ldots,v_t)$.
If $l \neq l'$, say $l < l'$, then $(p_{j_0},p_{j_1},\ldots,p_{j_m})$ must be
$(v_s,\ldots,v_s,\bar v_s,v_0,\ldots,v_0,\bar v_t,v_t,\ldots,v_t)$ or $(v_s,\ldots,v_s,\bar v_s,\bar v_t,v_t,\ldots,v_t)$.
Thus we obtain (o1).
Since $\tilde r(i) = 0$ implies $r(i) = 0$, we obtain (o3).
Hence $(p,r)$ is an optimal potential, and $f$ is a maximum multiflow.

\subsection{Discrete convexity and steepest descent algorithm (SDA)}\label{subsec:dc}
Here we briefly introduce a class of discrete convex functions (L-convex functions) 
on a certain graph structure 
and the steepest descent algorithm to minimize them.
We then
explain that our problem falls into the minimization of an L-convex function.
A general theory is given in~\cite{HHprepar}; see also \cite{HH17survey}.

First we equip the space of all potentials with a graph structure.
Let $\ZZ^* (:= \ZZ/2)$ denote the set of half-integers.
Let $\mGamma^* \boxtimes \ZZ^*$ 
denote the set of pairs $(p,r) \in \mGamma^* \times \ZZ^*$
such that $p \in \mGamma$ if and only if $r \in \ZZ$. 
Two points $(p,r)$ and $(p',r')$ are adjacent if and only 
if $p$ and $p'$ are adjacent in $\mGamma^*$  
and $|r - r'| = 1/2$.
Fix an arbitrary vertex $p_0$ of $\mGamma$.
Let $B$ (resp.  $W$) denote the subset of $\mGamma^* \boxtimes \ZZ^*$  
consisting of pairs $(p,r) \in \mGamma \times \ZZ$ with
$d(p,p_0) + r$ even (resp. odd).
Orient each edge of $\mGamma^* \boxtimes \ZZ^*$ 
by $(p,r) \leftarrow (p',r')$ if $(p,r) \in B$ or $(p',r') \in W$.
Namely $B$ is the set of sinks and $W$ is the set of sources.
This orientation is acyclic, 
and induces a partial order $\preceq$ on $\mGamma^* \boxtimes \ZZ^*$.
\begin{figure} 
	\begin{center} 
		\includegraphics[scale=0.6]{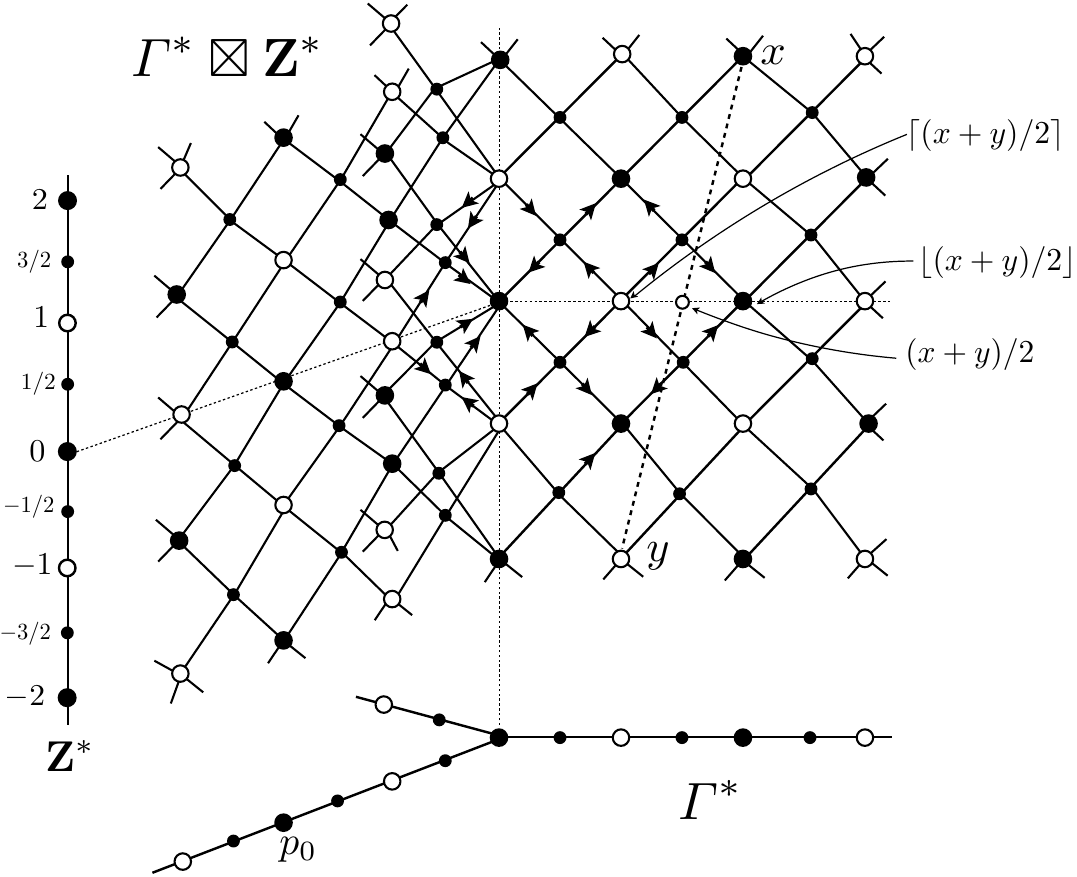}
		\caption{Graph $\mGamma^* \boxtimes \ZZ^*$}  
		\label{fig:building}         
	\end{center}
\end{figure} 
See Figure~\ref{fig:building}, where 
nodes in $B$ and $W$ are colored black and white, respectively.

Next we define midpoint operations on $\mGamma^* \boxtimes \ZZ^*$.
Let $\mGamma^{**}$ denote the edge-subdivision of $\mGamma^*$ with edge-length $1/4$, 
let $\ZZ^{**} (:= \ZZ/4)$ denote the set of quarter-integers, 
and let $\mGamma^{**} \boxtimes \ZZ^{**}$ 
denote the set of pairs $(p,r) \in \mGamma^{**} \times \ZZ^{**}$
such that $p \in \mGamma^*$ if and only if $r \in \ZZ^*$. 
For two points $x = (p,r), x'=(p',r')$ in $\mGamma^* \boxtimes \ZZ^*$, 
there exists a unique {\em midpoint} 
$y = (q, t)\in \mGamma^{**} \boxtimes \ZZ^{**}$ such that
$d(p,q) + d(q,p') = d(p,p')$, $d(p,q) = d(q,p')$, and $t = (r+r')/2$.
This $y$ is denoted by $(x+x')/2$;  accordingly $q$ is denoted by $(p+p')/2$.
For $z = (q,t) \in \mGamma^{**} \boxtimes \ZZ^{**}$, 
there uniquely exists a pair $(x,y)$ of vertices in  $\mGamma^{*} \boxtimes \ZZ^{*}$ 
with the property that $z = (x+y)/2$ and $x \preceq y$.
We denote $x$ and $y$ by $\lfloor z \rfloor$ and $\lceil z \rceil$, respectively.

We are ready to define L-convex functions. 
For a natural number $n$, consider the product $(\mGamma^* \boxtimes \ZZ^*)^n$; 
a point $x$ in $(\mGamma^* \boxtimes \ZZ^*)^n$ is represented 
by a pair $(p,r)$ of $p \in (\mGamma^*)^n$ and $r \in (\ZZ^*)^n$.
A function $g: (\mGamma^* \boxtimes \ZZ^*)^n \to \RR \cup \{\infty\}$
is {\em L-convex} if it satisfies the following analogue of the {\em discrete midpoint convexity}~\cite[Section 7.2]{MurotaBook}:
\begin{equation}\label{eqn:midpoint}
g(x) + g(y) \geq g\left( \lfloor (x+y)/2 \rfloor \right) + g \left( \lceil (x+y)/2 \rceil \right) \quad (x,y \in  (\mGamma^* \boxtimes \ZZ^*)^n),
\end{equation}
where $ (\lfloor (x+y)/2 \rfloor)_i :=  \lfloor (x_i + y_i)/2] \rfloor$ and 
$(\lceil (x + y)/2 \rceil)_i := \lceil (x_i + y_i)/2 \rceil$ for $i=1,2,\ldots,n$.

For $x \in (\mGamma^* \boxtimes \ZZ^*)^n$, 
let ${\cal F}_x$ (resp. ${\cal I}_x$) denote the set of points $y$ with $x_i \preceq y_i$ (resp. $x_i \succeq y_i$) for $i=1,2,\ldots,n$.
The set ${\cal F}_{x} \cup {\cal I}_{x}$ is called  the {\em neighborhood} of $x$.
The {\em steepest descent algorithm} is given as follows.
\begin{description}
	\item[Algorithm 2:] Steepest descent algorithm (SDA)
	\item[Input:] 
	An L-convex function $g: (\mGamma^* \boxtimes \ZZ^*)^n \to \RR \cup \{\infty\}$, 
	and a point $x^0$ with  $g(x^0) < \infty$.
	\item[Output:] A minimizer of $g$.
	\item[Step 0:] Let $i \leftarrow 0$.
	\item[Step 1:] Find a minimizer $y$ of $g$ over the neighborhood ${\cal F}_{x^i} \cup {\cal I}_{x^i}$ of $x^i$.
	\item[Step 2:] If $g(x^i) = g(y)$, then output $x^i$ and stop; $x^i$ is a minimizer.
	\item[Step 3:] Otherwise, let $x^{i+1} \leftarrow y$,  $i \leftarrow i+1$, and go to step 1.
\end{description}
The fact that the output is a minimizer easily follows from (\ref{eqn:midpoint}); 
see \cite[Theorem 2.5]{HH14extendable}.

We discuss the number of iterations of this algorithm.
For $x,y \in (\mGamma^* \boxtimes \ZZ^*)^n$, 
an {\em $l_{\infty}$-path} between $x$ and $y$ 
is a sequence $P = (x = x^0,x^1,\ldots,x^m = y)$ such that for each $k$ and $i$,
the $i$-th components $x^k_i$ and $x^{k+1}_i$ 
belong to a 4-cycle of $\mGamma^* \boxtimes \ZZ^*$, 
i.e., it hold 
$z \preceq x_i^k \preceq z'$ and $z \preceq x_i^{k+1} \preceq z'$ 
for some $z \in B, z' \in W$ with $z \preceq z'$. 
The {\em length} of $P$ is defined as $m$.
The {\em $l_{\infty}$-distance} between $x$ and $y$, denoted by $D_{\infty}(x,y)$, 
is defined as the minimum length of an $l_{\infty}$-path between $x$ and $y$. 
For distinct $(p,r), (q,s) \in \mGamma^* \boxtimes \ZZ^*$ in a 4-cycle,
it holds $d(p ,q) + |r - s| = 1$. From this we observe that
\begin{equation*}
D_{\infty}(x,y) = \max_{1 \leq i \leq n} (d(p_i,q_i) + |r_i - s_i|)  \quad 
(x = (p,r),y = (q,s) \in (\mGamma^* \boxtimes \ZZ^*)^n).
\end{equation*}
%
Notice that a sequence $(x = x^0,x^1,x^2,\ldots,x^{m})$ generated 
by SDA is an $l_{\infty}$-path.
Let $\opt (g)$ denote the set of all minimizers of $g$.
The length $m$, which is the number of the iterations, 
is at least $D_{\infty}(x_0, \opt(g)) = \min_{y \in \opt(g)} D_{\infty}(x_0, y)$. 
This lower bound is almost tight.
\begin{Thm}[\cite{HHprepar}]\label{thm:bound}
	Let $(x = x^0, x^1, \ldots, x^m)$ be a sequence of points generated by SDA 
	applied to an L-convex function $g$ and an initial point $x$. 
	Then $m \leq D_{\infty}(x, \opt(g)) + 2$.
	If $g(x) = \min_{y \in {\cal F}_x} g(y)$ or $g(x) = \min_{y \in {\cal I}_x} g(y)$, 
	then $m = D_{\infty}(x, \opt(g))$.
\end{Thm}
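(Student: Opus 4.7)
The plan is to couple the SDA trajectory with a sequence of optimal comparison points via the discrete midpoint convexity (\ref{eqn:midpoint}), showing that the $l_\infty$-distance to the optimum shrinks by one at each step, up to a small boundary correction that accounts for the $+2$.

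The main step will be the following one-iteration descent lemma: if $x$ is an SDA iterate that is not a minimizer and $y$ is its SDA successor (a minimizer of $g$ over ${\cal F}_x \cup {\cal I}_x$ with $g(y) < g(x)$), then for every $x^* \in \opt(g)$ there exists $x^{**} \in \opt(g)$ with $D_\infty(y, x^{**}) \leq D_\infty(x, x^*) - 1$. Iterating this starting from $x^{*0}$ realizing $D_\infty(x^0, \opt(g))$, the companion distance $D_\infty(x^k, x^{*k})$ decreases strictly until it reaches $0$, forcing $x^k \in \opt(g)$ and halting the algorithm, so that $m \leq D_\infty(x^0, \opt(g))$ in the generic case.

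To prove the lemma, I would apply (\ref{eqn:midpoint}) to the pair $(x, x^*)$: setting $u := \lfloor (x + x^*)/2 \rfloor$ and $v := \lceil (x + x^*)/2 \rceil$, one gets $g(u) + g(v) \leq g(x) + g(x^*)$, and the optimality bound $g(u), g(v) \geq g(x^*)$ then forces $g(u), g(v) \leq g(x)$. The technical crux is a coordinatewise analysis showing that $u$ and $v$ split compatibly with the ${\cal F}_x / {\cal I}_x$ decomposition---informally, $v$ lies in ${\cal F}_x$ and $u$ lies in ${\cal I}_x$, so both are candidates for the SDA step from $x$. The steepest-descent defining property of $y$ then yields $g(y) \leq g(u)$ and $g(y) \leq g(v)$. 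A second application of (\ref{eqn:midpoint}), now to $(y, x^*)$, combined with these inequalities, forces one of the resulting midpoint images to be optimal, and I take this as $x^{**}$. The distance bound then follows from (\ref{eqn:D_inf}): the midpoint halves coordinate distances, and the subsequent rounding is controlled by the $\preceq$-structure.

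The $+2$ slack in the general bound arises at boundary iterations where the midpoint rounding direction mismatches the ${\cal F}$-versus-${\cal I}$ direction of the SDA step, wasting up to one unit of $l_\infty$-distance at the start and at the end of the trajectory. Under the extremality hypothesis $g(x) = \min_{y \in {\cal F}_x} g(y)$, the first SDA step is forced into ${\cal I}_x$, which aligns with a canonical geodesic to $x^{*0}$ and removes the first-step slack; a dual argument at the terminal iteration (where $x^m$ is a local minimum of $g$) removes the other unit, yielding the sharp bound $m = D_\infty(x, \opt(g))$. The main obstacle is the coordinatewise geometric analysis of how $\lfloor \cdot \rfloor$, $\lceil \cdot \rceil$, the partial order $\preceq$, and the bipartition $B \cup W$ interact in the Euclidean-building structure of $\mGamma^* \boxtimes \ZZ^*$; this is where the more detailed machinery of~\cite{HHprepar} really does its work, in ensuring that the optimal companion can always be transported one step closer to the iterate without leaving $\opt(g)$.
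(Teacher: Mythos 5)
Your high-level plan---show that each SDA step decreases the $l_\infty$-distance to $\opt(g)$ by one, then chain the decrements, with one unit of slack at each end accounting for the $+2$---is the right shape, and it matches the paper's strategy in spirit. But the mechanism you propose for the one-iteration descent lemma is broken at the crucial step. The neighborhood ${\cal F}_x \cup {\cal I}_x$ over which the SDA minimizes is a \emph{local} set: since the partial order $\preceq$ has height two (sources $W$, sinks $B$, and middle vertices between them), every $y \in {\cal F}_x \cup {\cal I}_x$ satisfies $D_\infty(x,y) \leq 1$. The points $u := \lfloor (x + x^*)/2 \rfloor$ and $v := \lceil (x + x^*)/2 \rceil$, however, lie roughly halfway between $x$ and $x^*$ in the $D_\infty$-metric, so as soon as $D_\infty(x, x^*) \geq 3$ they are \emph{not} in ${\cal F}_x \cup {\cal I}_x$. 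Consequently the steepest-descent defining property of the SDA step $y$ gives you no comparison $g(y) \leq g(u)$, $g(y) \leq g(v)$, and the rest of your argument collapses. You would instead need to construct a comparison point at $D_\infty$-distance exactly one from $x$ lying "toward" $x^*$ (and then argue about it coordinatewise), which is a genuinely more delicate construction than a single application of~(\ref{eqn:midpoint}) to $(x, x^*)$.

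The paper sidesteps exactly this difficulty by not proving the one-step decrement from scratch. It first observes that when $\mGamma$ is a path, $(\mGamma^* \boxtimes \ZZ^*)^n \simeq (\ZZ^*)^{2n}$ and L-convexity reduces to the alternating L-convexity of~\cite{HH14extendable}, where the per-step decrement (Proposition~\ref{prop:D_inf=D_inf-1}) is already known. It then lifts this to a general tree by a slicing argument: given $x$, its SDA successor $x'$, and an optimal $z$ realizing $D_\infty(x, \opt(g))$, one chooses for each coordinate $i$ a path $P_i$ in $\mGamma^*$ containing $(x_i)_p$, $(x'_i)_p$, and $(z_i)_p$; the restriction of $g$ to $\prod_i P_i \boxtimes \ZZ^*$ is alternating L-convex, the known proposition applies to this restriction, and monotonicity of $D_\infty$ under restriction transports the conclusion back. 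Finally, the $+2$ comes from the observation that every iterate $x^i$ with $i > 0$ automatically satisfies the extremality hypothesis $g(x^i) = \min_{y \in {\cal F}_{x^i}} g(y)$ or $g(x^i) = \min_{y \in {\cal I}_{x^i}} g(y)$, so the exact decrement kicks in from $x^1$ onward, and $D_\infty(x^1, \opt(g)) \leq D_\infty(x^0, \opt(g)) + 1$ accounts for the remaining slack. Your sketch of this final bookkeeping is qualitatively right, but the missing ingredient---and the hard part---is the one you glossed over: either the reduction to the path case, or an honest coordinatewise construction of the transported optimal comparison point that actually stays inside the neighborhood.
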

A similar bound 
for original L-convex functions in DCA was established in~\cite{MurotaShioura14}. 
For a similar but different class of L-convex functions 
(called {\em alternating L-convex functions}), 
the same bound was proved by~\cite{HH14extendable}.
By using this result, we give a shorter proof of Theorem~\ref{thm:bound} as follows.
\begin{proof}
First consider the case where $\mGamma$ is an infinite path (without ends).
Then $\mGamma^* \boxtimes \ZZ^*$ is isomorphic to the product of two zigzagly-oriented paths.
In this case, $(\mGamma^* \boxtimes \ZZ^*)^{n}$ 
is identified with the product  $(\ZZ^*)^{2n}$ of $2n$ paths, and
L-convex functions on $(\mGamma^* \boxtimes \ZZ^*)^{n}$ 
coincide with alternating L-convex functions in the sense of~\cite{HH14extendable}.
Also $D_{\infty}$ is equal to $d$ in~\cite{HH14extendable}.
Then Theorem~\ref{thm:bound} was shown in \cite[Theorem 2.6]{HH14extendable}. 
In particular, the following holds:
\begin{itemize}
	\item[($*$)] For $x \in (\mGamma^* \boxtimes \ZZ^*)^{n}$ with $g(x) < \infty$
	and a minimizer $x'$ of $g$ over ${\cal F}_x \cup {\cal I}_{x}$ with $g(x') < g(x)$, 
	if $g(x) = \min_{y \in {\cal I}_{x}} g(y)$ or $g(x) = \min_{y \in {\cal F}_{x}} g(y)$, 
	then it holds 
	\begin{equation}\label{eqn:D_inf=D_inf-1}
		D_{\infty}(x', \opt(g)) = D_{\infty}(x, \opt(g)) - 1.
	\end{equation}
\end{itemize}
We show that this proposition holds for a general tree $\mGamma$.
Pick  an arbitrary $z \in \opt(g)$ with $D_{\infty}(x, \opt(g)) = D_{\infty}(x, z)$. 
Since $(x'_i)_p$ and $(x_i)_p$ are equal or adjacent in $\mGamma$ or $\mGamma^*$, 
there is a path $P_i$ in $\mGamma^*$ containing $(x_i)_p, (x'_i)_p$, and $(z_i)_p$.
Then $z,x$, and $x'$ are points in $\prod_{i=1}^n P_i \boxtimes \ZZ^* \simeq (\ZZ^*)^{2n}$, 
and the restriction $g'$ of $g$ to $\prod_{i=1}^n P_i \boxtimes \ZZ^*$
is (alternating) L-convex.
Thus ($*$) is applicable to $g', x, x'$, and hence 
$D_{\infty}(x', \opt(g')) = D_{\infty}(x, \opt(g')) - 1 = D_{\infty}(x, \opt(g)) - 1$.
By $D_{\infty}(x', \opt(g)) \leq D_{\infty}(x', \opt(g'))$, 
the equation (\ref{eqn:D_inf=D_inf-1}) holds.

Theorem~\ref{thm:bound} is proved as follows.
By the description of the steepest descent algorithm,
it holds that $g(x^i) = \min_{y \in {\cal I}_{x^i}} g(y)$ or $g(x^i) = \min_{y \in {\cal F}_{x^i}} g(y)$
for all $i > 0$.
Thus $m - 1  = D_{\infty}(x^1, \opt(g))$ holds.
Since $D_{\infty}(x^1, \opt(g)) \leq  D_{\infty}(x, \opt(g)) + 1$, 
we obtain $m \leq D_{\infty}(x, \opt(g)) + 2$. 
In addition, if $g(x) = \min_{y \in {\cal I}_{x}} g(y)$ or $g(x) = \min_{y \in {\cal F}_{x}} g(y)$, 
then $m  = D_{\infty}(x, \opt(g))$ holds.
\end{proof}

We now return to our problem.
Let a pair of $N = (V,E,S,c,a)$ and ${\cal E} = (\mGamma, \{q_s\}_{s \in S})$ be an instance.
Let $V = \{1,2,\ldots,n\}$.
Then the set of  potentials $(p,r) \in (\mGamma^*)^V \times (\ZZ_{+}/2)^V$ is naturally regarded 
as a subset of $(\mGamma^* \boxtimes \ZZ^*)^n$. 
Define a function $g_{N, {\cal E}}: (\mGamma^* \boxtimes \ZZ^*)^n \to \RR \cup \{\infty\}$
by
\begin{equation*}
g_{N, {\cal E}}(x) := \left\{
\begin{array}{ll}
\sum_{i \in V \setminus S}2 c(i) r(i) & {\rm if}\ \mbox{$x = (p,r)$ is a potential}, \\
\infty & {\rm otherwise}.
\end{array} \right. 
\end{equation*}
\begin{Prop}\label{prop:g_is_L-convex}
Suppose that $a$ is even-valued.	
Then $g_{N, {\cal E}}$ is L-convex.
\end{Prop}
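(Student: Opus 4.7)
The plan is to verify the discrete midpoint inequality defining L-convexity for $g = g_{N, {\cal E}}$. The inequality is vacuous unless both $g(x)$ and $g(y)$ are finite, so I may assume $x = (p, r)$ and $y = (p', r')$ are both potentials. For each $i$, set $(q_i, t_i) := (x_i + y_i)/2 \in \mGamma^{**} \boxtimes \ZZ^{**}$ and write $\lfloor (x+y)/2 \rfloor_i = (u_i^-, \tau_i^-)$, $\lceil (x+y)/2 \rceil_i = (u_i^+, \tau_i^+)$. The overall strategy is to decompose the claim into (i) an identity on objective values and (ii) closure of the potential domain under the floor-ceiling pair of the midpoint.

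For (i), the defining property that $(u_i^-, \tau_i^-)$ and $(u_i^+, \tau_i^+)$ are the unique $\preceq$-ordered pair whose midpoint is $(q_i, t_i)$ forces $\tau_i^- + \tau_i^+ = 2 t_i = r_i + r'_i$. Hence, provided both $\lfloor (x+y)/2 \rfloor$ and $\lceil (x+y)/2 \rceil$ lie in $\dom g$,
\[
g(\lfloor (x+y)/2 \rfloor) + g(\lceil (x+y)/2 \rceil) = \sum_{i \in V \setminus S} 2 c(i) (\tau_i^- + \tau_i^+) = g(x) + g(y),
\]
and the midpoint inequality is attained with equality. So everything reduces to (ii).

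For (ii), among the potential conditions (p1)--(p3), condition (p1) is automatic since $(u_i^\pm, \tau_i^\pm)$ belongs to $\mGamma^* \boxtimes \ZZ^*$ by construction, and condition (p3) is immediate because $x_s = y_s = (q_s, 0)$ at each terminal $s$ gives $u_s^\pm = q_s$ and $\tau_s^\pm = 0$. The substantive task is (p2): for each edge $ij \in E$ and each sign, verify $d(u_i^\pm, u_j^\pm) - \tau_i^\pm - \tau_j^\pm \leq a_{ij}$. The main tool is the tree-midpoint convexity of the metric on $\mGamma^*$, namely
\[
d\!\left(\frac{p_i + p'_i}{2},\ \frac{p_j + p'_j}{2}\right) \leq \frac{d(p_i, p_j) + d(p'_i, p'_j)}{2},
\]
which combined with (p2) at $x$ and at $y$ already yields $d(q_i, q_j) - t_i - t_j \leq a_{ij}$ at the midpoint level in $\mGamma^{**}$. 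To pass this down to the floor and ceiling I would use the cut decomposition $d(p, q) = \sum_{e \in E(\mGamma)} |\chi_e(p) - \chi_e(q)|$ of the tree metric (where each $\chi_e$ takes values in $\{0, 1/2, 1\}$ on $\mGamma^*$ and is refined to quarter values on $\mGamma^{**}$), reducing (p2) to a per-cut condition that I would verify by a short case analysis according to whether each $q_i$ lies in $\mGamma^*$ or in the interior of a $\mGamma^*$-edge.

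The main obstacle will be this last case analysis: the floor-ceiling pair is determined globally by the bipartition $(B, W)$ rather than cut-by-cut, so the rounding directions forced on different edges of $\mGamma$ are coupled. The even-integer hypothesis on $a$ is what closes the accounting. Condition (p1) forces $d(p_i, p_j) + r_i + r_j$ to have a fixed parity relative to $a_{ij}$, and the evenness of $a_{ij}$ supplies the half-integer slack that absorbs the $\pm 1/2$ perturbation introduced when $(q_i, t_i)$ is replaced by its floor or ceiling in a coupled way across all cuts. Carrying out this bookkeeping simultaneously for both signs is the technical heart of the argument.
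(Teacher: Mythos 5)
Your overall decomposition — reduce to (i) an identity on the objective values and (ii) closure of the potential domain under the floor/ceiling pair of a midpoint, with tree‐metric midpoint convexity giving the constraint bound at the midpoint level — matches the structure of the paper's proof. Part (i) is fine (modulo checking nonnegativity of $r$ for the rounded points, which the paper handles by observing that $x \mapsto x_r$ if $x_r \geq 0$, else $\infty$, is itself L-convex on $\mGamma^* \boxtimes \ZZ^*$). The gap is exactly where you flag it: passing the midpoint bound $d(q_i,q_j) - t_i - t_j \leq a_{ij}$ down to the floor and ceiling.

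The paper does \emph{not} argue cut-by-cut. Instead, for each edge $ij$ of $N$ it works with the single scalar $h(x,y) := d(x_p,y_p) - x_r - y_r - a_{ij}$ and proves (Lemma~\ref{lem:h(x,y)<=0}) that $h(x,y) \leq 0$ for $x,y \in \mGamma^{**}\boxtimes\ZZ^{**}$ implies $h(\lceil x\rceil,\lceil y\rceil)\leq 0$ and $h(\lfloor x\rfloor,\lfloor y\rfloor)\leq 0$. The mechanism is: $h$ takes integer values on $(\mGamma^*\boxtimes\ZZ^*)^2$; the two rounding errors $\mDelta = h(\lceil x\rceil,\lceil y\rceil) - h(x,y)$ and $\mDelta' = h(\lfloor x\rfloor,\lfloor y\rfloor) - h(x,y)$ are additive inverses, $\mDelta' = -\mDelta$, because the floor and ceiling pull $p$ symmetrically along the (unique) $p_i$-to-$p_j$ path and pull $r$ symmetrically; and the only dangerous case $h(x,y)=0$, $\mDelta = 1$ is excluded by the parity Lemma~\ref{lem:parity}, which ties the colour classes $B,W$ to the parity of $d(p,p') - r - r'$. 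Here positivity (so $a \geq 2$, hence $d(x_p,y_p) > 1$) is used to ensure the two rounding directions are along a single geodesic, so the cancellation $\mDelta' = -\mDelta$ is exact. Your proposed per-cut decomposition $d(p,q)=\sum_e|\chi_e(p)-\chi_e(q)|$ is not an equivalent substitute: as you note, the floor/ceiling are chosen globally via the bipartition $(B,W)$, and there is no per-cut version of the parity cancellation, so a cut-by-cut ledger does not obviously close. The missing idea is precisely the combination $\mDelta'=-\mDelta$ plus the parity obstruction, which is what Lemma~\ref{lem:parity} and the ``$h(x,y)=0$ and $\mDelta=1$ is impossible'' argument provide. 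Without that, step (ii) of your proposal is incomplete.
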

By this proposition, the dual of our multiflow problem can be viewed 
as an L-convex function minimization. 
Therefore the optimality check in SDA (steps 1 and 2) 
must be equivalent to that (Lemmas~\ref{lem:optimality'} and \ref{lem:mn}) obtained from the multiflow duality in Section~\ref{sec:multiflow}; 
see also Figure~\ref{fig:outline}. 

The rest of this section is devoted to proving Proposition~\ref{prop:g_is_L-convex}.
For $x \in \mGamma^{**} \boxtimes \ZZ^{**}$, the first and second components of $x$ are denoted by $x_p$ and $x_r$, respectively, i.e., $x = (x_p, x_r)$.
For a nonnegative even integer $a \geq 0$, 
define $h  = h_a: (\mGamma^{**} \boxtimes \ZZ^{**})^2 \to \RR$ by
\[
h(x,y) := d(x_p, y_p) - x_r - y_r - a \quad (x,y \in \mGamma^{**} \boxtimes \ZZ^{**}).
\]
Notice that if
$x=(p,r) \in (\mGamma^{*} \boxtimes \ZZ^{*})^n$ is a potential
then $h_{a_{ij}}(x_i, x_j) \leq 0$ for $ij \in E$.
\begin{Lem}\label{lem:h_is_convex}
	For $x,y,x',y' \in \mGamma^{*} \boxtimes \ZZ^{*}$, it holds
	$
	h(x,y) + h(x',y') \geq 2 h((x+x')/2, (y+y')/2).
	$
\end{Lem}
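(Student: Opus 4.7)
The plan is to decompose the inequality into an affine part and a tree-metric part. Writing $h(x,y) = d(x_p,y_p) - x_r - y_r - a$, the two $r$-components and the constant $-a$ transform affinely under midpoint averaging: by definition of the midpoint in $\mGamma^{**} \boxtimes \ZZ^{**}$, we have $((x+x')/2)_r = (x_r + x'_r)/2$ and similarly for $y,y'$, so
\[
(-x_r - y_r - a) + (-x'_r - y'_r - a) \;=\; 2\bigl(-(x_r+x'_r)/2 - (y_r+y'_r)/2 - a\bigr),
\]
and these terms contribute equally to both sides. Hence the inequality reduces to the purely geometric statement
\[
d(x_p,y_p) + d(x'_p,y'_p) \;\geq\; 2\,d\bigl((x_p+x'_p)/2,\,(y_p+y'_p)/2\bigr),
\]
taken in the tree $\mGamma^{**}$.

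Next, I would establish this midpoint convexity of the tree metric. First I would verify that the midpoints $(x_p+x'_p)/2$ and $(y_p+y'_p)/2$ lie on the unique geodesics between $x_p, x'_p$ and $y_p, y'_p$ respectively (which is just the definition of the midpoint operation, combined with the fact that $\mGamma^{**}$ is a tree). Parameterizing these geodesics as $\gamma:[0,1]\to\mGamma^{**}$ and $\eta:[0,1]\to\mGamma^{**}$ proportionally to arc length, the claim is precisely $d(\gamma(0),\eta(0)) + d(\gamma(1),\eta(1)) \geq 2\,d(\gamma(1/2),\eta(1/2))$, i.e., convexity of $t\mapsto d(\gamma(t),\eta(t))$ at $t=1/2$.

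The main step is this convexity, which is the standard CAT(0) property of trees. I would prove it by an elementary case analysis based on the configuration of the two geodesics in the tree. Let $\pi_1$ be the geodesic from $x_p$ to $x'_p$ and $\pi_2$ from $y_p$ to $y'_p$. Because $\mGamma^{**}$ is a tree, $\pi_1$ and $\pi_2$ either share a (possibly empty) subpath, and in any case there is a shortest bridge $[a,b]$ with $a\in\pi_1,b\in\pi_2$ realizing $d(\pi_1,\pi_2)$. Writing $d(\gamma(t),\eta(t)) = d(\gamma(t),a) + d(a,b) + d(b,\eta(t))$ when $\gamma(t),\eta(t)$ are on the ``outer'' sides of this bridge (and similar formulas otherwise), one sees that $d(\gamma(t),\eta(t))$ is a piecewise linear function of $t$ with slope changes only of non-negative size, hence convex. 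An equivalent, slightly slicker derivation is via the tree four-point condition applied to the four points $\gamma(0),\gamma(1),\eta(0),\eta(1)$.

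The main obstacle is the convexity of the tree distance in the previous paragraph: all the work goes into the case split between the various geodesic configurations, or equivalently into citing/applying the four-point condition. Once this is in hand, the statement of the lemma follows by assembling the reductions above.
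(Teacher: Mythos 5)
Your proposal is correct and follows the same overall route as the paper: the $r$-coordinates and the constant $-a$ are affine in the midpoint operation and drop out, reducing the claim to midpoint convexity of the tree metric $d$ on $\mGamma^*$. The only difference is that the paper simply cites Dearing--Francis--Lowe~\cite{DFL76} (``Convex location problems on tree networks'', Lemma~3) for the convexity of the tree distance, whereas you sketch an elementary proof of that fact via a piecewise-linear/bridge case analysis; both are valid, and the citation is the more economical choice here since the convexity of the metric on a tree (a CAT(0) space) is classical.
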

\begin{proof}
	A classical result~\cite[Lemma 3]{DFL76} in location theory says that 
	the distance function on a tree is convex. 
	Thus it holds 
	$
	d(p, q) + d(p', q') \geq 2 d((p+p')/2, (q+ q')/2)
	$ for $p,q,p',q' \in \mGamma^*$.
	The inequality immediately follows from this fact. 
\end{proof}

\begin{Lem}\label{lem:h(x,y)<=0}
	For $x,y \in \mGamma^{**} \boxtimes \ZZ^{**}$ with $x_r,y_r \geq 0$, 
	if $h(x,y) \leq 0$, then $h(\lceil x \rceil, \lceil y \rceil) \leq 0$ 
	and $h(\lfloor x \rfloor, \lfloor y \rfloor) \leq 0$.
\end{Lem}
\begin{proof}
	Let $\mDelta := h(\lceil x \rceil, \lceil y \rceil) - h(x,y)$ 
	and $\mDelta' := h(\lfloor x \rfloor, \lfloor y \rfloor) - h(x,y)$.
	From $h(\lceil x \rceil,y) - h(x,y) \in \{-1/2,0,1/2\}$, 
	we see that $\mDelta, \mDelta' \in \{-1, -1/2, 0, 1/2, 1\}$.
	Notice that both $h(\lceil x \rceil, \lceil y \rceil)$ and $h(\lfloor x \rfloor, \lfloor y \rfloor)$ are integers.
	Therefore, $h(x,y) \leq -1/2$ implies 
	$h(\lceil x \rceil, \lceil y \rceil) \leq 0$
	and $h(\lfloor x \rfloor, \lfloor y \rfloor) \leq 0$.
	Hence we assume that $h(x,y) = 0$ and 
	$\mDelta, \mDelta' \in \{- 1, 0, 1\}$.

	Case 1: $d(x_p, y_p) < 1/2$, i.e., $d(x_p, y_p) = 0$ or $1/4$.
	Suppose $d(x_p,y_p) = 0$.  
	By $a, x_r,y_r \geq 0$ and $h(x,y) = 0$ 
	we have $a = 0$ and $x_r = y_r = 0$. 
	Then $\lceil x \rceil_p = \lceil y \rceil_p$,
	$\lfloor x \rfloor_p = \lfloor y \rfloor_p$ and 
	$\lceil x \rceil_r = \lceil y \rceil_r = \lfloor x \rfloor_r = \lfloor y \rfloor_r = 0$.
	Suppose $d(x_p, y_p) = 1/4$.
	We can assume $(d(x_p, y_p), x_r,y_r) = (1/4,0,1/4)$.
	Then  $(d(\lceil x \rceil_p,\lceil y \rceil_p), \lceil x \rceil_r, \lceil y \rceil_r)$
	and
	$(d(\lfloor x \rfloor_p, \lfloor y \rfloor_p),\lfloor x \rfloor_r, \lfloor y \rfloor_r)$
	are $(0,0,0)$ or $(1/2,0,1/2)$. 
	In both cases, $h(\lceil x \rceil, \lceil y \rceil) = h(\lfloor x \rfloor,\lfloor y \rfloor) = 0$ holds.
	
	Case 2: $d(x_p,y_p) \geq 1/2$.
	In this case, the simple paths in $\varGamma^{**}$ 
	connecting $\lceil x \rceil_p, \lfloor x \rfloor_p$ and 
	$\lceil y \rceil_p, \lfloor y \rfloor_p$, respectively, are edge-disjoint, 
	and is contained in a single path in $\varGamma^{**}$. 
	From this, we observe that
	$
	d(\lceil x \rceil_p,\lceil y \rceil_p) - d(x_p,y_p) = d(x_p,y_p) - d(\lfloor x \rfloor_p, \lfloor y \rfloor_p)$.
	By $\lceil x \rceil_r - x_r = x_r - \lfloor x \rfloor_r$ and 
	$\lceil y \rceil_r - y_r = y_r - \lfloor y \rfloor_r$, we obtain
	$\mDelta' = - \mDelta$. 
	So it suffices to show that $\mDelta = 1$ cannot occur.
	Suppose not. Then both $h(\lceil x \rceil, \lceil y \rceil)$ and $h(\lfloor x \rfloor, \lfloor y \rfloor)$ are odd.
	Observe that two colored nodes $z,z' \in \varGamma^* \boxtimes \ZZ^*$ 
	have the same color if and only if integer $d(z_p,z'_p) + z_r + z_{r}'$ is even
	(see Lemma~\ref{lem:parity}).
	If both $\lceil x \rceil$ and $\lceil y \rceil$ (or $\lfloor x \rfloor$ and $\lfloor y \rfloor$) are colored, 
	then 
	they must have different colors (by the evenness of $a$),
	$x = \lceil x \rceil = \lfloor x \rfloor$ or $y = \lceil y \rceil = \lfloor y \rfloor$ holds, 
	and $\mDelta = 1$ is impossible.
	Hence we can assume that $x_p, y_p \in \mGamma^{**} \setminus \mGamma^*$, and
	$(\lceil x \rceil, \lfloor y \rfloor) \in W \times B$ or $(\lfloor x \rfloor, \lceil y \rceil) \in B \times W$.
	With $\varDelta = - \varDelta' = 1$, it must hold
	$d(\lceil x \rceil_p, \lceil y \rceil_p) - d(x_p, y_p) = d(x_p,y_p) - d(\lfloor x \rfloor_p, \lfloor y \rfloor_p) = 1/2$ and $\lceil x \rceil_r - x_r  = \lceil y \rceil_r - y_r  = x_r - \lfloor x \rfloor_r = y_r - \lfloor y \rfloor_r = -1/4$.
	Also 
	$d(\lceil x \rceil_p,  \lfloor y \rfloor_p ) = d(\lfloor x \rfloor_p, \lceil y \rceil_p) =d(x_p, y_p)$.
	Hence $h(\lceil x \rceil, \lfloor y \rfloor) = h(\lfloor x \rfloor, \lceil y \rceil) = h(x,y) = 0$ (even).
	However
	$(\lceil x \rceil, \lfloor y \rfloor) \in W \times B$ or $(\lfloor x \rfloor, \lceil y \rceil) \in B \times W$ implies
	that $h(\lceil x \rceil, \lfloor y \rfloor)$ or $h(\lfloor x \rfloor, \lceil y \rceil)$ is odd; a contradiction.  
\end{proof}

\begin{proof}[Proof of Proposition~\ref{prop:g_is_L-convex}]
It is easy to see that the (linear) function $(p,r) \mapsto \sum_{i \in V \setminus S}2 c(i) r(i)$ is L-convex.
Since L-convexity is preserved under nonnegative combination, 
it suffices to show that the function $g$ on $(\mGamma^* \boxtimes \ZZ^*)^2$ 
defined by $(x,y) \mapsto 0$ if $x_r,y_r \geq 0$, $h(x,y) \leq 0$, and~$\infty$ otherwise
is L-convex.
For $z \in \varGamma^{**} \boxtimes \ZZ^{**}$ with $z_r \geq 0$, 
it is easy to see $\lceil z \rceil_r, \lfloor z \rfloor_r \geq 0$.
Consider $(x,y),(x',y') \in \varGamma^* \boxtimes \ZZ^*$ 
with $x_r,y_r,x'_r,y'_r \geq 0$.
Suppose that $h(x,y) \leq 0$ and $h(x',y') \leq 0$.
By Lemma~\ref{lem:h_is_convex}, we have $h((x+x')/2, (y+y')/2) \leq 0$.
By Lemma~\ref{lem:h(x,y)<=0}, we have
$h( \lceil (x+x')/2 \rceil, \lceil (y+y')/2 \rceil ) \leq 0$ and 
$h( \lfloor (x+x')/2 \rfloor, \lfloor (y+y')/2 \rfloor) \leq 0$. 
Thus $g$ is L-convex.
\end{proof}

\section{Algorithm}\label{sec:algo}
In this section, we deal with the lower right part of Figure~\ref{fig:outline}
and present an algorithm ({\em dual descent algorithm}) to solve 
a nondegenerate instance of our multiflow problem, 
with proving of the main result (Theorem~\ref{thm:main}).
We first show that the optimality check of a potential $(p,r)$, or finding a $(p,r)$-admissible support, 
is reduced to the submodular flow feasibility 
problem on a certain skew-symmetric network, 
called the {\em double covering network}.
This extends the earlier result on the minimum-cost edge-capacitated multiflow problem by Karzanov~\cite{Kar79, Kar94},  
in which the optimality is checked by the classical circulation problem.
Partial adaptations of this idea to the node-capacitated setting  
have been given in \cite{BK08ESA, BK12}; 
but the full adaptation using submodular flow is new. 
Checking the feasibility of a submodular flow is reduced 
to the maximum submodular flow problem.
We prove that the minimal minimum cut naturally gives a steepest direction at each potential.
Then we obtain a simple descent algorithm mentioned in Introduction.

\subsection{Double covering network with submodular constraints}\label{subsec:doublecovering}
Let a pair of $N = (V,E,S,c,a)$  and ${\cal E} = (\mGamma, \{q_s\}_{s \in S})$ be a nondegenerate instance.
Let $(p,r)$ be a potential for $(N, {\cal E})$. 
We construct a skew-symmetric network ${\cal D}_{p,r}$ together with submodular constraints as follows.

We first define a signed node set $U_i$ and 
directed edge set $A_i$ indexed by each node $i$ in $N$, 
together with lower edge-capacity $\underline{c}$ and upper edge-capacity $\overline{c}$.
A nonterminal node $i$ is said to be {\em flat} if $p(i) \not \in \mGamma_3$, 
and {\em singular} if $p(i) \in \mGamma_3$. 
A nonterminal node $i$ is said to be {\em zero} 
if $r(i) = 0$ and {\em positive} if $r(i) > 0$.
\begin{itemize}
\item
For each terminal $s$,  let $U_s := \{ s^+,s^-\}$, and let $A_s := \{ s^+s^-\}$ 
with $\underline{c}(s^+s^-) := 0$ and $\overline{c}(s^+s^-) := \infty$.
\item
For each flat node $i$, 
let  $U_i := \{ i_1^+,i^+_2,i^-_1,i^-_2 \}$, and let $A_i := \{i_1^+i_2^-, i_2^+i_1^-\}$ with 
$\overline{c}(i_1^+i_2^-) = \overline{c}(i_2^+i_1^-) := c(i)$.
If $i$ is positive, then 
$\underline{c}(i_1^+i_2^-) = \underline{c}(i_2^+i_1^-) := c(i)$.
Otherwise $\underline{c}(i_1^+i_2^-) = \underline{c}(i_2^+i_1^-) := 0$.
\item
For each positive singular node~$i$, 
let $U_i := \{i_0^+, i_1^+,i^+_2,i^+_3,i_0^-, i^-_1, i^-_2,i^-_3\}$ and 
let $A_i$ consist of $i^+_0i^-_0$, $i_k^+i_0^+$, $i_0^-i_k^-$ $(k=1,2,3)$ 
with $\underline{c}(i_0^+i_0^-) = \overline{c}(i_0^+i_0^-) := 2c(i)$,
$\underline{c}(i_k^+i_0^+) = \underline{c}(i_0^-i_k^-) := 0$, 
and $\overline{c}(i_k^+i_0^+) = \overline{c}(i_0^-i_k^-) := c(i)$ $(k=1,2,3)$.
\item
For each zero singular node $i$, 
let $U_i := \{ i_1^+,i^+_2,i^+_3, i^-_1, i^-_2,i^-_3\}$, and let $A_i := \emptyset$.
\end{itemize}
Next, double each (undirected) edge in $E_{p,r}$ to two directed edges as follows.
\begin{itemize}
\item
For each edge $ij \in E_{p,r}$, where $ij \in \delta_{p,k}(i)$ 
and $ij \in \delta_{p,k'}(j)$, 
consider edges $i^-_k j^+_{k'}$ and $j^-_{k'} i^+_{k}$. 
If $j$ is a terminal $s$, 
consider $i^-_k s^+$ and $s^- i_k^+$.
The lower capacity $\underline{c}$ and upper capacity~$\overline{c}$ of these edges
are defined as $0$ and $\infty$, respectively.
\end{itemize}
To represent (a2) for zero singular nodes,  
we introduce submodular constraints, where we recall the definition of $\mDelta^*_{c(i)}$ in Section~\ref{subsec:ext}.
\begin{itemize}
	\item For each zero singular node $i$, 
	consider a submodular function $\mDelta^*_{i}$ on $U_i$
	defined by $\mDelta^*_{i}(X) := \mDelta^*_{c(i)} (X')$, 
	where $X'$ is obtained from $X$ by replacing $i_k^+$ and $i_k^-$ with $k^+$ and $k^-$, respectively.
\end{itemize}
Now the {\em double covering network ${\cal D}_{p,r} = (U, A, \underline{c}, \overline{c})$ with submodular constraints} is 
defined by the disjoint union of all these (directed) edges and node sets $U_i$,  
together with submodular functions $\mDelta^*_{i}$ 
on $U_i$ for zero singular nodes $i$. 
See Figure~\ref{fig:doublecovering}, where $i$, $j$, $i'$, and $j'$ 
are positive singular, zero singular, zero flat, and positive flat nodes, respectively.
\begin{figure} 
	\begin{center} 
		\includegraphics[scale=0.7]{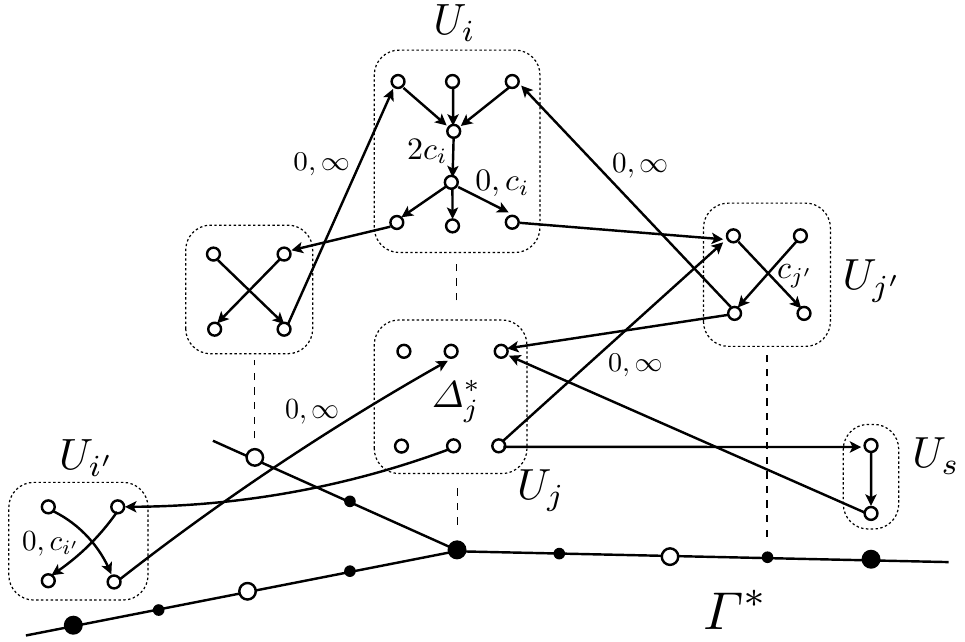}
		\caption{Double covering network}  
		\label{fig:doublecovering}         
	\end{center}
\end{figure}

%

A {\em circulation} $\varphi$ in ${\cal D}_{p,r}$
is a function on edge set $A$ such that
\begin{eqnarray*}
&& \underline{c}(e) \leq \varphi(e) \leq \overline{c}(e) \quad (e \in A), \\
&& (\nabla \varphi)(v) = 0  \quad (\mbox{$v \in U_i$ for flat or positive node $i$ in $N$}  ) ,  \\
&& (\nabla \varphi)|_{U_i} \in  {\cal B}(\mDelta^*_{i}) \quad 
(\mbox{zero singular node $i$ in $N$}).   
\end{eqnarray*}
For a circulation $\varphi$, define $\zeta_{\varphi}: E_{p,r} \to \RR_+$ by
\begin{equation}\label{eqn:zeta_phi}
\zeta_{\varphi}(e) := \frac{\varphi(e^+) + \varphi(e^-)}{2} \quad 
(e \in E_{p,r}),
\end{equation}
where $e$ is doubled to $e^+$ and $e^-$ in ${\cal D}_{p,r}$. 
\begin{Lem}
	Suppose that an integral circulation $\varphi$ in ${\cal D}_{p,r}$ exists.
	Then $\zeta_{\varphi}$ is a $(p,r)$-admissible support.
\end{Lem}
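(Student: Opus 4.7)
The plan is to verify conditions (a1)--(a4) one at a time by translating the flow-conservation equations and capacity bounds of $\varphi$ at the various copies of each node $i$ into inequalities on the partial flow-sums $\zeta_{\varphi}(\delta_{p,k}(i))$. Half-integrality of $\zeta_{\varphi}$ in (a4) is immediate from $2\zeta_{\varphi}(e) = \varphi(e^+) + \varphi(e^-) \in \ZZ$; integrality of $\zeta_{\varphi}(\delta\{i\})$ will be verified case by case.

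For a flat node $i$, the only arcs in $A_i$ are $i_1^+ i_2^-$ and $i_2^+ i_1^-$, each of upper capacity $c(i)$. Flow conservation at the usual nodes $i_1^+$ and $i_2^-$ gives
\[ \sum_{e \in \delta_{p,1}(i)} \varphi(e^-) \ =\ \varphi(i_1^+ i_2^-) \ =\ \sum_{e \in \delta_{p,2}(i)} \varphi(e^+), \]
and the analogous identities at $i_2^+$ and $i_1^-$ yield $2\zeta_{\varphi}(\delta_{p,1}(i)) = \varphi(i_1^+ i_2^-) + \varphi(i_2^+ i_1^-) = 2\zeta_{\varphi}(\delta_{p,2}(i)) \le 2c(i)$, giving (a1). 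If $i$ is positive, the lower capacity $c(i)$ on each internal arc forces equality, hence $\zeta_{\varphi}(\delta\{i\}) = 2c(i)$, which is (a3).

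For a positive singular node $i$, conservation at $i_0^+$ and $i_0^-$ combined with $\varphi(i_0^+ i_0^-) = 2c(i)$ gives $\sum_k \varphi(i_k^+ i_0^+) = 2c(i) = \sum_k \varphi(i_0^- i_k^-)$, and together with conservation at each $i_k^{\pm}$ this identifies $\zeta_{\varphi}(\delta_{p,k}(i)) = (\varphi(i_k^+ i_0^+) + \varphi(i_0^- i_k^-))/2$; summing over $k$ gives (a3), and the four inequalities of (a2) follow at once from $\varphi(i_k^+ i_0^+), \varphi(i_0^- i_k^-) \in [0,c(i)]$. For a zero singular node $i$, since $A_i = \emptyset$, the only constraint on the boundary of $U_i$ is $\nabla\varphi|_{U_i} \in {\cal B}(\mDelta_i^*)$. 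By construction $\nabla\varphi(i_k^+) = \sum_{e \in \delta_{p,k}(i)} \varphi(e^-)$ and $\nabla\varphi(i_k^-) = -\sum_{e \in \delta_{p,k}(i)} \varphi(e^+)$, so the projection $\phi$ of \eqref{eqn:def_phi} sends $\nabla\varphi|_{U_i}$ to the vector $(\zeta_{\varphi}(\delta_{p,k}(i)))_{k=1,2,3}$. Combining Lemma~\ref{lem:Delta*}, Lemma~\ref{lem:projection}, and Lemma~\ref{lem:flow_conservation}, this vector lies in ${\cal D}(\mDelta_{c(i)})$, which is exactly the system in (a2). Moreover $\mDelta_i^*(U_i) = 0$ (type 3), so $x(U_i) = 0$ for $x \in {\cal B}(\mDelta_i^*)$ forces $\sum_e \varphi(e^+) = \sum_e \varphi(e^-)$, and integrality of $\zeta_{\varphi}(\delta\{i\})$ follows.

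The main obstacle is purely bookkeeping: keeping the correspondence between the signed copies $i_k^{\pm}$ of a node and the oriented copies $e^+, e^-$ of an edge straight, and in particular recognizing that the partial flow-sums $\zeta_{\varphi}(\delta_{p,k}(i))$ are precisely the image of $\nabla\varphi|_{U_i}$ under $\phi$. Once this identification is made, the bisubmodular condition (a2) for the zero singular case reduces to an application of the already-proved projection lemma, and every other required inequality reduces to flow conservation at usual nodes combined with the explicit capacity structure of ${\cal D}_{p,r}$.
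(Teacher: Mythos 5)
Your proof is correct and takes essentially the same case-by-case route as the paper's: verify (a1)--(a4) separately at flat, positive-singular, and zero-singular nodes, using flow conservation and edge capacities for the first two and the projection Lemma~\ref{lem:projection} together with Lemma~\ref{lem:Delta*} for the third. One small note: your identities $\nabla\varphi(i_k^+) = \sum_{e}\varphi(e^-)$ and $\nabla\varphi(i_k^-) = -\sum_{e}\varphi(e^+)$ are the ones literally implied by the orientation $e^+ = i_k^- j_{k'}^+$, $e^- = j_{k'}^- i_k^+$; the paper's printed formulas there have $e^+$ and $e^-$ interchanged, a harmless slip since the $\phi$-average is symmetric in $e^+$ and $e^-$.
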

\begin{proof}
	The half-integrality of $\zeta_{\varphi}$ (in (a4)) is clear.
	Take a nonterminal node $i$.
	Let $\zeta_k  := \zeta_{\varphi}(\delta_{p,k}(i)) 
	= \sum_{e \in \delta_{p,k}(i)} (\varphi(e^+) + \varphi(e^-))/2$.
	Suppose that $i$ is flat.
	Then $\zeta_1
	= \sum_{e \in \delta_{p,1}(i)}(\varphi(e^+) + \varphi(e^-))/2 = 
	(\varphi(i_{1}^+i_{2}^-) + \varphi(i_{2}^+i_1^-))/2 = 
	\sum_{e \in \delta_{p,2}(i)}(\varphi(e^+) + \varphi(e^-))/2 = \zeta_2$.
	Since $\varphi(i_{1}^+i_{2}^-) \leq c(i)$ and $\varphi(i_{2}^+i_1^-) \leq c(i)$, 
	we obtain (a1). In addition, if $i$ is positive, then
	$\varphi(i_{1}^+i_{2}^-) = \varphi(i_{2}^+i_1^-) = c(i)$ must hold, 
	and we obtain (a3).
	Since $\zeta_1$ and $\zeta_2$ are half-integers with $\zeta_1 = \zeta_2$, 
	we have~(a4).
	
	Suppose that $i$ is positive singular.
	Since 
	$\zeta_{k} 
	= \sum_{e \in \delta_{p,k}(i)}(\varphi(e^+) + \varphi(e^-))/2
	= (\varphi(i_k^+i_0^+) + \varphi(i_0^- i_k^-))/2 \leq c(i)$, 
	we obtain $\zeta_k \leq c(i)$.
	Moreover it holds $\zeta_{\varphi}(\delta \{i\}) =  \zeta_1 + \zeta_2 + \zeta_3 = \varphi(i_0^+i_0^-) = 2 c(i)$, implying (a3) and (a4). 
	If $\zeta_1 > \zeta_2 + \zeta_3$, 
	then $\zeta_1 + \zeta_2 + \zeta_3 < 2 \zeta_1 \leq 2 c(i)$;
	this contradicts  $\zeta_1 + \zeta_2 + \zeta_3 = 2 c(i)$.
	Therefore $\zeta_1$, $\zeta_2$, and $\zeta_3$ satisfy (a2).

	Suppose that $i$ is zero singular.
	Notice that $(\nabla \varphi)(i_k^+) = \sum_{e \in \delta_{p,k}(i)} \varphi(e^+)$, and
	$(\nabla \varphi)(i_k^-) = - \sum_{e \in \delta_{p,k}(i)} \varphi(e^-)$, 
	where $e^+$ (resp. $e^-$) for $e \in \delta_{p,r}(i)$ is 
	the directed edge   
	entering (resp. leaving) $U_i$.
	Thus $(\zeta_1,\zeta_2,\zeta_3) = \phi((\nabla \varphi) |_{U_i})$; 
	see (\ref{eqn:def_phi}) for $\phi$.
	Since $(\nabla \varphi) |_{U_i} \in {\cal B}(\mDelta^*_{i})$, 
	because of  Lemmas~\ref{lem:projection} and \ref{lem:Delta*},
	the vector $(\zeta_1,\zeta_2,\zeta_3)$ satisfies (a2). 
	Since $0 = (\nabla \varphi)(U_i) = \sum_{k=1,2,3}(\nabla \varphi)(i_k^+) + \sum_{k=1,2,3}(\nabla \varphi)(i_k^-)$, we have 
	$\sum_{k=1,2,3}(\nabla \varphi)(i_k^+) = - \sum_{k=1,2,3}(\nabla \varphi)(i_k^-)$ and obtain (a4) by
	\[
	\zeta_{\varphi}(\delta \{i\}) = 
	\zeta_1+ \zeta_2+ \zeta_3 = \frac{1}{2}\sum_{k=1,2,3}(\nabla \varphi)(i_k^+) - (\nabla \varphi)(i_k^-) =  \sum_{k=1,2,3}(\nabla \varphi)(i_k^+) \in \ZZ.
	\] 
\end{proof}

\subsection{Dual descent algorithm: implementing SDA by submodular flow}\label{subsec:algo}

To check the existence of a circulation in ${\cal D}_{p,r}$, 
we construct an  instance of the maximum submodular flow problem.
Add a super source $a^+$ and super sink $a^-$.
For each edge $e = v^+ u^-$ in ${\cal D}_{p,r}$ having nonzero lower capacity $\underline{c}(e) > 0$,
replace $v^+u^-$ by two edges $v^+a^-$ and $a^+u^-$ 
with (upper) capacity $\underline{c}(e)$ (and lower capacity $0$).
Those edges are $i_0^+i_0^-$ for positive singular nodes~$i$ and
 $i_1^+i_2^-, i_2^+i_1^-$ for positive flat nodes $i$.
The resulting (skew-symmetric) network is denoted by $\tilde {\cal D}_{p,r}$, where
modified edge sets are denoted by $\tilde A_i$ $(i \in V)$ and the (upper) edge-capacity is denoted by $\tilde c$.
Consider the maximum $(a^+,a^-)$-submodular flow problem on $\tilde{\cal D}_{p,r}$, 
where submodular function $\rho$ on $U$ is given as
\begin{equation}\label{eqn:our_submo}
\rho(X) := \sum_{i:{\rm zero\,singular}} \mDelta^*_{i} (X \cap U_i) \quad (X \subseteq U).
\end{equation}
\begin{Lem}\label{lem:A}
    If $\{a^+\}$ is a minimum $(a^+,a^-)$-cut in $\tilde {\cal D}_{p,r}$, 
	then a circulation $\varphi$ in ${\cal D}_{p,r}$ exists, and is obtained from any maximum flow $\varphi'$ by the following procedure:
	\begin{itemize}
		\item[{\rm (A)}] For each edge $e = v^+u^-$ in ${\cal D}_{p,r}$ having nonzero lower capacity, 
		let $\varphi(e) := \varphi'(a^+u^-)$, and for other edge $e$ in ${\cal D}_{p,r}$, 
		let  $\varphi(e) := \varphi'(e)$.
	\end{itemize}
\end{Lem}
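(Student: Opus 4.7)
The plan is to use the minimum-cut hypothesis to establish two-sided saturation of the super-source and super-sink edges in $\tilde{\cal D}_{p,r}$, and then verify the three defining properties of a circulation for $\varphi$ one by one. Since $\rho(\emptyset) = 0$, the cut capacity of $\{a^+\}$ is exactly $C := \sum_{e : \underline c(e) > 0} \underline c(e)$, where the sum ranges over edges $e$ of ${\cal D}_{p,r}$ with positive lower capacity. By the hypothesis and max-flow min-cut for submodular flow, the value of $\varphi'$ equals $C$. Since this value equals both $\sum_{u} \varphi'(a^+u^-)$ (outflow at $a^+$) and $\sum_{v} \varphi'(v^+a^-)$ (inflow at $a^-$, using that $a^+$ has no incoming and $a^-$ no outgoing edge), and each summand is bounded by $\underline c(v^+u^-)$ with the sum of capacities in each case equal to $C$, the equality of totals forces every super-source and every super-sink edge to be saturated: $\varphi'(a^+u^-) = \varphi'(v^+a^-) = \underline c(v^+u^-)$.

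Once this two-sided saturation is in hand, the three properties of a circulation follow in turn. For the capacity bounds, inspection of the construction of ${\cal D}_{p,r}$ shows that every edge $e$ with $\underline c(e) > 0$ in fact satisfies $\underline c(e) = \overline c(e)$ (the trunk $i_0^+i_0^-$ of a positive singular node, and the edges $i_1^+i_2^-$, $i_2^+i_1^-$ of a positive flat node); hence (A) assigns $\varphi(e) = \underline c(e) = \overline c(e)$ and the bounds hold automatically, while on the remaining edges $\varphi(e) = \varphi'(e) \in [0, \overline c(e)]$. For flow conservation at a usual node $v$ not incident to any replaced edge, the disjoint-sum form of $\rho$ in (\ref{eqn:our_submo}) gives $\rho(\{v\}) = \rho(U \setminus \{v\}) = \rho(U) = 0$, forcing $\nabla \varphi'(v) = 0$ and hence $\nabla \varphi(v) = 0$. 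For a usual $v$ that is an endpoint of a replaced edge $v^+u^-$, the reintroduced edge carries $\varphi(v^+u^-) = \varphi'(a^+u^-) = \varphi'(v^+a^-) = \underline c(v^+u^-)$ by dual saturation, which is precisely the amount that the flow had routed through the super-source/super-sink bypass; substituting back shows that the excess at $v$ in ${\cal D}_{p,r}$ coincides with that in $\tilde{\cal D}_{p,r}$ and is therefore zero. Finally, at each zero singular node $i$ the set $A_i$ is empty and none of the adjacent $e^{\pm}$ edges is replaced (all have lower capacity $0$), so $\nabla \varphi|_{U_i} = \nabla \varphi'|_{U_i}$; since $\rho$ is a disjoint sum of the $\mDelta^*_i$ over pairwise disjoint $U_i$, the polyhedron ${\cal B}(\rho)$ splits as the direct product of the ${\cal B}(\mDelta^*_i)$ (with value zero on usual coordinates), yielding $\nabla \varphi|_{U_i} \in {\cal B}(\mDelta^*_i)$ as required.

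The main technical point to get right is the two-sided saturation in the opening step: without dual saturation on the $v^+a^-$ edges, formula (A) would restore the wrong amount of flow on the reintroduced edges and conservation in ${\cal D}_{p,r}$ would fail at nodes such as $i_0^\pm$ (positive singular) or $i_1^\pm, i_2^\pm$ (positive flat). Once that observation is secured, the remaining verifications reduce to routine substitution combined with the definition of the base polyhedron and Lemmas~\ref{lem:projection} and~\ref{lem:Delta*}.
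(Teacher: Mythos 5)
Your proof is correct and takes essentially the same approach as the paper, which dispatches the lemma with a single sentence: that in a max flow, equality of flow value with the cut capacity of $\{a^+\}$ forces two-sided saturation of all $a^+u^-$ and $v^+a^-$ edges, so (A) sets $\varphi(e)=\underline c(e)=\overline c(e)$ on the restored edges. Your write-up simply unpacks the remaining ``is a circulation'' verification (capacity bounds, conservation at usual nodes via $\rho(\{v\})=\rho(U\setminus\{v\})=0$, and the factorization of ${\cal B}(\rho)$ into $\prod_i{\cal B}(\mDelta^*_i)$ at zero-singular blocks), all of which the paper treats as routine.
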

Indeed, in this case, any max-flow saturates all edges leaving $a^+$ and all edges entering $a^-$, 
and consequently the resulting $\varphi$ satisfies $\varphi(e) = \underline{c}(e) = \overline{c}(e)$ 
for all edges $e$ having nonzero lower-capacity, and is a circulation of ${\cal D}_{p,r}$.

Next we show that the minimal minimum cut gives rise to
a steepest descent direction of $g_{N, {\cal E}}$ at~$(p,r)$. 
An $(a^+,a^-)$-cut $X$ is said to be {\em normal} if it satisfies the following conditions:
\begin{itemize}
	\item[(c0)] $X$ does not meet $\{s^+,s^-\}$ for any terminal $s \in S$. 
	\item[(c1)] $X$ is a transversal.
	\item[(c2)] For each positive node $i$,  
	$X \cap U_i$ is equal to one of 
	\[
	\emptyset,\ U_i^+,\  U_i^-,\  \{i_k^+\},\ U_i^- \setminus \{i_k^-\},\ \{i_k^+\} \cup U_i^- \setminus \{i_k^-\} \ (k=1,2,3).
	\]
	\item[(c3)] For each zero node $i$,  $X \cap U_i$ is equal to one of
	\[
	\emptyset,\  U_i^+,\  \{i_k^+\},\  \{i_k^+\} \cup U_i^- \setminus \{i_k^-\}\ (k=1,2,3).
	\] 
\end{itemize}
Then the following holds; the proof is given in the end of this subsection.
\begin{Lem}\label{lem:normal_mincut}
	A normal minimum $(a^+,a^-)$-cut exists, and is obtained from the unique minimal minimum cut $X$
	by applying the following procedure:
	\begin{itemize}
		\item[{\rm (B)}] For each singular node $i$,
		if $|X \cap U^+_i| = 2$, then replace $X$ by  $X \cup U^+_i$. 
	\end{itemize}  
\end{Lem}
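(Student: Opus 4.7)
The plan is to verify that the unique minimal minimum cut $X$ already satisfies (c0) and (c1), and then to show that applying procedure~(B) yields a cut $X'$ of the same capacity satisfying all of (c0)--(c3).

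Condition (c1) follows directly from Lemma~\ref{lem:minimal_minimum}: the network $\tilde{\cal D}_{p,r}$ is skew-symmetric, since each finite-lower-capacity edge $v^+u^-$ of ${\cal D}_{p,r}$ is paired with its mirror $u^+v^-$ of the same capacity and the doubling by $a^\pm$ is performed symmetrically on both; and the submodular function $\rho$ of~(\ref{eqn:our_submo}) satisfies $\rho(\underline X)\leq\rho(X)$ because it is a disjoint sum of the $\mDelta^*_i$, each of which satisfies this inequality by~(\ref{eqn:extension2}) in Lemma~\ref{lem:Delta*}. For (c0), transversality together with the infinite-capacity edge $s^+s^-$ immediately forces $s^+\notin X$ (otherwise the cut would be infinite); and if $s^-\in X$, then finiteness of $\tilde c(\delta X)$ forces $i_k^+\in X$ along every infinite external edge $s^-\to i_k^+$, whence $X\setminus\{s^-\}$ is a strictly smaller cut of the same capacity (no edge changes status, and $\rho$ is unaffected since $s^-\notin U_i$ for any zero singular $i$), contradicting the minimality of~$X$.

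For (c2) and (c3), I would proceed node by node: for each nonterminal $i$, enumerate the transversal patterns $X\cap U_i$ consistent with the infinite-capacity external edges, and use minimality to rule out all non-normal patterns except for the single ``pre-normal'' pattern $X\cap U_i^+=\{i_k^+,i_l^+\}$ targeted by~(B). The key capacity identity is that, at a singular node $i$, completing $X\cap U_i^+$ to $U_i^+$ leaves the cut capacity unchanged: for a zero singular node, $\mDelta^*_i$ takes the value $2c(i)$ (type~1 in~(\ref{eqn:delta*})) on both $\{i_k^+,i_l^+\}$ and $U_i^+$, and the added $i_m^+$ has no internal edges while its incoming infinite-capacity external edges $j_{k'}^-\to i_m^+$ are already forced to have tail outside~$X$; for a positive singular node, the internal contribution (a function only of $[i_0^+\in X]$, $[i_0^-\in X]$, $|X\cap\{i_1^+,i_2^+,i_3^+\}|$, and $|\{i_1^-,i_2^-,i_3^-\}\setminus X|$) evaluates to $4c(i)$ on both patterns, with the two $c(i)$ contributions from the cut edges $i_k^+i_0^+$, $i_l^+i_0^+$ migrating to the $2c(i)$ contribution of $i_0^+a^-$. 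All other non-normal patterns, such as isolated $\{i_k^-\}$, mixed $\{i_k^+,i_l^-\}$ with mismatched $k,l$, or $\{i_0^+,i_k^+\}$ for a positive singular node, are eliminated by the observation that removing an offending element yields a strictly smaller cut of no larger capacity, contradicting the minimality of~$X$; a similar argument handles flat nodes, where~(B) does nothing and the minimal min cut is seen to fall directly into a normal pattern.

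The main obstacle is the node-wise enumeration at positive singular nodes, where the internal edge structure is richest; the analysis is kept tractable by reducing the accounting to the summary statistics above, with the infinite-capacity external edges contributing only monotone constraints that restrict without complicating the small resulting table. A technical subtlety is that, whenever~(B) triggers at a singular node $i$, the minimality of~$X$ guarantees that no $i_m^-$ (nor $i_0^-$ in the positive singular case) lies in~$X$, so adding $i_m^+$ (and, for a positive singular node, also $i_0^+$) preserves transversality. Combined with the capacity identity above, this ensures that $X'$ is a transversal minimum cut satisfying (c0)--(c3), completing the proof.
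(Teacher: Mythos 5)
Your proposal is correct and follows essentially the same route as the paper: both derive (c1) from Lemma~\ref{lem:minimal_minimum}, derive (c0) from minimality together with the infinite-capacity edges $s^+s^-$ and $s^-i_k^+$, and then establish (c2)--(c3) by a node-by-node case analysis in which minimality eliminates every non-normal pattern except $|X\cap U_i^+|=2$ at singular nodes, which procedure~(B) repairs without changing the capacity or breaking transversality. The only difference is that you present the case analysis in summary form (via the ``capacity identity'' and the observation that removing offending elements preserves capacity) whereas the paper spells out the individual cases explicitly.
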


Let $U_{\cal I}$ be the subset of $U$ consisting of $i_k^+, i_k^-$
for $i \in V \setminus S$ and $k \in \{0, 1,2,3\}$ 
such that $(p(i),r(i)) \in W$, or $p(i) \in \mGamma^* \setminus \mGamma$ 
and $(p(i)_{\to^* k}, r(i)-1/2) \in W$.
Let $U_{\cal F}$ be the node subset defined by replacing $W$ with $B$
in the definition of $U_{\cal I}$.
A normal cut $X$ is said to be {\em ${\cal F}$-normal} if $X \cap U_{\cal I} = \emptyset$, 
and is said to be {\em ${\cal I}$-normal} if $X \cap U_{\cal F} = \emptyset$.
For an ${\cal F}$-normal or ${\cal I}$-normal cut~$X$, define $(p, r)^X$
by
\begin{equation}\label{eqn:X}
(p, r)^X(i) := \left\{
\begin{array}{ll}
(p(i)_{\to^*k}, r(i)+1/2) & {\rm if}\ X \cap U_i = \{i^+_k\},\\
(p(i)_{\to^*k}, r(i)-1/2) & {\rm if}\ X \cap U_i = U^-_i \setminus \{i^-_k\},\\
(p(i)_{\to k}, r(i)) & {\rm if}\ X \cap U_i = \{i_k^+\} \cup U^-_i \setminus \{i_k^-\},\\
(p(i), r(i) +1) & {\rm if}\ X \cap U_i = U_i^+, \\
(p(i), r(i) -1) & {\rm if}\ X \cap U_i = U_i^-, \\
(p(i), r(i)) & {\rm if}\ X \cap U_i = \emptyset
\end{array}\right.
\end{equation}
for each nonterminal node $i \in V \setminus S$ and $(p,r)^{X}(s) := (q_s,0)$ for each terminal $s \in S$.
\begin{figure} 
	\begin{center} 
		\includegraphics[scale=0.55]{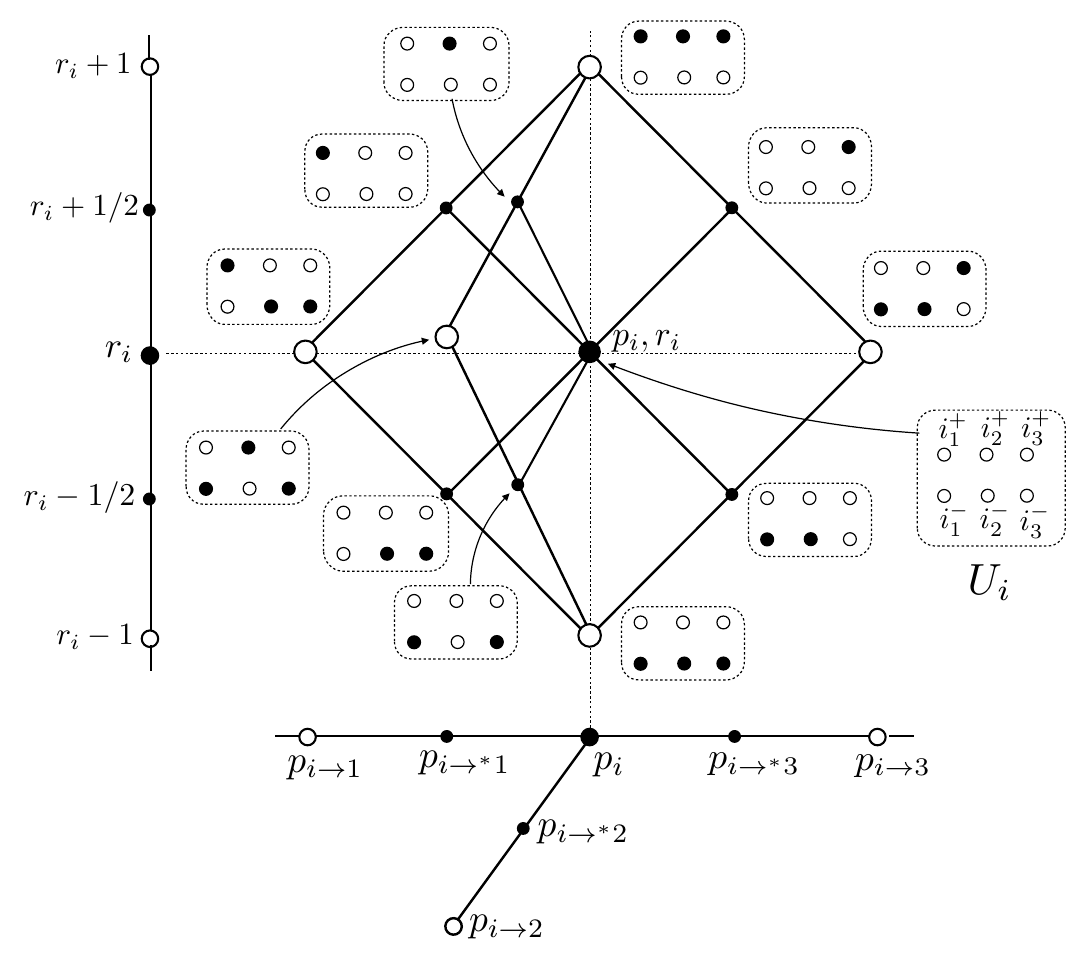}
		\caption{Neighborhood and normal cut}  
		\label{fig:neighborhood}         
	\end{center}
\end{figure}\noindent  
See Figure~\ref{fig:neighborhood} 
for the correspondence 
between $(p, r)^X(i)$ and $U_i \cap X$, 
where nodes in $U_i \cap X$ are colored black.
\begin{Prop}\label{prop:steepest}
	Let $X$ be a normal minimum $(a^+,a^-)$-cut in $\tilde {\cal D}_{p,r}$, and 
	let $X_{\cal F} := X \cap ( U_{\cal F} \cup \{a^+\})$ 
	and $X_{\cal I} := X \cap (U_{\cal I} \cup \{a^+\})$.  
	Then $(p, r)^{X_{\cal F}}$ is a minimizer of $g_{N, {\cal E}}$ over ${\cal F}_{p,r}$, and
	$(p, r)^{X_{\cal I}}$ is a minimizer of  $g_{N, {\cal E}}$ over ${\cal I}_{p,r}$.
\end{Prop}
The proof is given in the end.
We are now ready to describe our algorithm.
\begin{description}
	\item[Algorithm 3:] Dual descent algorithm (for a nondegenerate instance)
	\item[Input:] A nondegenerate instance $N = (V,E,S,c,a)$, ${\cal E} = (\mGamma, \{q_s\}_{s \in S})$.
	\item[Output:] A half-integral optimal multiflow $f$. 
	\item[Step 0:] Choose any vertex $v$ in $\mGamma_0$, and
	let $(p, r)$ be a potential defined by $(p(i),r(i)) := (v, d(\mGamma_0))$ for $i \in V \setminus S$ 
	and $(p(s),r(s)) := (q_s,0)$ for $s \in S$.
	\item[Step 1:] Construct network $\tilde {\cal D}_{p,r}$. Obtain an integral maximum $(a^+,a^-)$-flow $\varphi'$ 
	and the minimal minimum $(a^+,a^-)$-cut $X$ in $\tilde {\cal D}_{p,r}$ by a submodular flow algorithm.
	Make $X$ normal by Procedure (B).
	\item[Step 2:] If $X = \{a^+\}$, 
	then $(p,r)$ is optimal, obtain a feasible flow $\varphi$ in ${\cal D}_{p,r}$ from $\varphi'$
	by Procedure~(A).
	Obtain a $(p,r)$-admissible support $\zeta_{\varphi}$ by (\ref{eqn:zeta_phi}), 
	and obtain a half-integral optimal multiflow $f$ from $\zeta_{\varphi}$ by Algorithm 1; stop.
	\item[Step 3:] Choose $Y \in \{X_{\cal F}, X_{\cal I}\}$ with 
	$g_{N, {\cal E}}((p,r)^Y) = \min \{ g_{N, {\cal E}}((p,r)^{X_{\cal F}}), g_{N,{\cal E}}((p,r)^{X_{\cal I}}) \}$. Let~$(p,r) \leftarrow (p,r)^Y$, and go to step 1.
\end{description}

\begin{Thm}\label{thm:bound2}
The dual descent algorithm runs in $O( d(\mGamma_0) {\rm MSF} (n,m,1))$ time.
\end{Thm}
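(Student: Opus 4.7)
The plan is to bound separately the cost per iteration and the number of iterations, and to verify that the dual descent algorithm is an instance of SDA so that Theorem~\ref{thm:bound} applies.

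First I would analyze one iteration of Step~1--Step~3. The network $\tilde{\mathcal{D}}_{p,r}$ has $O(n)$ nodes and $O(m)$ edges, and by construction (\ref{eqn:our_submo}) the submodular function $\rho$ is a disjoint sum of copies of $\mDelta^*_{c(i)}$, each supported on a $6$-element set. Hence the base polyhedron ${\cal B}(\rho)$ splits as a product and any exchange capacity can be read off from a constant-size table, i.e., computed in $O(1)$ time. Therefore Step~1 costs ${\rm MSF}(n,m,1)$. Step~2 invokes Algorithm~1 only at termination, and by Lemma~\ref{lem:mn} this costs $O(nm)$, which is absorbed into a single ${\rm MSF}$-call. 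Step~3 requires at most $O(n)$ work to form $(p,r)^{X_{\cal F}}$ and $(p,r)^{X_{\cal I}}$ and to compare their $g_{N,{\cal E}}$-values via (\ref{eqn:X}). Thus each iteration runs in $O({\rm MSF}(n,m,1))$ time.

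Next I would argue that the dual descent algorithm is exactly SDA (Algorithm~2) applied to $g_{N,{\cal E}}$. Proposition~\ref{prop:g_is_L-convex} says $g_{N,{\cal E}}$ is L-convex, and Proposition~\ref{prop:steepest} says the normal minimum cut $X$ yields minimizers of $g_{N,{\cal E}}$ on ${\cal F}_{p,r}$ and ${\cal I}_{p,r}$; so the point chosen in Step~3 is a minimizer over the entire neighborhood ${\cal F}_{p,r} \cup {\cal I}_{p,r}$. The termination test $X = \{a^+\}$ in Step~2 coincides with $g_{N,{\cal E}}((p,r)^{X_{\cal F}}) = g_{N,{\cal E}}((p,r)^{X_{\cal I}}) = g_{N,{\cal E}}(p,r)$, which is exactly the SDA stopping criterion. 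Hence Theorem~\ref{thm:bound} bounds the number of iterations by $D_\infty(x^0, \opt(g_{N,{\cal E}})) + 2$.

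Finally I would bound this $l_\infty$-distance using Lemma~\ref{lem:exists}. The initial potential has $p^0(i) = v \in \mGamma_0$ and $r^0(i) = d(\mGamma_0)$ for every nonterminal $i$, while Lemma~\ref{lem:exists} furnishes an optimal $(p^*,r^*)$ with $p^*(i) \in \mGamma_0$ and $0 \leq r^*(i) \leq d(\mGamma_0)$. From (\ref{eqn:D_inf}) we then get $d(p^0(i),p^*(i)) \leq d(\mGamma_0)$ and $|r^0(i) - r^*(i)| \leq d(\mGamma_0)$, so $D_\infty(x^0,\opt(g_{N,{\cal E}})) \leq 2 d(\mGamma_0)$. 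Combined with the per-iteration bound, this gives total running time $O(d(\mGamma_0) {\rm MSF}(n,m,1))$, as claimed.

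The main obstacle I expect is pinning down that the Step~3 update really is a steepest step of SDA: one must check that $(p,r)^{X_{\cal F}}$ and $(p,r)^{X_{\cal I}}$ honor the $\mGamma^* \boxtimes \ZZ^*$ structure (in particular the parity condition relating $p$ and $r$) and that the minimum cut produced by the submodular flow subroutine can be rendered normal at no extra asymptotic cost (Lemma~\ref{lem:normal_mincut}). Once those are in place the estimate above is immediate.
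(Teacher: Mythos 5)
Your proposal is correct and follows essentially the same route as the paper: recognize the dual descent algorithm as SDA applied to the L-convex function $g_{N,{\cal E}}$ (via Propositions~\ref{prop:g_is_L-convex} and \ref{prop:steepest}), bound the iteration count by Theorem~\ref{thm:bound} together with Lemma~\ref{lem:exists}, and bound the per-iteration cost by noting that $\rho$ is a disjoint sum of $\mDelta^*_i$ over $6$-element sets so exchange capacities are $O(1)$ and $\tilde{\cal D}_{p,r}$ has $O(n)$ nodes and $O(m)$ edges. Your explicit estimate $D_\infty(x^0,\opt(g)) \le 2d(\mGamma_0)$ is in fact a cleaner justification than the paper's terse claim that the iteration count is bounded by $d(\mGamma_0)$ (both give $O(d(\mGamma_0))$).
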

\begin{proof}
	By Proposition~\ref{prop:steepest}, 
	$(p,r)^{Y}$ is a steepest direction of $g_{N, {\cal E}}$ at $(p,r)$.
	Hence the dual descent algorithm is viewed as the steepest descent algorithm applied to $g_{N,{\cal E}}$ that is L-convex (Proposition~\ref{prop:g_is_L-convex}).
	By Lemma~\ref{lem:exists} and Theorem~\ref{thm:bound}, 
	the number of iterations is bounded by~$2 d(\mGamma_0)$.
	Our submodular function $\rho$ is a disjoint sum of submodular functions $\mDelta^*_{i}$ 
	for zero singular nodes $i$ (see (\ref{eqn:our_submo})).
	So the exchange capacity $\kappa(\cdot;u,v)$ for a pair of $u$ and $v$ 
	can take positive values if $u,v$ belong to $U_i$ for some zero singular node $i$, 
	and is equal to the exchange capacity for submodular function $\mDelta^*_{i}$ 
	on a 6-element set~$U_i$.
	 Hence this can be computed in constant time.
	 The number of nodes of $\tilde {\cal D}_{p,r}$ is at most $6n + 2$, 
	 and the number of edges is at most $2 m + 8 n$.
	 Thus step 2 is done in $O({\rm MSF}(n,m,1))$ time.
\end{proof}

\paragraph{Proof of the main result (Theorem~\ref{thm:main}).}

As in Section~\ref{sec:multiflow}, 
construct a nondegenerate instance $(\tilde N, \tilde {\cal E})$.
Then $d(\mGamma_0) \leq O(m \log k)$.
Apply the dual descent algorithm for  $(\tilde N, \tilde {\cal E})$.
By Theorem~\ref{thm:bound2}, 
we obtain an optimal potential $(\tilde p,\tilde r)$ 
and an optimal multiflow $f$ in $O((m \log k) {\rm MSF}(n,m,1))$ time.
As we have shown in Section~\ref{sec:multiflow}, $f$ is also a maximum multiflow,
and an optimal potential $(p,r)$ for the original instance is obtained from $(\tilde p, \tilde r)$ 
(in $O(n m \log k)$ time).
Then $r$ is a half-integral optimal solution of LP-dual (\ref{eqn:LP-dual}).
Indeed, $\sum_{i \in V \setminus S} c(i) r(i)$ is equal to the maximum flow-value.
The feasibility of $r$ follows from
$\sum_{i \in V(P) \setminus S} 2 r(i) = \sum_{ij \in E(P)} (r(i) + r(j)) 
\geq \sum_{ij \in E(P)} d(p_i,p_j) \geq 2$ for every $S$-path $P$.

\paragraph{Proof of Lemma~\ref{lem:normal_mincut}.}
	Let $X$ be the unique minimal minimum $(a^+,a^-)$-cut (of finite cut capacity).
	By Lemma~\ref{lem:minimal_minimum}, $X$ is a transversal, implying (c1).
	Also $X$ cannot meet $\{s^+,s^-\}$ for any terminal $s$.
	Otherwise, $X$ contains $s^-$ (and does not contain $s^+$). 
	However the deletion of $s^-$ from $X$ does not increase the cut capacity,
	contradicting the minimality.

	Consider a zero flat node $i$.
	Suppose that $i_k^- \in X$. 
	If $i_{k'}^+ \not \in X$ $(k' \neq k)$, then the deletion of $i_k^-$ from $X$ 
	does not increase the capacity, contradicting the minimality.
	Thus we have (c3).

	Consider a zero singular node $i$. Suppose that $X \cap U_i \neq \emptyset$.
	If $X \cap U_i^- = \{i_k^-\}$, 
	then $X \cap U_i$ is of type 1, 3, or 5, and 
	the change $X \to X \setminus \{i_k^-\}$
	preserves the type at $U_i$ and the cut capacity, contradicting the minimality.
	Similarly, if $X \cap U_i \subseteq U_i^-$, then the change $X \to X \setminus U_i^-$
	preserves the cut capacity.
	Thus $X \cap U_i \not \subseteq U_i^-$.
	So suppose that $U_i^- \cap X = \emptyset$ or $U_i^- \setminus \{i_k^-\}$.
	If $U_i^- \cap X  = U_i^- \setminus \{i_k^-\}$, then necessarily $U_i^+ = \{i_k^+\}$.
	Suppose that $U_i^- \cap X = \emptyset$.
	If $|X \cap U_i^+| \geq 2$ (type 1), then the change $X \to X \cup U_i^+$ (Procedure (B))
	does not increase the cut capacity (and keeps $X$ being a transversal). 
	In particular, the resulting $X \cap U_i$ is one of the patterns in (c3).

	Consider a positive singular node $i$.
	First we show that $|X \cap \{i_1^-, i^-_2, i_3^-\}| \neq 1$. 
	Suppose to the contrary that $X \cap \{i_1^-, i^-_2, i_3^-\} =  \{i_1^-\}$.
	The change $X \to X \setminus \{i_1^-, i_0^-\}$ preserves the capacity, contradicting the minimality.
    Also $X  \cap U_i^- = \{i_0^-\}$ is impossible, and 
    $|X \cap \{i_1^-, i^-_2, i_3^-\}| \geq 2$ implies $i_0^- \in X$.
	Thus the pattern of $X \cap U^-_i$ is one given in (c2).
	Next consider $X \cap U_i^+$. 
	If $X$ contains two nodes in $\{i^+_1,i^+_2, i^+_3\}$, and necessarily $X \cap U_i^- = \emptyset$; 
	the change $X \to X \cup U_i^+$ (Procedure (B))
	keeps $X$ being a transversal, and does not increase the cut capacity.
	If $X$ contains $i_0^+$, then $X$ contains at least two nodes in 
	$\{i_1^+,i_2^+,i_3^+\}$, reduced to the case above. 
	Thus, after Procedure~(B), the resulting cut is a normal minimum cut, as required.

\paragraph{Proof of Proposition~\ref{prop:steepest}.}
In $\mGamma^* \boxtimes \ZZ^*$,
vertices in $B$ are colored black, and vertices in $W$ are colored white.
Other vertices have no color.
For $(p,r) \in \mGamma^* \boxtimes \ZZ^*$
and $p' \in \mGamma^*$ with $p \neq p'$,
let $(p,r)_{\searrow p'} := (u, r - 1/2)$ for the unique neighbor $u$ of $p$ 
with $d(p,p') = d(u,p') + 1/2$. In addition, if $(p,r) \not \in B \cup W$, 
then let $(p,r)_{\swarrow p'} := (u, r - 1/2)$ for the (unique) neighbor $u$ of $p$ 
with $d(p,p') = d(u,p') - 1/2$.
\begin{Lem}\label{lem:parity}
	For $(p,r), (p',r') \in \mGamma^* \boxtimes \ZZ^*$ with $d(p,p') \geq 1$, 
	$d(p,p') - r -  r'$ is even if and only if  one of following pairs has the same color:
	\[
	((p,r), (p',r')), \ ((p,r)_{\searrow p'}, (p',r')),\  ((p,r), (p',r')_{\searrow p}),\ ((p,r)_{\searrow p'}, (p',r')_{\searrow p}).
	\]
\end{Lem}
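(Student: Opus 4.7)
The plan is to introduce the quantity $\mDelta(x,x') := d(p,p') - r - r'$ for $x = (p,r)$ and $x' = (p',r')$, and to prove the lemma in two stages: first show $\mDelta$ is invariant under the $\searrow$ operations used in the statement, then handle the reduced case where both endpoints are ``colored'' (i.e., lie in $B \cup W$) by exploiting the tree-parity identity in $\mGamma$.

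First I would observe the basic parity facts: for any $(p,r) \in \mGamma^* \boxtimes \ZZ^*$, the pair is colored iff $p \in \mGamma$ (equivalently $r \in \ZZ$), and the operation $(p,r) \mapsto (p,r)_{\searrow p'}$ moves $p$ one edge of $\mGamma^*$ toward $p'$ and subtracts $1/2$ from $r$, thereby toggling whether the point lies in $\mGamma$ and hence toggling whether it is colored. Using $p \in \mGamma \iff r \in \ZZ$ together with the fact that $d(p,p') \in \ZZ$ iff $p$ and $p'$ are both in $\mGamma$ or both outside, a short case check shows $\mDelta(x,x') \in \ZZ$ in every case. This also shows that for any combination of ``colored/uncolored'' status of $x$ and $x'$, exactly one of the four listed pairs consists of two colored vertices — we apply $\searrow$ to whichever of the two points is not colored.

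Next I would verify invariance:
$$
d((p,r)_{\searrow p'}, p') - (r - 1/2) - r' = \bigl(d(p,p') - 1/2\bigr) - (r - 1/2) - r' = d(p,p') - r - r',
$$
and likewise on the other coordinate, so $\mDelta$ is unchanged by applying $\searrow$ to either or both endpoints (the slight care here is that when both $\searrow$'s are applied, one must check that the two neighbors chosen lie on the $p$--$p'$ geodesic and therefore $d$ drops by exactly $1$, matching the drop $1/2 + 1/2$ in $r + r'$). Together with the toggling observation above, this reduces the whole lemma to the case where both $x, x'$ are already colored.

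For the all-colored case $p,p' \in \mGamma$, $r,r' \in \ZZ$, the colors are determined by the parities of $d(p,p_0) + r$ and $d(p',p_0) + r'$, so same color is equivalent to
$$
d(p,p_0) + d(p',p_0) \equiv r + r' \pmod 2.
$$
The key ingredient is the elementary identity in a tree $d(p,p') + d(p,p_0) + d(p',p_0) \equiv 0 \pmod 2$ (since this sum equals twice the sum of distances from $p,p',p_0$ to their median), so the displayed condition is equivalent to $d(p,p') - r - r' \equiv 0 \pmod 2$, i.e., $\mDelta(x,x') \equiv 0$. Combined with the invariance established above, this yields the lemma.

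I expect the main obstacle to be a careful case analysis in the invariance step when one or both points lie in $\mGamma^* \setminus \mGamma$: one has to confirm that the $\searrow$'s are simultaneously well-defined (which requires $d(p,p') \geq 1$, as assumed), that they in fact advance along the unique $\mGamma^*$-geodesic, and that the quantity $d(p,p') - r - r'$ really is an integer so that its parity is a meaningful statement. Once these routine bookkeeping items are settled, the tree-parity identity makes the concluding step almost immediate.
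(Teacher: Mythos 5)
Your proposal is correct and follows essentially the same two-step argument as the paper: first establish the parity equivalence for colored endpoints via the tree identity $d(p,p') \equiv d(p,p_0) + d(p',p_0) \pmod 2$, then observe that $d(p,p') - r - r'$ is invariant under the $\searrow$ operations and that these operations toggle the colored/uncolored status, reducing to the colored case. You supply somewhat more explicit justification (integrality of the quantity, the fact that exactly one of the four pairs is all-colored, the median argument), but the skeleton matches the paper's proof.
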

\begin{proof}
	For $(p,r),(p',r') \in B \cup W$, observe that $d(p,p') - r - r' \equiv d(p,p_0) + r + d(p',p_0) + r' \mod 2$.
	Hence  $d(p,p') - r - r'$ 
	is even if and only if $(p,r)$ and $(p',r')$ 
	have the same color.
	Also observe that
   $d(p,p') - r -  r'$ does not change when  $(p,r)$ is replaced by $(p,r)_{\searrow p'}$. 
	The claim follows from these facts.
\end{proof}

\begin{Lem}\label{lem:noedge}
	There is no edge between $U_{\cal F}$ and $U_{\cal I}$.
	In particular, any normal cut $X$ is decomposed into 
	${\cal F}$-normal cut $X_{\cal F} := X \cap U_{\cal F}$ and 
	${\cal I}$-normal cut $X_{\cal I} := X \cap U_{\cal I}$ such that
	\[
	\tilde c(\delta X) - \tilde c(\delta \{a^+\}) = 
	 \tilde c(\delta X_{\cal F})  - \tilde c(\delta \{a^+\})
	+  \tilde c(\delta X_{\cal I})  - \tilde c(\delta \{a^+\}). 
	\]
\end{Lem}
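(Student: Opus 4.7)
The plan is to first check that no edge of $\tilde{\cal D}_{p,r}$ runs between $U_{\cal F}$ and $U_{\cal I}$, and then derive the cut-capacity identity as the equality case of the submodular inequality for the cut function. For the edges internal to some $U_i$, I would argue by cases on the type of $i$: if $i$ is flat with $p(i) \in \mGamma$, or zero singular, then $(p(i), r(i)) \in B \cup W$, so all elements $i_k^\pm$ of $U_i$ share one color and lie in a single one of $U_{\cal F}, U_{\cal I}$. If $i$ is positive flat with $p(i) \in \mGamma^* \setminus \mGamma$, then the internal edges $i_1^+ i_2^-$ and $i_2^+ i_1^-$ have positive lower capacity and are re-routed in $\tilde{\cal D}_{p,r}$ to $a^\pm$. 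If $i$ is positive singular, the internal edges $i_k^+ i_0^+$ and $i_0^- i_k^-$ touch $i_0^\pm$, which lies outside $U_{\cal F} \cup U_{\cal I}$, while $i_0^+ i_0^-$ is re-routed to $a^\pm$. The remaining internal edges touch terminals, which are also outside $U_{\cal F} \cup U_{\cal I}$. In every case, no internal edge crosses.

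The substantive part concerns the external edges $e^+ = i_k^- j_{k'}^+$ (and $e^- = j_{k'}^- i_k^+$) arising from $ij \in E_{p,r}$. Both $i$ and $j$ must be nonterminal (else an $s^\pm$ appears as an endpoint), and the tight relation $d(p(i), p(j)) - r(i) - r(j) = a(ij)$ holds. Since $a(ij)$ is a positive even integer by nondegeneracy, $d(p(i), p(j)) - r(i) - r(j)$ is even, so Lemma~\ref{lem:parity} furnishes a pair of ``colored relevant vertices'' at the two ends with matching color: when $(p(i), r(i))$ already lies in $B \cup W$ the relevant vertex is $(p(i), r(i))$ itself, otherwise it is the $\searrow$-neighbor of $(p(i), r(i))$ in the direction of $p(j)$, namely $(p(i)_{\to^* k}, r(i) - 1/2)$ with $k$ chosen so that $ij \in \delta_{p,k}(i)$. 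But this is precisely the vertex whose color decides whether $i_k^\pm$ lies in $U_{\cal I}$ or in $U_{\cal F}$, so matching colors on both sides place $i_k^\pm$ and $j_{k'}^\pm$ in the same set.

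Having established the first part, I would turn to the capacity identity. Extend $X_{\cal F}$ and $X_{\cal I}$ to honest $(a^+, a^-)$-cuts $\hat X_{\cal F}$ and $\hat X_{\cal I}$ by adjoining $\{a^+\}$ and, at each positive singular node $i$ with $X \cap \{i_0^+, i_0^-\} \neq \emptyset$, placing the $i_0^\pm$ present in $X$ on the same side as the remainder of $X \cap U_i$ (normality of $X$ forces this remainder to lie entirely on one side). Then $\hat X_{\cal F} \cup \hat X_{\cal I} = X$ and $\hat X_{\cal F} \cap \hat X_{\cal I} = \{a^+\}$, and the first part precludes any edge between $\hat X_{\cal F} \setminus \{a^+\}$ and $\hat X_{\cal I} \setminus \{a^+\}$. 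Hence the submodular inequality for the cut capacity is tight, $\tilde c(\delta \hat X_{\cal F}) + \tilde c(\delta \hat X_{\cal I}) = \tilde c(\delta X) + \tilde c(\delta \{a^+\})$, and rearranging yields the claimed identity. The principal obstacle is the external-edge parity analysis; the bookkeeping to go from the no-edge property to the capacity decomposition is routine.
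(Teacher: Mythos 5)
Your proof is correct and follows essentially the same line as the paper's: the external-edge case is handled by exactly the same application of Lemma~\ref{lem:parity} (matching the color of the ``relevant'' vertex at each end, which is $(p(i),r(i))$ itself when colored and $(p(i)_{\to^* k}, r(i)-1/2)$ otherwise). Where you go further is the capacity identity: the paper states the decomposition without comment, whereas you spell out that $X_{\cal F}$ and $X_{\cal I}$ should be read as $(a^+,a^-)$-cuts $\hat X_{\cal F},\hat X_{\cal I}$ (adjoining $a^+$ and, for positive singular $i$ with $i_0^\pm\in X$, those $i_0^\pm$ on the side carrying the remainder of $X\cap U_i$), and then invoke the tight case of the cut-function modularity; this is the correct way to make the paper's ``in particular'' rigorous, and your observation that every singular node has colored $(p(i),r(i))$ (so $\{i_1^\pm,i_2^\pm,i_3^\pm\}$ falls entirely on one side) is exactly what justifies placing $i_0^\pm$ consistently. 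The only slight gap in both your argument and the paper's: a positive flat node $i$ with $p(i)\in\mGamma^*\setminus\mGamma$ and $c(i)=0$ would leave the internal edges $i_1^+i_2^-, i_2^+i_1^-$ un-rerouted while the two sides have opposite colors; this is harmless because those edges then have zero capacity, but it means the first sentence of the lemma should be read as ``no edge of nonzero capacity.''
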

\begin{proof}
	Edges of form $i^+_ki^-_{k'}$, $i^+_ki^+_{0}$, $i^-_0 i_k^-$ in $\tilde {\cal D}_{p,r}$
	belongs to $U_i$ for node $i$ of colored $(p_i, r_i)$, 
	and hence belongs to $U_{\cal F}$ or $U_{\cal I}$.
	So consider an edge $i_k^-j^+_{k'}$ for distinct $i, j$.
    Then $d(p_i,p_j) > 1$ (since $a_{ij} \geq 2$) and $d(p_i, p_j) - r_i -r_j$ is even (since $a_{ij}$ is even).
    By the previous lemma,  if both $(p_i,r_i)$ and $(p_j,r_j)$ are colored,
    then they have the same color.
    Suppose that $(p_i, r_i)$ has no color.
    If $(p_j, r_j)$ has a color, then $(p_i, r_i)_{\searrow p_j} = (p_{i \to^* k}, r_i -1/2)$ has the same color.
    If $(p_j, r_j)$ has no color, then 
    $(p_i, r_i)_{\searrow p_j} =  (p_{i \to^* k}, r_i -1/2)$ 
    and $(p_j, r_j)_{\searrow p_i} =  (p_{j \to^* k'}, r_j -1/2)$ have the same color.
     Consequently $i_k^-, j^+_{k'} \in U_{\cal F}$ or $i_k^-, j^+_{k'} \in U_{\cal I}$ for all cases.
\end{proof}

Let ${\cal F}^+_{p,r}$ (resp. ${\cal I}^+_{p,r}$) denote the subset of  
${\cal F}_{p,r}$ (resp. ${\cal I}_{p,r}$) consisting of $(p',r')$ with $r'_i \geq 0$ for $i=1,2,\ldots,n$.
Proposition~\ref{prop:steepest} follows from the above Lemma~\ref{lem:noedge} 
and the following.
\begin{Lem}
	\begin{itemize}
		\item[{\rm (1)}]	
		The map $X \to (p,r)^X$ is a bijection
		between ${\cal F}^+_{p,r}$ (resp. ${\cal I}^+_{p,r}$) 
		and the set of all ${\cal F}$-normal cuts  (resp. ${\cal I}$-normal cuts).
		\item[{\rm (2)}]
		For an ${\cal F}$-normal or ${\cal I}$-normal cut $X$, it holds
		\begin{equation}\label{eqn:g_NE(p,r)^X}
		g_{N,{\cal E}}((p,r)^X) - g_{N,{\cal E}}(p,r) = \tilde c( \delta X) + \rho(X \setminus \{a^+\})-   \tilde c(\delta \{a^+\}).
		\end{equation}
	\end{itemize}
\end{Lem}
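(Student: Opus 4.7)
My plan is to prove both parts by a node-wise analysis. For (1), I will check at each nonterminal node $i$ that the normal patterns of $X \cap U_i$ admitted by (c2)--(c3) are in bijective correspondence with the local updates listed in formula~(\ref{eqn:X}), with the directional variants indexed by $k$ (two directions for flat, three for singular). The additional restriction in (c3) at zero nodes (excluding the updates that would produce $r_i^X<0$) matches exactly the non-negativity built into ${\cal F}_{p,r}^+$ and ${\cal I}_{p,r}^+$. The partition of these updates into ${\cal F}$- and ${\cal I}$-type follows from the coloring convention in the definitions of $U_{\cal F}$ and $U_{\cal I}$, together with Lemma~\ref{lem:noedge}, which shows that every normal cut decomposes as $X = X_{\cal F} \sqcup X_{\cal I}$ with the two parts being independently normal of their respective types.

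The global ingredient for (1) is to match the finite-capacity condition on $X$ with the feasibility constraint (p2) for $(p,r)^X$. For each edge $ij \in E_{p,r}$ giving rise to the infinite-capacity edges $i_k^- j_{k'}^+$ and $j_{k'}^- i_k^+$ in $\tilde{\cal D}_{p,r}$, the implications $i_k^- \in X \Rightarrow j_{k'}^+ \in X$ and $j_{k'}^- \in X \Rightarrow i_k^+ \in X$ translate, via the case analysis of~(\ref{eqn:X}), into the inequality $d(p_i^X, p_j^X) - r_i^X - r_j^X \leq a_{ij}$. The key technical input here is Lemma~\ref{lem:parity}, applied to the identity $d(p_i, p_j) - r_i - r_j = a_{ij}$ (an even integer since edge-costs are even), which controls the color parities of the relevant endpoints.

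For (2), I will expand $\tilde c(\delta X) - \tilde c(\delta \{a^+\}) + \rho(X \setminus \{a^+\})$ as a sum over $i \in V \setminus S$ of local contributions and verify that each one equals $2 c_i (r_i^X - r_i)$. Inter-node edges contribute $0$ since they are infinite-capacity and $X$ has finite cut capacity; the remaining contributions come from the $a^\pm$-incident edges at positive flat and positive singular nodes, from the internal edges $i_k^+ i_0^+$ and $i_0^- i_k^-$ at positive singular nodes, and from $\mDelta_i^*(X \cap U_i)$ given in~(\ref{eqn:delta*}) at zero singular nodes. I will tabulate the four node types (positive/zero $\times$ flat/singular) against the patterns of $X \cap U_i$; a representative check is that at a positive flat node with $X \cap U_i = \{i_k^+\}$ the capacity change is $+c_i$, matching $2 c_i \cdot \frac{1}{2}$. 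The main obstacle is simply the bulk of this case analysis, but no new ideas beyond (c1)--(c3), the values~(\ref{eqn:delta*}), and the edge structure of $\tilde{\cal D}_{p,r}$ are required.
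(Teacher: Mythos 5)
Your proposal tracks the paper's proof closely and contains all the right ingredients: the local, per-node bijection between the patterns of $X\cap U_i$ permitted by (c2)--(c3) and the six alternatives in (\ref{eqn:X}), the finiteness/feasibility equivalence driven by Lemma~\ref{lem:parity}, and the node-type-by-node-type tabulation verifying $2c_i(r_i^X-r_i)$ against the local capacity change. Your representative check at a positive flat node is correct.

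One organizational point is slightly off. Part (1) is a purely combinatorial statement: ${\cal F}^+_{p,r}$ is by definition the whole neighborhood with $r'\ge 0$, without any requirement that $(p',r')$ satisfy (p2), and likewise an ${\cal F}$-normal cut need not have finite capacity. So the bijection is a product of independent local correspondences and needs no ``global ingredient'' at all; the finiteness/feasibility equivalence you invoke under (1) is not required there. Where it is required is in part (2), because equation (\ref{eqn:g_NE(p,r)^X}) must also hold when both sides are $+\infty$, so before you can restrict the tabulation to the finite-capacity case (as your plan implicitly does when you say inter-node edges contribute $0$), you must first show that $\tilde c(\delta X)<\infty$ if and only if $(p,r)^X$ satisfies (p2). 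Your Lemma~\ref{lem:parity} argument is exactly the right tool for this, and is precisely what the paper does in its proof of (2) -- it is just placed under the wrong heading in your outline. Similarly, Lemma~\ref{lem:noedge} is used for the surrounding Proposition~\ref{prop:steepest}, not for the bijection in (1). Neither misattribution affects the validity of the argument if you carry the content through.
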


\begin{proof}
	(1). We claim that ${\cal X}_i := \{X \cap U_i \mid \mbox{$X$ is an ${\cal F}$-normal cut} \}$ 
	and ${\cal F}_{p_i,r_i}^+$ are in one-to-one correspondence by~(\ref{eqn:X}).
	Suppose that $(p_i,r_i) \in W$, which implies $U_i \subseteq U_{\cal I}$.
	Then ${\cal F}^+_{p_i,r_i} = {\cal F}_{p_i,r_i} = \{(p_i,r_i)\}$ and 
	${\cal X}_i = \{ \emptyset \}$.
	Thus the claim is true.
	Suppose that $(p_i,r_i) \in B$, which implies $U_i \subseteq U_{\cal F}$.
	The claim can be seen from Figure~\ref{fig:neighborhood}.
	Suppose that $(p_i,r_i) \not \in B \cup W$; necessarily $i$ is positive.
	We can assume that $(p_{i \to^* 1}, r_i + 1/2) \in B$ (and $(p_{i \to^* 1}, r_i - 1/2) \in W$).
	Then ${\cal F}^+_{p_i,r_i} = \{ (p_i,r_i), (p_{i \to^* 1}, r_i -1/2), (p_{i \to^* 2}, r_i + 1/2) \}$.
	Since $i_1^+,i_1^- \in U_{\cal I}$ and $i_2^+,i_2^- \in U_{\cal F}$, 
	it holds that ${\cal X}_i = \{ \emptyset, \{i_2^-\}, \{i_2^+\}\}$.
	Thus we have the claim, implying the statement (1).

	(2). Let $(p',r') := (p,r)^X$.
	We first show that $g(p',r') < \infty$ if and only if $\tilde c(\delta X)$ is finite.
	Suppose that $g(p',r') < \infty$.
	Pick $i_k^- \in X$ and edge $i_k^-j_{k'}^+$ (of infinite capacity).
	We show $j_{k'}^+ \in X$.
	Recall that $d(p_i, p_j) - r_i - r_j - a_{ij} = 0$ and $a_{ij} \geq 2$.
	Hence $d(p_i,p_j) \geq 2$.
	Since $i_k^- \in X$, the change $(p_i,r_i) \to (p'_i,r'_i)$ increases
	$d(p_i, p_j) - r_i - r_j$ by one. 
	Necessarily the change $(p_j,r_j) \to (p'_j,r'_j)$ 
	 must decrease $d(p'_i, p_j) - r'_i - r_j$ by one.
	This means that 
	$(p'_j,r'_j) = (p_j, r_j+1)$, $(p_{j \to {k'}},r_j)$ or $(p_{j \to^*{k'}},r_j+1/2)$
	must hold.
	For all cases, $X$ contains~$j_{k'}^+$.

	Suppose that $\tilde c(\delta X)$ is finite.
	We show that $g(p',r') < \infty$. Here $r' \geq 0$ is clear.
	We need to show that $d(p'_i, p'_j)- r'_i - r'_j - a_{ij} \leq 0$ for each edge $ij \in E$.
	Suppose that $d(p_i, p_j)- r_i - r_j - a_{ij} = 0$. 
	Namely $ij \in E_{p,r}$ and $d(p_i, p_j) \geq 2$.
	In this case, the argument is the same as the above. Indeed,
	suppose that the change $(p_i,r_i) \rightarrow (p'_i,r'_i)$
	increases $d(p_i, p_j)- r_i - r_j$ by one.
	Then $X$ contains $i^-_k$ with $p_j \in \mGamma^*_{p_i, k}$, 
	and hence contains $j^+_{k'}$ with $p_i \in \mGamma^*_{p_j,k'}$.
	The change $(p_j,r_j) \rightarrow (p'_j,r'_j)$ decreases 
	$d(p'_i, p_j)- r'_i - r_j$ by one, implying $d(p'_i, p'_j)- r'_i - r'_j - a_{ij} = 0$.
	
	Thus it suffices to show that  
	$d(p'_i, p'_j)- r'_i - r'_j - a_{ij} = 1$ cannot occur.
	Otherwise, $d(p_i, p_j)- r_i - r_j - a_{ij} = - 1$, $d(p_i, p_j)- r_i - r_j$ is odd, 
	and $(p'_i, r'_i) \neq (p_i, r_i)$, $(p'_j, r'_j) \neq (p_j, r_j)$.
	If both $(p_i, r_i)$ and $(p_j, r_j)$ have colors, 
	then these colors are different (by Lemma~\ref{lem:parity}), 
   and $(p'_i, r'_i) = (p_i,r_i)$ or $(p'_j, r'_j) = (p_j,r_j)$ must hold; this is a contradiction.
   Suppose that $(p_i,r_i)$ has no color.
   Then $(p'_i, r'_i) = (p_i, r_i)_{\swarrow p_j}$ must hold.
   If $(p_j, r_j)$ has a color, then 
   this color must equal the color of $(p'_i, r'_i)$ 
   (since $(p_i, r_i)_{\swarrow p_j}$ and $(p_i, r_i)_{\searrow p_j}$ have different colors), 
   and hence we have a contradiction $(p'_j, r'_j) = (p_j, r_j)$.
   If $(p_j, r_j)$ has no color,  then $(p'_j, r'_j) = (p_j, r_j)_{\swarrow p_i}$ must hold; 
   but this is impossible since colors of  $(p_i, r_i)_{\swarrow p_j}$ and 
   $(p_j, r_j)_{\swarrow p_i}$ are different.

	Finally we show the equation~(\ref{eqn:g_NE(p,r)^X}).  The left hand side is equal to $\sum_{i \in V \setminus S} 2 c_i (r'_i - r_i)$ and the right hand side is equal to
	\begin{equation*}
	\sum_{i:{\rm positive}} \{  \tilde c( \tilde A_i \cap \delta X) - 2c(i)\} + \sum_{i: {\rm zero\, flat}} \tilde c( \tilde A_i \cap \delta X) + \sum_{i: {\rm zero\, singular}} \mDelta^*_{i}(X \cap U_i).
	\end{equation*}
    One can verify from the network construction (see Figure~\ref{fig:doublecovering}), 
    definitions of $(p,r)^X$ (see  (\ref{eqn:X})) and $\mDelta^*_c$ (see (\ref{eqn:delta*})) 
    that $2c_i (r'_i - r_i)$ is equal to $\tilde c(\tilde A_i \cap \delta X) - 2c(i)$ if $i$ is positive, 
    $\tilde c(\tilde A_i \cap \delta X)$ if  $i$ is zero flat, 
    and $\mDelta^*_{i}(X \cap U_i)$ if $i$ is zero singular.
	Thus we obtain the equation~(\ref{eqn:g_NE(p,r)^X}).
\end{proof}

\section*{Acknowledgments}
We thank Satoru Iwata  for information on submodular flow,  Satoru Fujishige for helpful comments, Yuni Iwamasa for computation of $\mDelta^*_b$.
and Motoki Ikeda for meticulously reading.
We also thank the referees for helpful comments.
The work was partially supported by JSPS KAKENHI 
Grant Numbers JP25280004, JP26330023, JP26280004, JP17K00029.

\end{document}